\newcommand{\techRep}{true} \newcommand{\iftechrep}{\ifthenelse{\equal{\techRep}{true}}}
\tikzset{LMC style/.style={>=stealth',every edge/.append style={thick},every state/.style={minimum size=10,inner sep=0}}}
\begin{document}

\setlength{\pdfpageheight}{\paperheight}
\setlength{\pdfpagewidth}{\paperwidth}

\conferenceinfo{CONF 'yy}{Month d--d, 20yy, City, ST, Country} 
\copyrightyear{20yy} 
\copyrightdata{978-1-nnnn-nnnn-n/yy/mm} 
\doi{nnnnnnn.nnnnnnn}

\title{Efficient Quantile Computation in Markov Chains via Counting
  Problems for Parikh Images}

\authorinfo{Christoph Haase\thanks{Supported by Labex Digicosme, 
    Univ.\ Paris-Saclay, project VERICONISS.}}
{LSV, CNRS \& ENS Cachan\\Universit\'e Paris-Saclay, France}
{haase@lsv.ens-cachan.fr}

\authorinfo{Stefan Kiefer}{Department of Computer Science\\University of Oxford, UK}
{stekie@cs.ox.ac.uk}

\authorinfo{Markus Lohrey}{Department f\"{u}r Elektrotechnik und Informatik\\
  Universit\"{a}t Siegen, Germany}{lohrey@eti.uni-siegen.de}

\maketitle

\begin{abstract}
A cost Markov chain is a Markov chain whose transitions are labelled with non-negative integer costs. A fundamental problem on this model, with applications in the verification of stochastic systems, is to compute information about the distribution of the total cost accumulated in a run. This includes the probability of large total costs, the median cost, and other quantiles. While expectations can be computed in polynomial time, previous work has demonstrated that the computation of cost quantiles is harder but can be done in PSPACE. In this paper we show that cost quantiles in cost Markov chains can be computed in the counting hierarchy, thus providing evidence that computing those quantiles is likely not PSPACE-hard. We obtain this result by exhibiting a tight link to a problem in formal language theory: counting the number of words that are both accepted by a given automaton and have a given Parikh image. Motivated by this link, we comprehensively investigate the complexity of the latter problem. Among other techniques, we rely on the so-called BEST theorem for efficiently computing the number of Eulerian circuits in a directed graph.
\end{abstract}

\section{Introduction}\label{sec-introduction}
\makeatletter{}
Markov chains are an established mathematical model that allows for
reasoning about systems whose behaviour is subject to stochastic
uncertainties. A Markov chain comprises a set of states with a
transition function that assigns to every state a probability
distribution over the set of successor states. A typical problem about a
given Markov chain is computing the
probability with which a designated target state is reached starting
from an initial state. This probability is computable in polynomial
time~\cite{BK08} for explicitly given Markov chains. Polynomial-time
decidability carries over to properties specified in
PCTL~\cite{BK08}, a stochastic extension of the branching-time logic
CTL.
Probabilistic model checkers such as PRISM~\cite{KNP11} and
MRMC~\cite{KZHHJ11} can efficiently reason about such properties on
large Markov chains in practice.

In order to gain flexibility for modelling systems, a
natural generalisation is to extend transitions of Markov chains with
integer weights. Those weights can model the cost that is incurred or the time that elapses
when moving from one state to another. Beyond reachability probabilities,
one can then, for instance, ask for the expected value of the accumulated weight along the paths reaching a target state.
Such an expectation can be computed in polynomial time~\cite{BK08}, 
but it provides little information on the \emph{guaranteed behaviour} of a system.
\tikzset{initial text={}}
\begin{figure}
\begin{center}
  \begin{tikzpicture}[xscale=1.7,yscale=0.85,LMC style]
    \node[state, initial] (a) at (0,0) {$s$};
    \node[state, accepting] (b) at (3,0) {$t$};
    \node[state] (c) at (1.5,1) {$u$};

    \path[->] (a) edge node[below] {$t_1:20'/0.9$} (b);
    \path[->, bend left] (a) edge node[above] {$t_2:15'/0.1$} (c);
    \path[->] (c) edge [loop above,looseness=60] node {$t_2:5'/0.2$} (c);
    \path[->, bend left] (c) edge node[above] {$t_3:10'/0.8$} (b);
  \end{tikzpicture}
  \label{fig:airport-security}
  \caption{Simplified model of an airport security procedure.}
\end{center}
\end{figure}
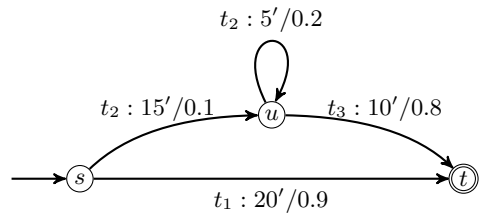
For an example, consider the simplified model of an airport security
procedure illustrated in Figure~\ref{fig:airport-security}.
Starting in state $s$ a passenger is either (with probability 0.9) routed through the standard security check, which takes 20min, or (with probability~0.1) through the extended security check whose first stage takes 15min.
After the first stage, the passenger reaches state~$u$ where she is, with probability 0.2, subject to repeated additional security screenings, each of which takes 5min.
Once completed, it takes a passenger another 10min to complete the extended
security check and to reach the airport gate. It can easily be
verified that the expected value of the time required to reach state
$t$ is $\approx 21.3$min.
Suppose an airport operator wants to find
out if it can guarantee that 99.999\% of its customers clear the
security check and reach the gate within 30min. Knowing the expected
time does not suffice to answer this question. In
fact, a simple calculation shows that only 99.996\% of the customers
complete the security check within 30min.

The quantile problem considered in that example is an instance of
the more general \emph{cost problem}: Given a Markov chain whose
transitions are labelled with non-negative integers and which has a
designated target state $t$ that is almost surely reached (called a
\emph{cost chain} in the following), a probability threshold $\tau$
and a Boolean combination of linear inequalities over one variable
$\varphi(x)$, the cost problem asks whether the accumulated
probabilities of paths achieving a value consistent with $\varphi$
when reaching $t$ is at least $\tau$. It has been shown in~\cite{HK15}
by the first two authors that the cost problem can be decided
in \PSPACE. The starting point of this paper is the question left open
in~\cite{HK15} whether this \PSPACE-upper bound can be improved.

Our first contribution is to answer this question positively:
we show that the cost problem belongs to the counting hierarchy
(\CH). The counting hierarchy is defined similarly to the
polynomial-time hierarchy using counting quantifiers, see
\cite{AllenderW90} or Section~\ref{ssec:complexity} for more details. It is contained in \PSPACE\ and
this inclusion is believed to be strict. In recent years, several
numerical problems, for which only \PSPACE\ upper bounds had been known,
have been shown to be in \CH. Two of the most important
and fundamental problems of this kind are \PosSLP\ and \BitSLP:
\PosSLP\ is the problem whether a given arithmetic circuit over the
operations $+$, $-$ and $\times$ evaluates to a positive number, and
\BitSLP\ asks whether a certain bit of the computed number is equal to
$1$. Note that an arithmetic circuit with $n$ gates can evaluate to a
number in the order of $2^{2^n}$; hence the number of output bits can be
exponential and a certain bit of the output number can be specified
with polynomially many bits. In addition to the \PSPACE\ upper bound,
it has been shown in~\cite{HK15} that the cost problem is hard for
both \PosSLP\ and \PP\ (probabilistic polynomial time).

In order to show that the cost problem belongs to \CH, we identify a
counting problem for certain words accepted by a deterministic
finite-state automaton as the core underlying problem and show its
membership in \CH. Relating to our previous example in
Figure~\ref{fig:airport-security}, observe that any path that reaches
the state $t$ induces a function $\vec{p}$ mapping every transition
$t_i$ to the number of times $t_i$ is traversed along this path. In
particular, given such a function $\vec{p}$ we can easily check
whether it exceeds a certain time budget and compute the probability
of a path with the induced function $\vec{p}$. Thus, viewing a cost
chain as a deterministic finite-state automaton whose edges are
labelled by alphabet symbols $t_i$, this observation gives rise to the
following two counting problems for words that are defined analogously
to \PosSLP\ and \BitSLP: For a finite-state automaton $\A$ over a
finite alphabet $\Sigma$ and a Parikh vector $\vec{p}$ (i.e., a mapping
from $\Sigma$ to $\mathbb{N}$) we denote by $N(\A, \vec{p})$ the
number of words accepted by $\A$ whose Parikh image is $\vec{p}$. Then
\BitP\ is the problem of computing a certain bit of the number $N(\A,
\vec{p})$ for a given finite-state automaton $\A$ and a Parikh vector
$\vec{p}$ encoded in binary. Further, \PIC\ is the problem of checking
whether $N(\A, \vec{p}) > N(\B, \vec{p})$ for two given automata $\A$
and $\B$ (over the same alphabet) and a Parikh vector $\vec{p}$ encoded
in binary. We prove that \BitP\ and \PIC\ both belong to the counting
hierarchy if the input automata are deterministic. The main ingredient
of our proof is the so-called BEST theorem which gives a formula for
the number of Eulerian circuits in a directed graph. From the BEST
theorem we derive a formula for the number $N(\A, \vec{p})$. In
addition, based on techniques introduced
in~\cite{ABKB09,HAMB02}, we develop a toolbox for showing membership of numerical
problems in the counting hierarchy. This  enables us to evaluate the formula for $N(\A, \vec{p})$ in
the counting hierarchy. We then reduce the cost problem for Markov
chains to the problem of comparing two numbers, one of which involves
$N(\A, \vec{p})$ for a certain deterministic finite state automaton
$\A$ and Parikh vector $\vec{p}$ encoded in binary. Since single bits
of this number can be computed in \CH, we finally obtain our main
result: The cost problem belongs to \CH. To the best of our knowledge,
the cost problem and \PIC\ (for DFA with Parikh vectors encoded in
binary) are the only known natural problems, besides \BitSLP, which are (i) \textsc{PosSLP}-hard, (ii)
\PP-hard, and (iii) belong to the counting hierarchy. In particular,
whilst being decidable in \CH, both problems seem to be harder than
\textsc{PosSLP}: for the latter problem, no nontrivial lower bound is
known.

\begin{table*}[t]
  \begin{center}
    \renewcommand{\arraystretch}{1}
    \begin{tabular}{|c|c||c|c|c|}
      \hline
      \centering Parikh vector encoding & size of $\Sigma$ & DFA & NFA & CFG\\
      \hline
      \centering \multirow{3}{*}{unary} & unary & in $\LOGSPACE$ (\ref{prop:dfa-unary-upper}) & \NL-complete~(\ref{prop:dfa-unary-upper}) & \P-complete~(\ref{prop:dfa-unary-upper}) \\
      \cline{2-5}
       & fixed & \PL-complete (\ref{prop:dfa-results}) & 
      \multicolumn{2}{c|}{} \\
      \cline{2-3}
      & variable & \multicolumn{3}{c|}{$\PP$-complete (\ref{prop:dfa-results}, \ref{prop:nfa-cfg-variable-upper}, \ref{prop:nfa-cfg-fixed-lower})}  \\
      \hline
      \hline
      \centering \multirow{3}{*}{binary} & unary & in $\LOGSPACE$ (\ref{prop:dfa-unary-upper}) & \NL-complete (\ref{prop:dfa-unary-upper}) & \DP-complete~(\ref{prop:dfa-unary-upper})\\
      \cline{2-5}
       & fixed & PosMatPow-hard, in $\CH$ 
      (\ref{prop:dfa-results}, \ref{thm-parikh-binary-dfa}) & 
      \multirow{2}{*}{\PSPACE-complete (\ref{prop:nfa-cfg-variable-upper},\ref{prop:nfa-cfg-fixed-lower})} & 
      \multirow{2}{*}{$\PEXP$-complete (\ref{prop:nfa-cfg-variable-upper}, \ref{prop:nfa-cfg-fixed-lower})}\\
      \cline{2-3} 
      & variable & \textsc{PosSLP}-hard~\cite{HK15}, in 
      $\CH$ (\ref{thm-parikh-binary-dfa}) & & \\
      \hline
    \end{tabular}
    \label{tab:main-results}
    \caption{The complexity landscape of \PIC. References to
      propositions proving the stated complexity bounds are in parentheses.}
  \end{center}
\end{table*}

As discussed above, the cost problem is closely related to \BitP\ and
\PIC\ (for DFA and Parikh vectors encoded in binary). In fact, the
\PosSLP\ lower bound for the cost problem is strongly based on the
\PosSLP-hardness of \PIC~\cite[Prop.~5]{HK15}. This tight relationship
between the two classes of problems is our main motivation for
studying in the second part of this paper the complexity of \BitP\ and
\PIC\ also for other variants: Instead of a DFA, one can specify the
language by an NFA or even a context-free grammar (CFG).
Indeed, Kopczy{\'n}ski~\cite{Kop15} recently asked about the complexity of computing the number of words with a given Parikh image accepted by a CFG. Our results on the complexity of \PIC\ are collected in Table~\ref{tab:main-results} and similar results also hold for \BitP.
As the table shows, the complexity may depend on the encoding of the numbers in the Parikh vector
(unary or binary) and the size of the alphabet (unary alphabet, fixed
alphabet or a variable alphabet that is part of the input).

Perhaps remarkably, we show that \PIC\ for DFA over a two-letter alphabet and Parikh vectors
encoded in binary is hard for \PosMatPow. The latter problem was recently introduced by
Galby, Ouaknine and Worrell~\cite{GOW15-MatrixPowering} and asks,
given a square integer matrix $M \in \Z^{m \times m}$, a linear
function $f\colon \Z^{m \times m} \to \Z$ with integer coefficients, and a
positive integer~$n$, whether $f(M^n) \ge 0$, where all numbers in
$M$, $f$ and~$n$ are encoded in binary. Note that the entries of $M^n$
are generally of size exponential in the size of~$n$.  It is shown
in~\cite{GOW15-MatrixPowering} that \PosMatPow\ can be decided in
polynomial time for fixed dimension $m=2$.  The same holds for $m=3$
provided that $M$ is given in unary~\cite{GOW15-MatrixPowering}.  The
general \PosMatPow\ problem is in~\CH; in fact, it is is reducible to
\PosSLP, but the complexity of \PosMatPow\ is left open
in~\cite{GOW15-MatrixPowering}. In particular, it is not known whether
\PosMatPow\ is easier to decide than \PosSLP. Our result that \PIC\ is
\PosMatPow-hard already for a fixed-size alphabet while
\PosSLP-hardness seems to require an alphabet of variable size~\cite{HK15} could
be seen as an indication that \PosMatPow\ is easier to decide than \PosSLP.

\subsection{Related Work}\label{ssec:related-work}

The problems studied in this paper lie at the intersection of
probabilistic verification, automata theory, enumerative combinatorics
and computational complexity. Hence, there is a large body of related
work that we briefly discuss here.

\smallskip
\noindent
\textit{Probabilistic Verification.} Over the last decade and in
particular in recent years, there has been strong interest in
extensions of Markov chains and Markov decision processes (MDPs) with
(multi-dimensional) weights. Laroussinie and Sproston were the first
to show that model checking PCTL on cost chains, a generalisation of
the cost problem, is \NP-hard and  in \EXPTIME~\cite{LS05}; the
lower bound has recently been improved to \PP\ in~\cite{HK16}.
Qualitative aspects of quantile problems in weighted MDPs where the
probability threshold $\tau$ is either 0 or 1 have been studied by
Ummels and Baier in~\cite{UB13}, and iterative linear programming
based approaches for solving reward quantiles have been described
in~\cite{BDDKK14}. There is also a large body of work on synthesising
strategies for weighted MDPs that ensure both worst case as well as
expected value guarantees~\cite{BFRR14,RRS15a,CR15,CKK15};
see~\cite{RRS15b} for a survey on such beyond-worst-case-analysis
problems. In addition, some other \textsc{PosSLP}-hard numerical
problems have recently been discovered, for instance the problem of
finding mixed strategy profiles close to exact Nash equilibria in
three-person games \cite{EtessamiY10} or the model-checking problem of
interval Markov chains against unambiguous B\"uchi automata
\cite{BenediktLW13}.  These problems also belong to the counting
hierarchy -- it would be interesting to investigate whether they are
also \PP-hard.

\smallskip
\noindent
\textit{Counting Paths.}  We use the BEST theorem in the proofs of our
$\CH$ upper bounds for \BitP\ and \PIC\ (for DFA with Parikh vectors
encoded in binary) in order to count the number of Eulerian circuits
in a directed multi-graph. This multi-graph is succinctly given: There
may be exponentially many edges between two nodes. For an explicitly
given directed graph, the BEST theorem allows to compute the number of
Eulerian circuits in $\NC^2$ since it basically reduces the
computation to a determinant. On the other hand, it is well known that
computing the number of Eulerian circuits in an undirected graph is
\sharpP-complete \cite{BrightwellW05}. For directed (resp.,
undirected) graphs of bounded tree width, the number of Eulerian
circuits can be computed in logspace \cite{BalajiD15} (resp. $\NC^2$
\cite{BalajiDG15}). Finally, Mahajan and Vinay show~\cite{MV97} that
the determinant of a matrix has a combinatorial interpretation in
terms of the difference of the number of paths between two pairs of
nodes in a directed acyclic graph. In an analogous way, \PIC\ could be
viewed as a combinatorial interpretation of \PosSLP.

\smallskip
\noindent
\textit{Counting Words.} A problem related to the problem \PIC\ is the
computation of the number of all words of a given length $n$ in a
language $L$. If $n$ is given in unary encoding, then this problem can
be solved in $\NC^2$ for every fixed unambiguous context-free language
$L$ \cite{DBLP:journals/tcs/BertoniGS91}. On the other hand, there
exists a fixed context-free language $L \subseteq \Sigma^*$ (of
ambiguity degree two) such that if the function $a^n \mapsto \#(L \cap
\Sigma^n)$ can be computed in polynomial time, then $\EXPTIME =
\NEXPTIME$ \cite{DBLP:journals/tcs/BertoniGS91}.  Counting the number
of words of a given length encoded in unary that are accepted by a
given NFA (which is part of the input in contrast to the results of
\cite{DBLP:journals/tcs/BertoniGS91}) is \sharpP-complete
\cite[Remark~3.4]{KusL08}. The corresponding problem for DFA is
equivalent to counting the number of paths between two nodes in a
directed acyclic graph, which is the canonical \sharpL-complete
problem. Note that for a fixed alphabet and Parikh vectors encoded in
unary, the computation of $N(\A,\vec{p})$ for an NFA (resp.~DFA) $\A$
can be reduced to the computation of the number of words of a given
length encoded in unary accepted by an NFA (resp.~DFA) $\A'$: In that
case, one can easily compute in logspace a DFA $\A_{\vec{p}}$ for
$\parikh^{-1}(\vec{p})$ and then construct the product automaton of
$\A$ and $\A_{\vec{p}}$.

\section{Preliminaries}
\makeatletter{}
\subsection{Counting Problems for Parikh Images}

Let $\Sigma=\{a_1,\ldots,a_m\}$ be a finite alphabet.
A Parikh vector is vector of $m$ non-negative integers, i.e., an element of~$\N^m$.
Let $u\in
\Sigma^*$ be a word. For $a\in \Sigma$, we denote by $\abs{u}_a$ the
number of times $a$ occurs in $u$. The Parikh image $\parikh(u)\in
\N^m$ of $u$ is the Parikh vector counting how often
every alphabet symbol of $\Sigma$ occurs in $u$, i.e.,
$\parikh(u)\defeq (\abs{u}_{a_1},\ldots,\abs{u}_{a_m})$. The Parikh
image of a language $L\subseteq \Sigma^*$ is defined as
$\parikh(L)\defeq \{ \parikh(u) : u \in L \}\subseteq \N^m$.

We use standard language accepting devices in this paper.
A non-deterministic finite-state automaton (NFA) is a tuple
$\A=(Q,\Sigma,q_0,F,\Delta)$, where $Q$ is a finite set of control
states, $\Sigma$ is a finite alphabet, $q_0\in Q$ is an initial state,
$F\subseteq Q$ is a finite set of final states, and $\Delta\subseteq
Q\times \Sigma\times Q$ is a finite set of transitions. We write $p
\xrightarrow{a} q$ whenever $(p,a,q)\in \Delta$ and define
$\Delta(p,a)\defeq \{ q\in Q : p \xrightarrow{a} q\}$. For
convenience, we sometimes label transitions with words $w\in
\Sigma^+$. Such a transition corresponds to a chain of transitions that are
consecutively labelled with the symbols of $w$. We call $\A$ a
deterministic finite-state automaton (DFA) if $\# \Delta(q,a) \le 1$
for all $q\in Q$ and $a\in \Sigma$. Given $u = a_1 a_2\cdots a_n \in
\Sigma^*$, a run $\run$ of $\A$ on $u$ is a finite sequence of control
states $\run= p_0p_1\cdots p_n$ such that $p_0 = q_0$ and $p_i
\xrightarrow{a_i} p_{i+1}$ for all $0 \le i<n$. We call $\run$
accepting whenever $p_n\in F$ and define the language accepted by $\A$
as $L(\A)\defeq \{ u\in \Sigma^* : \A $ has an accepting run on $u\}$.
Finally, context-free grammars (CFG) are defined as usual.

Let $\Sigma$ be an alphabet of size $m$ and $\vec{p}\in \N^m$ be a
Parikh vector. For a language acceptor $\A$, we denote by
$N(\A,\vec{p})$ the number of words in $L(\A)$ with Parikh image
$\vec{p}$, i.e.,
\[
N(\A,\vec{p}) \defeq \#\{ u\in L(\A) : \parikh(u) = \vec{p} \}.
\]
We denote the counting function that maps $(\A,\vec{p})$ to
$N(\A,\vec{p})$ also with \countP. For complexity considerations, we
have to specify (i) the type of $\A$ (DFA, NFA, CFG), (ii) the
encoding of (the numbers in) $\vec{p}$ (unary or binary), and (iii)
whether the underlying alphabet is fixed or part of the input
(variable). For instance, we speak of \countP\ for DFA over a fixed
alphabet and Parikh vectors encoded in binary.  The same terminology
is used for the following computational problems:

\medskip
\problemx
{\PIC}
{Language acceptors $\A,\B$ over an alphabet $\Sigma$ of size $m$ and a
Parikh vector $\vec{p}\in \N^m$.}
{Is $N(\A,\vec{p}) > N(\B,\vec{p})$?}
\medskip

\problemx
{\BitP}
{Language acceptor $\A$ over an alphabet $\Sigma$ of size $m$, a
Parikh vector $\vec{p}\in \N^m$, and a number $i \in \N$ encoded binary .}
{Is the $i$-th bit of $N(\A,\vec{p})$ equal to one?}
\medskip

\noindent
Note that for a Parikh vector $\vec{p}$ encoded in binary, the number
$N(\A,\vec{p})$ may be doubly exponential in the input length (size of
$\A$ plus number of bits in $\vec{p}$). Hence, the number of bits in
$N(\A,\vec{p})$ can be exponential, and a certain position in the
binary encoding of $N(\A,\vec{p})$ can be specified with polynomially
many bits.

Our main result for the problems above is (see
Section~\ref{ssec:complexity} below for the formal definition of the
counting hierarchy):

\begin{theorem} \label{thm-parikh-binary-dfa}
For DFA over a variable alphabet and Parikh vectors encoded in binary,
the problems \BitP\ and \PIC\ belong to the counting hierarchy.
\end{theorem}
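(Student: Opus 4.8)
The plan is to combine a combinatorial identity with a complexity-theoretic toolbox. First, using the BEST theorem, I would write $N(\A,\vec p)$ as a sum of at most exponentially many terms, each of which is a product of a polynomial-time-computable determinant with polynomially many factorials of numbers of at most exponential magnitude. Second, extending the techniques of~\cite{HAMB02,ABKB09}, I would prove a handful of closure properties of~\CH\ under numerical operations on numbers of exponential bit-length --- iterated products, factorials, integer determinants, and, crucially, sums of exponentially many summands --- and apply them to that formula. This gives \BitP\ for DFA directly; \PIC\ then follows by comparing $N(\A,\vec p)$ and $N(\B,\vec p)$ bit by bit.

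\emph{From words to Eulerian circuits.} Fix a DFA $\A=(Q,\Sigma,q_0,F,\Delta)$ and $\vec p\in\N^m$, and put $\ell=\sum_{i=1}^m p_i$, which may be exponential in the input size. Since $\A$ is deterministic, a word $u\in L(\A)$ with $\parikh(u)=\vec p$ is exactly a walk $q_0=r_0\to r_1\to\cdots\to r_\ell$ in the transition graph of $\A$ that ends in $F$ and, for every $i$, uses transitions labelled $a_i$ a total of $p_i$ times. Such a walk is determined, up to the order of its steps, by its end state $t\in F$ and its \emph{flow vector} $\vec x\in\N^\Delta$, where $x_\tau$ is the number of uses of transition $\tau$; note that $\vec x$ is a polynomial-size object, since $\lvert\Delta\rvert$ is polynomial and each $x_\tau\le\ell$ has polynomially many bits. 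A pair $(\vec x,t)$ arises from an accepting walk iff (i) for each $i$ the $x_\tau$ over the $a_i$-labelled transitions sum to $p_i$, (ii) $\vec x$ satisfies flow conservation with one unit of surplus routed from $q_0$ to $t$, and (iii) the graph whose vertices are $q_0$, $t$ and the endpoints of the transitions $\tau$ with $x_\tau>0$, and whose edges are those transitions, is connected; all three conditions are polynomial-time checkable. For a valid $(\vec x,t)$, add a fresh edge $e^\ast$ from $t$ to $q_0$ to obtain an Eulerian multigraph $G$; the accepting walks realising $(\vec x,t)$ then correspond to the Eulerian circuits of $G$ anchored at $e^\ast$ once the $x_\tau$ parallel copies of each transition are identified. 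Applying the BEST theorem, with the arborescence count evaluated as a determinant via the Matrix--Tree theorem, the number of such walks is
\[
  \frac{\det(L_{\vec x,t})}{\prod_{\tau\in\Delta}x_\tau!}\cdot\prod_{v\in Q}\bigl(d_{\vec x,t}(v)-1\bigr)!
\]
where $d_{\vec x,t}(v)$ is the out-degree of $v$ in $G$ (a sum of polynomially many $x_\tau$) and $L_{\vec x,t}$ is a reduced Laplacian of $G$ --- a square integer matrix of polynomial dimension with polynomially-many-bit entries, whose determinant is therefore polynomial-time computable. Summing this expression over all valid $(\vec x,t)$, with the empty word captured by the degenerate case $\vec x=\vec 0$, $t=q_0\in F$, gives $N(\A,\vec p)$.

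\emph{The \CH\ toolbox.} The formula just obtained is a sum of at most exponentially many terms, each a product of a polynomial-time-computable determinant with polynomially many factorials of at-most-exponential numbers; consequently $N(\A,\vec p)$ and each single term may have exponentially many bits, a position among which is specified by polynomially many bits. Extending~\cite{HAMB02,ABKB09}, I would establish: (a) the bits of a factorial $n!$, and more generally of an iterated product $\prod_{k\le n}f(k)$ with $n$ at most exponential and $f$ polynomial-time computable with polynomially many output bits, can be computed in~\CH\ --- by Chinese remaindering against exponentially many primes of polynomial bit-length, evaluating the residues via Wilson-type identities; (b) hence the bits of a single term of the sum can be computed in~\CH; (c) the bits of a sum $\sum_j a_j$ of at most exponentially many numbers can be computed in~\CH\ whenever bit $i$ of $a_j$ is \CH-computable from $(i,j)$ --- again by Chinese remaindering, using that $\bigl(\sum_j a_j\bigr)\bmod p$ is obtained from the residues $a_j\bmod p$ by an iterated addition each of whose bits is a count over the summands computable with one further level of~\PP; and (d) these constructions compose. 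Instantiating (a)--(d) on the BEST formula shows that the bits of $N(\A,\vec p)$ are \CH-computable, i.e.\ \BitP\ is in~\CH. Finally, $N(\A,\vec p)>N(\B,\vec p)$ is decided by locating and comparing the most significant bit at which the two numbers differ, which is a polynomial-time computation relative to a \CH\ oracle with one extra quantifier over bit positions, so \PIC\ is in~\CH\ as well.

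\emph{Main obstacle.} Once the bookkeeping around connectivity, the surplus condition, and the auxiliary edge $e^\ast$ is settled, the combinatorial step is a routine consequence of the BEST and Matrix--Tree theorems. The real work is the toolbox: the results of~\cite{HAMB02,ABKB09} are phrased for integers of polynomial bit-length, or for numbers presented by small arithmetic circuits, whereas here one must manipulate factorials and iterated products of numbers of exponential magnitude and then add up exponentially many of the resulting exponential-length numbers. Handling the factorials requires re-running the Chinese-remaindering machinery with exponentially many primes of polynomial bit-length, and the point that keeps the exponential sum inside \CH\ is that, modulo each such prime, a bit of the sum is a counting problem over the polynomially-describable set of flow vectors relative to a \CH-predicate, so the whole construction incurs only finitely many further counting levels.
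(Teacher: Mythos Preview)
Your proposal is correct and follows essentially the same two-part structure as the paper: first reduce $N(\A,\vec p)$ via the BEST and Matrix--Tree theorems to a sum over flow vectors of expressions involving determinants and factorials, then evaluate that sum inside \CH\ using a numerical toolbox. The combinatorial half is practically identical; the paper normalises the DFA so that the initial state is the unique final state (adding a fresh letter and back-edges), whereas you keep the final states and add an auxiliary edge $e^\ast$ per accepting flow --- a cosmetic difference.

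The real divergence is in the toolbox. You propose to establish the closure properties from first principles via Chinese remaindering against exponentially many small primes and Wilson-type identities for factorials. The paper takes a shorter route: it invokes the known fact that iterated addition and iterated multiplication are in \DLOGTIME-uniform $\TC^0$, scales the corresponding constant-depth circuit to exponential width, and then shows by induction on the (constant) depth that the value of each gate is \CH-computable. This single lemma immediately gives sums and products over exponentially many \CH-computable operands; factorials fall out as the special case $\prod_{i\le n} i$ with no need for Wilson, and division and comparison are handled the same way using the $\TC^0$ algorithms for those operations. Your approach would work, but it re-derives the internals of~\cite{HAMB02} rather than using the $\TC^0$ result as a black box, so it is longer and somewhat obscures the structural reason the argument stays in \CH\ (constant circuit depth means a constant number of additional counting levels). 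On the other hand, your write-up makes the dependence on the flow-vector encoding and the polynomial-time checkability of the side conditions more explicit than the paper does, and your identification of the toolbox as the main obstacle is spot on.
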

We postpone the proof of Theorem~\ref{thm-parikh-binary-dfa}  to Section~\ref{ssec-best}.
For other settings, the complexity of  \BitP\ and \PIC\  (and also the counting problem \countP)
will be studied in Section~\ref{sec-further}.

\subsection{Graphs} \label{sec-graphs}

A (finite directed) multi-graph is a tuple $G = (V,E,s,t)$, where $V$
is a finite set of nodes, $E$ is a finite set of edges, and the mapping
$s\colon E \to V$ (resp., $t\colon E \to V$) assigns to each edge its source
node (resp., target node). A loop is an edge $e \in E$ with
$s(e)=t(e)$.  A path (of length $n$) in $G$ from $u$ to $v$ is a sequence of edges $e_1, e_2, \ldots,
e_n$ such that $s(e_1) = u$, $t(e_n) = v$, and $t(e_i) = s(e_{i+1})$ for all $1 \leq i \leq n-1$.  We
say that $G$ is connected if for all nodes $u,v \in V$ there exists a
path in $G$ from $u$ to $v$.
We say that $G$ is loop-free if $G$ does not have loops.
The in-degree of $v$ is $d_G^+(v) \defeq \# t^{-1}(v)$ (note that the
preimage $t^{-1}(v)$ is the set of all incoming edges for node $v$) and the out-degree of $v$ is
$d_G^-(v) \defeq \# s^{-1}(v)$.

An edge-weighted multi-graph is a tuple $G = (V,E,s,t,w)$, where
$(V,E,s,t)$ is a multi-graph and $w \colon E \to \mathbb{N}$ assigns a
weight to every edge. We can define the ordinary multi-graph
$\tilde{G}$ induced by $G$ by replacing every edge $e \in E$ by $k =
w(e)$ many edges $e_1, \ldots, e_k$ with $s(e_i) = s(e)$ and $t(e_i) =
t(e)$. For $u, v \in V$ and $n \in \N$, define $N(G,u,v,n)$ as
the number of paths in~$\tilde{G}$ from $u$ to~$v$ of length~$n$.
Moreover, we set $d_G^-(v) = d_{\tilde{G}}^-(v)$ and $d_G^+(v) = d_{\tilde{G}}^+(v)$.

\subsection{Computational Complexity}\label{ssec:complexity}

We assume familiarity with basic complexity classes such as
\LOGSPACE\ (deterministic logspace), \NL, \P, \NP, \PH\ (the polynomial time hierarchy)
and \PSPACE. The class \DP\ is the class of all intersections $K \cap L$ with $K \in \NP$ and $L \in \coNP$.
Hardness for a complexity
class will always refer to logspace reductions.

A counting problem is a function $f \colon \Sigma^* \to \N$ for a
finite alphabet $\Sigma$. A counting class is a set of counting
problems. A logspace reduction from a counting problem $f \colon
\Sigma^* \to \N$ to a counting problem $g \colon \Gamma^* \to \N$ is a
logspace computable function $h\colon \Sigma^* \to \Gamma^*$ such that
for all $x \in \Sigma^*$: $f(x) = g(h(x))$. Note that no
post-computation is allowed. Such reductions are also called
parsimonious. Hardness for a counting class will always refer to
parsimonious logspace reductions.

The counting class \sharpP\ contains all functions $f\colon \Sigma^*
\to \mathbb{N}$ for which there exists a non-deterministic
polynomial-time Turing machine $M$ such that for every $x \in
\Sigma^*$, $f(x)$ is the number of accepting computation paths of $M$
on input $x$. The class \PP\ (probabilistic polynomial time) contains
all problems $A$ for which there exists a non-deterministic
polynomial-time Turing machine $M$ such that for every input $x$, $x
\in A$ if and only if more than half of all computation paths of $M$
on input $x$ are accepting. By a famous result of Toda \cite{To91},
$\PH \subseteq \P^{\PP}$, where $\P^{\PP}$ is the class of all
languages that can be decided in deterministic polynomial time with
the help of an oracle from \PP.  Hence, if a problem is \PP-hard, then
this can be seen as a strong indication that the problem does not
belong to \PH\ (otherwise \PH\ would collapse). If we replace in the
definition of \sharpP\ and \PP\ non-deterministic polynomial-time
Turing machines by non-deterministic logspace Turing machines (resp.,
non-deterministic polynomial-space Turing machines; non-deterministic
exponential-time Turing machines), we obtain the classes \sharpL\ and
\PL\ (resp., \sharpPSPACE\ and \ComplexityFont{PPSPACE};
\sharpEXP\ and \PEXP).  Ladner \cite{Lad89} has shown that a function
$f$ belongs to
\sharpPSPACE\ if and only if for a given input $x$ and a binary
encoded number $i$ the $i$-th bit of $f(x)$ can be computed in
\PSPACE. It follows that \ComplexityFont{PPSPACE} = \PSPACE.  It is
well known that \PP\ can be also defined as the class of all
languages $L$ for which there exist two \sharpP-functions $f_1$ and
$f_2$ such that $x \in L$ if and only if $f_1(x) > f_2(x)$ and
similarly for \PL\ and \PEXP.

The levels of the {\em counting hierarchy} $\C^p_i$ ($i \geq 0$) are
inductively defined as follows: $\C^p_0 = \P$ and $\C^p_{i+1} =
\PP^{\C^p_{i}}$ (the set of languages accepted by a $\PP$-machine as
above with an oracle from $\C^p_{i}$) for all $i \geq 0$. Let $\CH =
\bigcup_{i\geq 0} \C^p_{i}$ be the counting hierarchy. It is not
difficult to show that $\CH \subseteq \PSPACE$, and most complexity
theorists conjecture that $\CH \subsetneq \PSPACE$. Hence, if a
problem belongs to the counting hierarchy, then the problem is probably not \PSPACE-complete.
  
The circuit complexity class \DLOGTIME-uniform
$\TC^0$ is the class of all languages that can be decided with a
constant-depth polynomial-size \DLOGTIME-uniform circuit family of
unbounded fan-in that in addition to normal Boolean gates (AND, OR and
NOT) may also use threshold gates. \DLOGTIME-uniformity means that one
can compute in time $O(\log n)$ (i) the type of a given gate of the
$n$-th circuit, and (ii) whether two given gates of the $n$-th circuit
are connected by a wire. Here, gates of the $n$-th circuit are encoded
by bit strings of length $O(\log n)$.  If we do not allow threshold
gates in this definition, we obtain \DLOGTIME-uniform $\AC^0$.  The
class $\NC^k$ ($k \geq 1$) is the class of all of all languages that
can be decided with a polynomial-size \DLOGTIME-uniform circuit family
of depth $O(\log^k n)$, where only Boolean gates of fan-in two are
allowed. \DLOGTIME-uniform $\TC^0$ is contained in \DLOGTIME-uniform
$\NC^1$.
  
There are obvious generalization of the above language classes
$\AC^0$, $\TC^0$, and $\NC^k$ to function classes.  Given two $n$-bit
numbers $x,y\in \Z$, $x+y$ (resp., $x \cdot y$)
can be computed in \DLOGTIME-uniform $\AC^0$ (resp., \DLOGTIME-uniform
$\TC^0$) \cite{Vol99}.  Even the product of $n$ numbers $x_1,\ldots,x_n
\in \N$, each of bit-size at most $n$, and matrix powers $A^n$ with
$n$ given in unary and $A$ of constant dimension can be computed in
\DLOGTIME-uniform
$\TC^0$~\cite{HAMB02,DBLP:conf/mfcs/AllenderBD14}. If $n$ is given in
binary (and $A$ has again constant dimension) then the computation of
a certain bit of $A^n$ can be done in
$\PH^{\PP^{\PP^\PP}}$~\cite{DBLP:conf/mfcs/AllenderBD14}.  Finally,
computing the determinant of an integer matrix with entries encoded in
binary is in $\NC^2$.  More details on the counting hierarchy (resp.,
circuit complexity) can be found in \cite{AllenderW90} (resp.,
\cite{Vol99}).

\section{A Toolbox for the Counting Hierarchy}\label{sec-aux}
\makeatletter{}In the subsequent sections, in order to show our \CH\ upper bounds we
require some closure results for the counting hierarchy. These
results are based on ideas and results developed
in~\cite{ABKB09,Burgisser09}, which are, however, not sufficiently
general for our purposes.

Let $\mathbb{B} \defeq \{0,1\}^*$ in the following definitions. For a
$k$-tuple $\overline{x} = (x_1, \ldots, x_k) \in \mathbb{B}^k$ let
$|\overline{x}| = \sum_{i=1}^k |x_i|$. The following definition 
is a slight variant of the definition in \cite{Burgisser09} that
suits our purposes better. Consider a function $f\colon \mathbb{B}^k
\to \mathbb{N}$ that maps a $k$-tuple of binary words to a natural
number. We say that $f$ is in $\CH$ if there exists a polynomial
$p(n)$ such that the following holds:
\begin{itemize}
\item For all $\overline{x} \in \mathbb{B}^k$ we have $f(\overline{x})
  \leq 2^{2^{p(|\overline{x}|)}}$.
\item The set of tuples
$$L_f \defeq \{ (\overline{x}, i) \in \mathbb{B}^k \times \mathbb{N} :
  \text{the $i$-th bit of $f(\overline{x})$ is equal to $1$}\}$$
  belongs to $\CH$ (here, we assume that $i$ is given in binary
  representation).
\end{itemize}
We also consider mappings $f \colon \prod_{j=1}^k D_j \to \mathbb{N}$,
where the domains $D_1, \ldots, D_k$ are not $\mathbb{B}$. In this
case, we assume some standard encoding of the elements from $D_1,
\ldots, D_k$ as words over a binary alphabet.

\begin{lemma} \label{lemma-add-mult-CH}
  If the function $f\colon \mathbb{B}^{k+1} \to \mathbb{N}$ belongs to
  $\CH$ and $p(n)$ is a polynomial, then also the functions $g\colon
  \mathbb{B}^k \to \mathbb{N}$ and $h\colon \mathbb{B}^k \to
  \mathbb{N}$ belong to $\CH$, where
\begin{align*}
g(\overline{x}) & =\!\! \sum_{y \in \{0,1\}^{p(|\overline{x}|)}}  f(\overline{x},y)
   & \text{and} & &
   h(\overline{x}) & =\!\!  \prod_{y \in \{0,1\}^{p(|\overline{x}|)}} \!\!  f(\overline{x},y) .
\end{align*}
\end{lemma}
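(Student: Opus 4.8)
The plan is to reduce both the iterated sum $g$ and the iterated product $h$ to the case of a single exponential sum of $\CH$-functions, and then handle that sum by a direct counting-hierarchy argument combined with the known $\TC^0$-bounds for iterated addition and iterated multiplication. First consider $g$. The size bound is immediate: $g(\overline{x}) \le 2^{p(|\overline{x}|)} \cdot \max_y f(\overline{x},y) \le 2^{p(|\overline{x}|)} \cdot 2^{2^{q(\cdot)}} \le 2^{2^{r(|\overline{x}|)}}$ for a suitable polynomial $r$, where $q$ is the polynomial witnessing $f \in \CH$. For the bit-extraction part, I would express the $i$-th bit of $g(\overline{x})$ via the standard school-addition characterisation: the $i$-th bit of a sum $\sum_y v_y$ of $2^{p(n)}$ numbers, each of at most $2^{q(n)}$ bits, is $1$ iff the sum, modulo $2^{i+1}$, lies in a union of residue intervals $[2^i \cdot (2j{+}1),\, 2^i \cdot (2j{+}2))$; equivalently, $\bigl\lfloor S / 2^i \bigr\rfloor$ is odd, where $S=\sum_y v_y$. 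The quantity $\bigl\lfloor S/2^i\bigr\rfloor \bmod 2$ can in turn be obtained by splitting $S$ into the contribution of the low-order bits (positions $<i$, which only matter through a carry that is between $0$ and $2^{p(n)}$) and the high-order bits. The carry into position $i$ is an integer in $[0,2^{p(n)}]$, hence specified by polynomially many bits, and — this is the key move — it can be pinned down by a $\PP$-computation that, given a guessed value $c$ of the carry, checks ``$\sum_y (v_y \bmod 2^i) \ge c \cdot 2^i$'' and ``$< (c{+}1)\cdot 2^i$'' using the $\CH$-oracle $L_f$ to read bits of the $v_y$. Summing bounded quantities indexed by $y\in\{0,1\}^{p(n)}$ is exactly what a $\PP$-oracle machine does: it can compare two such sums, and by binary search over $c$ (polynomially many rounds, each a $\PP$-query relative to $L_f \in \CH$, hence relative to some level $\C^p_j$) it determines $c$; a final arithmetic step on polynomially many bits then yields the target bit. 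Thus $L_g \in \CH$.

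For $h$, the standard trick is to pass to logarithms of a discrete kind: take a prime $P$ (or prime power) larger than all the $f(\overline{x},y)$ and a primitive root, so that $\log$ turns the product into a sum modulo $P-1$ — but since $P$ itself would be doubly exponential, this is awkward. A cleaner route, which I would follow, is to reduce the iterated product directly to iterated addition via the observation that one can compute any desired bit of a product of $N$ numbers each of $b$ bits by the ``sum of shifted partial products'' only when $N$ is small; for exponentially many factors one instead uses the result, already quoted in Section~\ref{ssec:complexity}, that iterated multiplication of $n$ numbers of $n$ bits is in $\TC^0$, together with a Chinese-remaindering / reciprocal technique. Concretely: the $i$-th bit of $\prod_y f(\overline{x},y)$ is determined by the residues of the product modulo sufficiently many small primes $p_1,\dots,p_k$ (all of polynomial bit-length, $k$ polynomial), plus knowing a single-exponential-size bound, via the standard CRT reconstruction that is itself in $\TC^0 \subseteq \CH$. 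Each residue $\prod_y f(\overline{x},y) \bmod p_\ell$ is an iterated product in the finite ring $\Z/p_\ell\Z$ over exponentially many factors, which I compute as $\prod_y f(\overline{x},y) \equiv \gamma^{\,\sum_y e_y} \pmod{p_\ell}$ after taking discrete logs $e_y$ of each $f(\overline{x},y)$ modulo $p_\ell$ (a table computable in $\P$, or $0$ handled separately when some factor is $\equiv 0$); the exponent $\sum_y e_y \bmod (p_\ell - 1)$ is an iterated sum of small numbers over an exponential index set, which is again a $\PP$-oracle computation by the same counting argument as for $g$. Putting the residues together by CRT and comparing against the size bound then gives the requested bit of $h$, so $L_h\in\CH$.

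The main obstacle I anticipate is the carry analysis that makes the iterated sum go through in $\CH$ rather than just in $\PSPACE$: one has to argue carefully that the carries propagating across an exponentially long number are each only polynomially large, that they can be determined level-by-level, and that the number of $\PP$-oracle rounds stays polynomial so the whole computation lands in some fixed $\C^p_j$. This is exactly where the ``slight variant'' of the definition from~\cite{Burgisser09} and the techniques from~\cite{ABKB09,HAMB02} are needed, and where one must be scrupulous that every appeal to ``iterated addition is in $\TC^0$'' is applied to only polynomially many summands at a time, with the exponential quantifier over $y$ always absorbed into a counting quantifier rather than an explicit sum. The treatment of factors equal to $0$ in the product (which force $h(\overline{x})=0$) and the choice of enough CRT primes to exceed the doubly-exponential bound are routine once the sum case is in hand.
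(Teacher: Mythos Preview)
Your plan diverges from the paper's and contains real gaps. The paper does not analyse carries or set up a CRT; it invokes the $\DLOGTIME$-uniform $\TC^0$ circuit family $(C_n)$ for iterated multiplication (resp.\ addition) as a black box. For $|\overline{x}|=n$ one takes the exponential-size instance $D_n:=C_{2^{r(n)}}$, fed with the tuple $(f(\overline{x},y))_y$ padded with $1$'s. This circuit has exponentially many gates but \emph{constant} depth~$c$, and one shows by induction on the level $1\le i\le c$ that the predicate ``gate $u$ at level $i$ evaluates to $1$'' lies in $\CH$: level $1$ is handled by $L_f\in\CH$, and each further level adds a single $\PP$ quantifier (a threshold gate over exponentially many inputs is precisely a $\PP$ predicate relative to the previous level). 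After the fixed constant $c$ many levels one is still in $\CH$, and the output gates deliver the bits of $h(\overline{x})$ (resp.\ $g(\overline{x})$). This is the opposite of your closing caveat: the $\TC^0$ circuit \emph{is} applied to exponentially many operands, and it is its constant depth --- not a restriction to polynomially many summands --- that keeps everything in $\CH$.

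Your carry argument for $g$ breaks at the ``key move'': you assert that a $\PP$ machine can test $\sum_y (v_y\bmod 2^i)\ge c\cdot 2^i$ because $\PP$ ``sums bounded quantities indexed by~$y$''. But $v_y\bmod 2^i$ is not polynomially bounded --- $i$ may be as large as $2^{r(n)}$ --- so each summand has exponentially many bits and is accessible only bit-by-bit via the oracle $L_f$; a polynomial-time $\PP$ machine cannot materialise such a summand as a path count. The carry recursion you describe is sequential of exponential length, not constant depth. For $h$, the CRT step is underpowered: $h(\overline{x})$ can have up to $2^{p(|\overline{x}|)}\cdot 2^{q(\cdot)}$ bits, whereas polynomially many primes of polynomial bit-length yield a modulus product with only polynomially many bits, nowhere near enough for CRT reconstruction. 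Taking exponentially many small primes (which is indeed what the proof of iterated multiplication in $\TC^0$ does) just pushes the problem back: both the CRT reconstruction and even computing a single residue $f(\overline{x},y)\bmod p_\ell$ (reducing an exponentially long number modulo a small prime) are again iterated sums over an exponential index set. In effect you are trying to re-derive the $\TC^0$ bound for iterated multiplication inline, and the sketch does not close.
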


\begin{proof}
  We only prove the statement for products, the proof for sums is the
  same. To this end, we follow the arguments from~\cite{ABKB09}
  showing that \textsc{BitSLP} belongs to the counting hierarchy.
  As mentioned in Section~\ref{ssec:complexity},
  iterated product, i.e., the problem of computing the product of a
  sequence of binary encoded integers, belongs to $\DLOGTIME$-uniform
  $\TC^0$.  More precisely, there is a $\DLOGTIME$-uniform $\TC^0$ circuit family,
  where the $n$-th circuit $C_n$ has $n^2$ many input gates, which are
  interpreted as $n$-bit integers $x_1, \ldots, x_n$, and $n^2$ output
  gates, which evaluate to the bits in the product $\prod_{i=1}^n
  x_i$.  Let $c$ be the depth of the circuits $C_n$, which is a fixed
  constant.

  Since the function $f$ belongs to \CH, there is a polynomial $q(n)$
  such that for all $\overline{x} \in \mathbb{B}^k$ and $y \in
  \mathbb{B}$ we have $f(\overline{x},y) \leq
  2^{2^{q(|\overline{x}|+|y|)}}$. Let $r(n)$ be the polynomial with
  $r(n) = q(n + p(n))$. Hence, for all $\overline{x} \in
  \mathbb{B}^{k}$ and $y \in \{0,1\}^{p(|\overline{x}|)}$ we have
  $f(\overline{x},y) \leq 2^{2^{r(|\overline{x}|)}}$. We can assume
  that $r(n) \geq p(n)$ for all $n$ (simply assume that $q(n) \geq
  n$).

For an input tuple $\overline{x} \in \mathbb{B}^k$ with $|\overline{x}|=n$ we consider the circuit $D_n \defeq C_{2^{r(n)}}$.
It takes $2^{r(n)} \geq 2^{p(n)}$ integers with $2^{r(|\overline{x})}$ bits as input.
Hence, we can consider the input tuple 
$$
\overline{z} = (f(\overline{x},y_1), f(\overline{x},y_2), \ldots, f(\overline{x},y_{2^{p(n)}}), 1, \ldots, 1)
$$ for $D_n$.  Here, $y_1, \ldots, y_{2^{p(n)}}$ is the lexicographic
enumeration of all binary words of length $p(n)$. We pad the tuple
with a sufficient number of ones so that the total length of the tuple
is $2^{r(n)}$.

There is a polynomial $s(n)$ such that $D_n$ has at most $2^{s(n)}$ many gates. Hence,
a gate of $D_n$ can be identified with a bitstring of length $s(n)$. Then, one shows that
for every level $1 \leq i \leq c$ (where level 1 consists of the input gates) the following set belongs to $\CH$:
\begin{multline*}
  \{ (\overline{x}, u) : \overline{x} \in \mathbb{B}^k , u \in
  \{0,1\}^{s(|\overline{x}|)}, \text{ gate $u$ belongs
    to}\\ \text{level $i$ of $D_{|\overline{x}|}$ and evaluates to $1$
      if $\overline{z}$ is the input for $D_{|\overline{x}|}$}\}.
\end{multline*}
This is shown by a straightforward induction on $i$ as in
\cite{ABKB09}.  For the induction base $i = 1$ one uses the fact that
the function $f$ belongs to $\CH$.

For the statement about sums, the proof is the same using the result that iterated
sum belongs to $\DLOGTIME$-uniform $\TC^0$ as well (which is much easier to show than
the corresponding result for iterated products).
\end{proof}

\begin{remark} \label{remark-factorial}
  A particular application of Lemma~\ref{lemma-add-mult-CH} that we
  will use in Section~\ref{sec-mc} and Section~\ref{ssec-best} is
  the following: Assume that $f, g \colon \mathbb{B}^k \to \N$ are
  functions such that for a given tuple $\overline{x} \in
  \mathbb{B}^k$ the values $f(\overline{x})$ and $g(\overline{x})$ are
  bounded by $2^{\text{poly}(|\overline{x}|)}$ and the binary
  representations of these numbers can be computed in polynomial
  time. Then, the mappings defined by $f(\overline{x})!$ and
  $h(\overline{x}) = f(\overline{x})^{g(\overline{x})}$ belong to \CH.
\end{remark}

\begin{remark} \label{remark-restricted-sum-product}
  In our subsequent applications of Lemma~\ref{lemma-add-mult-CH} we
  have to consider the case that $g$ is given as
    \[
  g(\overline{x}) = \sum_{y \in S(\overline{x}) \cap \{0,1\}^{p(|\overline{x}|)}} \!\!\! f(\overline{x},y),
  \]
    such that for a given tuple
  $\overline{x} \in \mathbb{B}^k$ and a binary word $y \in
  \{0,1\}^{p(|\overline{x}|)}$ one can decide in polynomial time
  whether $y \in S(\overline{x})$.  This case can be easily reduced to
  Lemma~\ref{lemma-add-mult-CH}, since $g(\overline{x}) = \sum_{y \in
    \{0,1\}^{p(|\overline{x}|)}} f'(\overline{x},y)$, where $f'$ is
  defined as
  \[
  f'(\overline{x},y) \defeq \begin{cases}
    f(\overline{x},y) & \text{ if } y \in  S(\overline{x}) \\
    0  & \text{ otherwise. }
  \end{cases} 
  \]
Moreover, if $f$ belongs to \CH, then also $f'$ belongs to \CH.
The same remark applies to products instead of sums.
\end{remark}

\begin{lemma} \label{lemma-division-CH}
  If the functions $f\colon \mathbb{B}^k \to \mathbb{N}$ and $g\colon
  \mathbb{B}^k \to \mathbb{N}$ belong to $\CH$, then also the following functions
  $q\colon \mathbb{B}^k \to \mathbb{N}$ (quotient) and $d\colon \mathbb{B}^k
  \to \mathbb{N}$ (modified difference) belong to \CH:
$$
q(\overline{x}) \defeq  \bigg\lfloor \frac{f(\overline{x})}{g(\overline{x})}  \bigg\rfloor \quad \text{ and } \quad
d(\overline{x}) \defeq \max \{0, f(\overline{x})-g(\overline{x})\}
$$
\end{lemma}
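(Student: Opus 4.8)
The plan is to proceed exactly as in the proof of Lemma~\ref{lemma-add-mult-CH}, but with the \DLOGTIME-uniform $\TC^0$ circuit family for iterated product replaced by one for the operation at hand. Two facts are needed. First, modified subtraction of two binary integers is computable by a \DLOGTIME-uniform $\TC^0$ circuit family (indeed already by an $\AC^0$ one: compute $f-g$ in two's complement and clamp to $0$ using the sign bit). Second---and this is the substantial ingredient---integer division $\lfloor f/g\rfloor$ of two binary integers (say for $g\ge 1$) is computable by a \DLOGTIME-uniform $\TC^0$ circuit family; this is a classical and deep theorem \cite{HAMB02}, the same one already cited in Section~\ref{ssec:complexity}. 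The crucial point is that these families have \emph{constant} depth: this is precisely what keeps the construction inside $\CH$, whereas, e.g., computing the reciprocal of $g$ by a naive Newton iteration of $\Theta(\log n)$ depth would raise the level of the counting hierarchy by a non-constant amount and hence escape $\CH$.

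So fix such a family $(C_n)_{n\ge 1}$ of constant depth $c$ and a polynomial $r$ such that for every $\overline{x}$ both $f(\overline{x})$ and $g(\overline{x})$ have at most $2^{r(|\overline{x}|)}$ bits (this exists since $f,g\in\CH$). For an input $\overline{x}$ with $|\overline{x}|=n$, consider the circuit $D_n\defeq C_{2^{r(n)}}$ fed with $f(\overline{x})$ and $g(\overline{x})$ on its input gates; it has at most $2^{s(n)}$ gates and at most $2^{r(n)}$ output gates for a suitable polynomial $s$, so both gate indices and output-bit indices are words of polynomial length. Now one shows, by induction over the $c$ levels of $D_n$ exactly as in Lemma~\ref{lemma-add-mult-CH}, that for each level the set of pairs $(\overline{x},u)$ such that gate $u$ lies on that level of $D_{|\overline{x}|}$ and evaluates to $1$ under the input $f(\overline{x}),g(\overline{x})$ belongs to $\CH$. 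For the base case (input gates) one uses that $f,g\in\CH$, since the value at an input gate is a prescribed bit of $f(\overline{x})$ or of $g(\overline{x})$; for the inductive step, \DLOGTIME-uniformity determines in polynomial time which gates feed a given gate, and evaluating one layer of threshold (and Boolean) gates over its exponentially many predecessors is a single $\PP$-computation relative to the previous layer's $\CH$-oracle, raising the level of the counting hierarchy by one. As $c$ is constant, all output bits of $D_n$ are extracted by a $\CH$-predicate of $(\overline{x},j)$. Together with the bound $\max\{0,f(\overline{x})-g(\overline{x})\}\le f(\overline{x})\le 2^{2^{r(n)}}$ this shows $d\in\CH$.

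For $q$ it only remains to handle division by zero, which we do by setting $q(\overline{x})\defeq 0$ whenever $g(\overline{x})=0$. The predicate ``$g(\overline{x})=0$'' is itself in $\CH$: by the part just proved, $m(\overline{x})\defeq\max\{0,1-g(\overline{x})\}$ is a $\CH$-function, and $g(\overline{x})=0$ iff the $0$-th bit of $m(\overline{x})$ equals $1$, which is a single query to $L_m\in\CH$. Hence the $j$-th bit of $q(\overline{x})$ is given by the $\CH$-predicate ``$g(\overline{x})\ne 0$ and output gate $j$ of $D_{|\overline{x}|}$ evaluates to $1$'', and since $q(\overline{x})\le f(\overline{x})\le 2^{2^{r(n)}}$ we obtain $q\in\CH$. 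The only genuine obstacle in all of this is the need for a constant-depth circuit for integer division, which we simply take off the shelf from \cite{HAMB02}; everything else is size bookkeeping and a replay of the induction in the proof of Lemma~\ref{lemma-add-mult-CH}.
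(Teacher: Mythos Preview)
Your proof is correct and follows essentially the same approach as the paper: the paper's proof is a one-sentence pointer saying it is the same as Lemma~\ref{lemma-add-mult-CH} but using that binary division is in \DLOGTIME-uniform $\TC^0$ \cite{HAMB02} and subtraction in $\AC^0\subseteq\TC^0$ \cite{Vol99}. You have simply spelled out the details of that argument (and additionally handled the division-by-zero edge case, which the paper leaves implicit).
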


\begin{proof}
  The proof is the same as for Lemma~\ref{lemma-add-mult-CH}, using
  the result that division (resp., subtraction) of integers encoded in
  binary is in $\DLOGTIME$-uniform $\TC^0$ \cite{HAMB02} (resp.,
  $\AC^0 \subseteq \TC^0$ \cite{Vol99}).
\end{proof}
In particular, we have:

\begin{lemma} \label{lemma-comparision-CH}
  If the functions $f\colon \mathbb{B}^k \to \mathbb{N}$ and $g\colon
  \mathbb{B}^k \to \mathbb{N}$ belong to $\CH$, then also the following function
  $h\colon \mathbb{B}^k \to \mathbb{N}$ belongs to \CH:
$$
h(\overline{x}) \defeq \begin{cases} 
 1 & \text{ if } f(\overline{x}) > g(\overline{x}) \\
 0 & \text{ otherwise }
\end{cases}
$$
\end{lemma}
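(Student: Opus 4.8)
The plan is to obtain this as an immediate corollary of Lemma~\ref{lemma-division-CH} by writing the indicator $h$ as a short composition of modified differences, each of which preserves membership in $\CH$. The key observation is that $f(\overline{x}) > g(\overline{x})$ holds precisely when the modified difference $d(\overline{x}) \defeq \max\{0, f(\overline{x}) - g(\overline{x})\}$ is nonzero, i.e.\ when $d(\overline{x}) \geq 1$. By Lemma~\ref{lemma-division-CH} (applied to the pair $(f,g)$), the function $d$ belongs to $\CH$, so it remains only to ``clamp'' $d$ to the set $\{0,1\}$ while staying inside $\CH$.

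For the clamping I would use two further applications of Lemma~\ref{lemma-division-CH}. First note that the constant function $\mathbf{1}\colon \mathbb{B}^k \to \mathbb{N}$ trivially lies in $\CH$: it is bounded by $2^{2^{p(|\overline{x}|)}}$ for any polynomial $p$, and the associated bit predicate is decidable in polynomial time, hence in $\P \subseteq \CH$. Applying Lemma~\ref{lemma-division-CH} to the pair $(d, \mathbf{1})$ yields that $d_1(\overline{x}) \defeq \max\{0, d(\overline{x}) - 1\}$ belongs to $\CH$. Since $d(\overline{x}) \geq d_1(\overline{x})$ for every $\overline{x}$, one checks directly that $h(\overline{x}) = d(\overline{x}) - d_1(\overline{x}) = \max\{0, d(\overline{x}) - d_1(\overline{x})\}$: if $d(\overline{x}) = 0$ both terms vanish, and if $d(\overline{x}) \geq 1$ then $d_1(\overline{x}) = d(\overline{x}) - 1$, so the difference equals $1$. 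A third application of Lemma~\ref{lemma-division-CH}, now to the pair $(d, d_1)$, gives $h \in \CH$, and the bound $h(\overline{x}) \leq 1 \leq 2^{2^{p(|\overline{x}|)}}$ is trivially satisfied.

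I do not expect any genuine obstacle here; the only points that need a little care are (i) that the ``modified difference'' in Lemma~\ref{lemma-division-CH} already performs the truncation at $0$, so no separate case distinction on the sign of $f - g$ is required, and (ii) that the constant function $\mathbf{1}$ really does meet the two conditions in the definition of membership in $\CH$ --- both of which are immediate. As an alternative, more semantic route one could argue that since $h$ is $\{0,1\}$-valued, deciding $(\overline{x},i) \in L_h$ reduces to deciding $f(\overline{x}) > g(\overline{x})$, and a lexicographic bit-by-bit comparison turns the latter into a $\Sigma_2^p$-statement with polynomially long bit-index quantifiers relative to the $\CH$-oracles $L_f$ and $L_g$; closure of $\CH$ under such quantification (via the relativized form of Toda's theorem, $\PH^{\C^p_i} \subseteq \P^{\PP^{\C^p_i}} = \C^p_{i+1}$) then finishes the proof. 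The composition argument via Lemma~\ref{lemma-division-CH} is shorter, however, and this is the one I would present.
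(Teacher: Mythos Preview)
Your proposal is correct and in line with the paper's approach: the paper presents Lemma~\ref{lemma-comparision-CH} with the phrase ``In particular, we have'' immediately after Lemma~\ref{lemma-division-CH} and gives no separate proof, treating it as a direct corollary of the modified-difference operator. Your three-step construction $d = \max\{0,f-g\}$, $d_1 = \max\{0,d-1\}$, $h = \max\{0,d-d_1\}$ spells out exactly how that corollary goes, and the verification is sound. A minor variant, equally in the spirit of the paper's Section~\ref{sec-aux}, would be to rerun the $\TC^0$-circuit argument of Lemmas~\ref{lemma-add-mult-CH} and~\ref{lemma-division-CH} once more, now using that comparison of binary integers lies in $\AC^0 \subseteq \TC^0$; your reduction to repeated applications of Lemma~\ref{lemma-division-CH} is cleaner since it avoids reopening that machinery.
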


\section{Cost Problems in Markov Chains}\label{sec-mc}
\makeatletter{}
A \emph{Markov chain} is a triple $\M = (S,s_0,\delta)$, where $S$ is
a countable (finite or infinite) set of states, $s_0 \in S$ is an
initial state, and $\delta\colon S \to \dist(S)$ is a probabilistic
transition function that maps a state to a probability distribution
over the successor states.  Given a Markov chain we also write $s
\tran{p} t$ or $s \tran{} t$ to indicate that $p = \delta(s)(t) > 0$.
A \emph{run} is an infinite sequence $s_0 s_1 \cdots \in \{s_0\}
S^\omega$ with $s_i \tran{} s_{i+1}$ for $i \in \N$.  We write
$\Run(s_0 \cdots s_k)$ for the set of runs that start with $s_0 \cdots
s_k$.  We associate to~$\M$ the standard probability space
$(\Run(s_0),\F,\PM)$ where $\F$ is the $\sigma$-field generated by all
basic cylinders $\Run(s_0 \cdots s_k)$ with $s_0 \cdots s_k \in
\{s_0\} S^*$, and $\PM\colon \F \to [0,1]$ is the unique probability
measure such that $\PM(\Run(s_0 \cdots s_k)) = \prod_{i=1}^{k}
\delta(s_{i-1})(s_i)$.

A \emph{cost chain} is a tuple $\chain = (Q, q_0, t, \Delta)$, where
$Q$ is a finite set of control states, $q_0 \in Q$ is the initial
control state, $t \in Q$ is the target control state, and
$\Delta\colon Q \to \dist(Q \times \N)$ is a probabilistic transition
function.  Here, for $q, q' \in Q$ and $k \in \N$, when $q$ is the
current control state, the value $\Delta(q) (q',k) \in [0,1]$ is the
probability that the cost chain transitions to control state~$q'$ and
cost~$k$ is incurred.  For the complexity results we define the
\emph{size} of~$\chain$ as the size of a succinct description, i.e.,
the costs are encoded in binary, the probabilities are encoded as
fractions of integers in binary (so the probabilities are rational),
and for each $q \in Q$, the distribution $\Delta(q)$ is described by
the list of triples $(q',k,p)$ with $\Delta(q)(q',k) = p > 0$ (so we
assume this list to be finite). We define the set $E$ of edges of
$\chain$ as $E\defeq \{ (q,k,q') : \Delta(q)(q',k) > 0\}$, and write
$\Delta(e)$ and $k(e)$ for $\Delta(q)(q',k)$ and $k$ respectively, whenever $e=(q,k,q')\in E$. A cost chain~$\chain$
induces a Markov chain $\M_\chain = (Q \times \N, (q_0,0), \delta)$
with $\delta(q,c)(q',c') = \Delta(q)(q',c'-c) $ for all $q,q' \in Q$
and $c, c' \in \N$ with $c' \geq c$. For a state $(q,c) \in Q \times
\N$ in~$\M_\chain$ we view~$q$ as the current control state and $c$ as
the current cost, i.e., the cost accumulated so far.

In this section, we will be interested in the cost accumulated during
a run before reaching the target state~$t$. Following~\cite{HK15},
we assume (i) that the target state~$t$ is almost surely
reached, and (ii) that $\Delta(t)(t,0)=1$, hence runs that visit $t$
do not leave $t$ and accumulate only a finite cost. Those assumptions
are needed for the following definition to be
sound\footnote{See~\cite{HK15} for a discussion on why
  those assumptions can be made.}: Given a cost chain~$\chain$, we
define a random variable $K_\chain \colon \Run((q_0,0)) \to \N$ such
that $K_\chain((q_0,0) \ (q_1,c_1) \ \cdots) = c$ if there exists $i
\in \N$ with $(q_i,c_i) = (t,c)$.
We view $K_\chain(w)$ as the accumulated cost of a run~$w$. From the
aforementioned assumptions on~$t$, it follows that the random
variable~$K_\chain$ is almost surely defined. Furthermore, we can
assume without loss of generality that all zero-cost edges lead
to~$t$:
\begin{lemma} \label{lemma-zero-cost-edges}
  Given a cost chain~$\chain$, one compute in polynomial time a cost
  chain~$\chain'$ such that the distributions of $K_\chain$ and
  $K_{\chain'}$ are equal and all zero-cost edges in $\chain'$ lead to
  the target~$t$.
\end{lemma}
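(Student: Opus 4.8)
The plan is to eliminate zero-cost edges that do not already lead to $t$ by a ``short-cutting'' construction that preserves the distribution of the accumulated cost. First I would observe that, because $t$ is almost surely reached and the cost accumulated before reaching $t$ is almost surely finite, the subgraph of zero-cost edges cannot contain a cycle that is reachable and from which $t$ is reachable with positive probability along zero-cost edges forever; more carefully, since $\Delta(t)(t,0)=1$ and $t$ is reached almost surely with finite cost, from any reachable control state $q$ the probability of taking infinitely many zero-cost edges in a row without hitting $t$ is zero. Hence for each control state $q$ we may consider the (sub-)Markov chain obtained by following only zero-cost edges: starting from $q$, with probability one this process either reaches $t$ or leaves via a positive-cost edge after finitely many steps. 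This lets me define, for each $q \in Q$ and each positive-cost edge $e=(q',k,q'') \in E$ with $k \ge 1$, the probability $\pi(q,e)$ that, starting from $q$ and following zero-cost edges, the first positive-cost edge taken is $e$; and similarly the probability $\pi(q,t)$ that $t$ is reached via zero-cost edges only.

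Second I would build $\chain' = (Q, q_0, t, \Delta')$ on the same control states, where from $q$ the new distribution $\Delta'(q)$ puts probability $\pi(q,e)$ on the transition $(q,k,q'')$ for each positive-cost edge $e=(q',k,q'')$, and probability $\pi(q,t)$ on the transition $(q,0,t)$. By construction every zero-cost edge of $\chain'$ leads to $t$, and $\Delta'(t)(t,0)=1$ still holds (since $\pi(t,t)=1$). The key correctness claim is that $K_{\chain'}$ and $K_\chain$ have the same distribution: a run of $\chain$ can be decomposed into maximal blocks of consecutive zero-cost edges each followed by one positive-cost edge (or terminating at $t$), and collapsing each such block to the single $\chain'$-transition it induces is a measure-preserving map between the two probability spaces, because the cost contributed by a block in $\chain$ equals the cost of the corresponding positive-cost edge (the zero-cost prefix contributes nothing) and the probability of the block factors exactly as $\pi(q,e)$. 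I would make this precise by matching cylinder probabilities and appealing to the uniqueness of the probability measure extending the cylinder pre-measure.

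Finally, for the complexity bound I would note that the quantities $\pi(q,e)$ and $\pi(q,t)$ are computable in polynomial time: let $Z$ be the stochastic-like matrix of zero-cost transition probabilities restricted to the control states from which, along zero-cost edges, one may still avoid $t$; by the almost-sure-reachability assumption the corresponding sub-Markov chain is absorbing, so $I - Z$ is invertible and $(I-Z)^{-1}$ (Gaussian elimination over the rationals, in polynomial time) gives the expected block-behaviour, from which each $\pi(q,e)$ and $\pi(q,t)$ is obtained by one further matrix--vector multiplication; the resulting rationals have polynomially many bits. The main obstacle is the correctness argument, i.e.\ verifying that the block-collapsing map is genuinely measure preserving and that the new transition function is a bona fide probability distribution (the $\pi(q,\cdot)$ values sum to one), which is exactly where assumptions (i) and (ii) on $t$ are used; the rest is routine linear algebra and bookkeeping.
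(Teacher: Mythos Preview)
Your proposal is correct and is essentially the same construction as the paper's proof: both compute, for each state $q$ and each ``exit'' edge $e$ (an edge of positive cost, or a zero-cost edge into~$t$), the probability $x_{q,e}^*$ (your $\pi(q,e)$ and $\pi(q,t)$) that the first exit edge taken after~$q$ along a zero-cost prefix is~$e$, and then replace the chain by the short-cut transitions $(q,k(e),t(e))$ with those probabilities. The paper obtains the $x_{q,e}^*$ by solving a linear system (after first computing the support $Q_e$ via reachability), which is exactly your $(I-Z)^{-1}$ computation phrased differently; your additional discussion of the measure-preserving block-collapsing map makes explicit the correctness argument that the paper leaves implicit.
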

\begin{proof}
  The idea is to contract paths that consist of a sequence of
  zero-cost edges and a single edge that is either labelled with a
  non-zero cost or leads to~$t$.  In more detail, one proceeds as
  follows.  Let $E'$ be the set of edges that have non-zero cost or
  lead to~$t$.  For all $e \in E'$ and $q \in Q$, define $x_{q,e}^*$
  as the probability that, in a random sequence of edges starting
  from~$q$, the edge~$e$ is the first edge in~$E'$.  Let $e \in E'$.
  Using graph reachability, it is easy to compute $Q_e := \{q \in Q :
  x_{q,e}^* > 0\}$. It is also straightforward to set up a system of
  linear equations with a variable~$x_{q,e}$ for each $q \in Q_e$ such
  that the $x_{q,e}$-component of the (unique) solution is equal to
  $x_{q,e}^*$.  Hence, the numbers~$x_{q,e}^*$ can be computed in
  polynomial time.  Finally, construct~$\chain'$ from~$\chain$ as
  follows: Remove all edges from $E \setminus E'$, and for all $q, e$
  with $x_{q,e}^*>0$ add an edge from $q$ to the target of~$e$ such
  that the new edge has probability~$x_{q,e}^*$ and the same cost
  as~$e$ (or add $x_{q,e}^*$ to the probability if such an edge
  already exists).
\end{proof} 
Let $x$ be a fixed variable. An \emph{atomic cost formula} is an
inequality of the form $x \le b$ where $b \in \N$ is encoded in
binary, and a  \emph{cost formula} is an arbitrary Boolean combination of
atomic cost formulas. We say that a number $n \in \N$ \emph{satisfies}
a cost formula~$\varphi$, in symbols $n \models \varphi$, if $\varphi$
is true when $x$ is replaced by~$n$. Let $\links \varphi \rechts
\defeq \{ n : n \models \varphi \}$.

In \cite{HK15} the first two authors studied the following problem:
\smallskip
\problemx
{Cost Problem}
{A cost chain $\chain$, a cost formula $\varphi$, and a probability
threshold $\tau\in [0,1]$ given as a fraction of integers encoded in binary.}
{Does $\PM(K_\chain \models \varphi) \ge \tau$ hold?}
\smallskip
It was shown in~\cite{HK15} that the cost problem is $\PP$-hard,
\textsc{PosSLP}-hard and in \PSPACE.  Motivated by the problem
\BitSLP, we also consider the following related problem which allows
us to extract bits of the probability that the cost satisfies a cost
formula:
\smallskip
\problemx
{BitCost}
{A cost chain $\chain$, a cost formula $\varphi$, and an integer $j \geq 0$ in binary encoding.}
{Is the $j$-th bit of $\PM(K_\chain \models \varphi)$ equal to one?}
\smallskip
By application of
Theorem~\ref{thm-parikh-binary-dfa}, in this section we improve  the
upper bound for the cost problem to \CH:

\begin{theorem}\label{thm:mc}
  The \textsc{Cost problem} and \BitProb\ belong to \CH.
\end{theorem}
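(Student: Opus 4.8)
The plan is to express the real number $\PM(K_\chain \models \varphi)\in[0,1]$ as a fraction $A/M$ of two natural numbers that are bounded by $2^{2^{\mathrm{poly}}}$ and belong to $\CH$ in the sense of Section~\ref{sec-aux}, and then to read off the answers to both problems using the closure properties of $\CH$ collected there. The single nontrivial ingredient is Theorem~\ref{thm-parikh-binary-dfa}: together with the (obvious) doubly-exponential bound on $N(\A,\vec{p})$, it says precisely that the function $(\A,\vec{p})\mapsto N(\A,\vec{p})$ for DFA $\A$ over a variable alphabet and binary $\vec{p}$ belongs to $\CH$, and hence so does $(\chain,\vec{p})\mapsto N(\A_\chain,\vec{p})$ whenever $\A_\chain$ is computed from $\chain$ in logspace.

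First I would apply Lemma~\ref{lemma-zero-cost-edges} and assume every zero-cost edge of $\chain$ leads to $t$. Then in any \emph{first-visit path}, i.e.\ a finite path from $q_0$ to $t$ visiting $t$ only at its last node, every edge but the last has cost $\ge 1$, so a first-visit path of accumulated cost at most $b$ has at most $b+1$ edges. View $\chain$ as the DFA $\A_\chain$ over the alphabet $E$ of edges whose states are the control states, whose initial state is $q_0$, whose only final state is $t$, and which has a transition $q\xrightarrow{e}q'$ for every $e=(q,k,q')\in E$ with $q\ne t$; then $L(\A_\chain)$ is exactly the set of edge-words spelling out first-visit paths. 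The crucial point is that all first-visit paths with a fixed Parikh image $\vec{p}\in\N^E$ have the same probability $\prod_{e\in E}\Delta(e)^{p_e}$, and there are $N(\A_\chain,\vec{p})$ of them; writing $\mathrm{cost}(\vec{p})\defeq\sum_{e}p_e\,k(e)$, it follows that for every bound $b$ and cost formula $\varphi$,
\[
\PM(K_\chain\le b\ \text{and}\ K_\chain\models\varphi)
=\sum_{\substack{\vec{p}:\,p_e\le b+1\ (e\in E)\\ \mathrm{cost}(\vec{p})\le b,\ \mathrm{cost}(\vec{p})\models\varphi}} N(\A_\chain,\vec{p})\prod_{e\in E}\Delta(e)^{p_e},
\]
a sum over $\vec{p}$ ranging in an index set whose members are encodable with polynomially many bits (any $\vec{p}$ with $N(\A_\chain,\vec{p})>0$ and $\mathrm{cost}(\vec{p})\le b$ satisfies $p_e\le b+1$).

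Next I would clear denominators. Let $B$ be the largest constant occurring in $\varphi$ (or $0$ if there is none), let $\beta\in\{0,1\}$ be the truth value $\varphi$ takes uniformly on $(B,\infty)$, and write $\Delta(e)=a_e/d_e$ in binary. Put $M\defeq\prod_{e\in E}d_e^{\,B+1}$; since each summand above (for $b=B$) uses every edge at most $B+1$ times, multiplying it by $M$ turns $\prod_e\Delta(e)^{p_e}$ into the integer $\prod_e a_e^{p_e}d_e^{\,B+1-p_e}$. Using $\PM(K_\chain>B)=1-\PM(K_\chain\le B)$ and almost-sure reachability of $t$ gives
\[
M\cdot\PM(K_\chain\models\varphi)=\max\{0,\ (A^{+}+\beta M)-A^{-}\},
\]
where $A^{+}=M\cdot\PM(K_\chain\le B\ \text{and}\ K_\chain\models\varphi)$ and $A^{-}=\beta\cdot M\cdot\PM(K_\chain\le B)$, each written as a sum over $\vec{p}$ with the $\CH$-summand $N(\A_\chain,\vec{p})\prod_e a_e^{p_e}d_e^{\,B+1-p_e}$ under a polynomial-time-decidable restriction on $\vec{p}$. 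Every factor of that summand lies in $\CH$: $N(\A_\chain,\vec{p})$ by Theorem~\ref{thm-parikh-binary-dfa}, and each power $a_e^{p_e}$ and $d_e^{\,B+1-p_e}$ by Remark~\ref{remark-factorial} (base and exponent are binary integers bounded by $2^{\mathrm{poly}}$ and polynomial-time computable), so the summand is in $\CH$ by Lemma~\ref{lemma-add-mult-CH}, and then so are $A^{+}$, $A^{-}$ and $M$ itself by Lemma~\ref{lemma-add-mult-CH}, Remark~\ref{remark-restricted-sum-product} and Remark~\ref{remark-factorial}. By Lemma~\ref{lemma-division-CH} the modified difference, call it $A$, is in $\CH$; moreover $A\le M\le 2^{2^{\mathrm{poly}}}$ and $\PM(K_\chain\models\varphi)=A/M$.

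It remains to read off the answers. For the \textsc{Cost problem}, with $\tau=s/r$ ($s,r$ binary, $r>0$), we have $\PM(K_\chain\models\varphi)\ge\tau$ iff $A\,r\ge M\,s$, i.e.\ iff \emph{not} $M\,s>A\,r$; since $A\,r$ and $M\,s$ are products of $\CH$-functions, Lemma~\ref{lemma-comparision-CH} puts the predicate ``$M\,s>A\,r$'' in $\CH$, and $\CH$ is closed under complement. For \BitProb, the $j$-th bit of $A/M\in[0,1]$ is the least significant bit of $\lfloor 2^{j}A/M\rfloor$; as $j$ is given in binary, $2^{j}\le 2^{2^{\mathrm{poly}}}$ and $2^{j}\in\CH$ (its bit-graph is trivial), whence $2^{j}A\in\CH$ (Lemma~\ref{lemma-add-mult-CH}), $\lfloor 2^{j}A/M\rfloor\in\CH$ (Lemma~\ref{lemma-division-CH}), and its least significant bit $=\max\{0,\ \lfloor 2^{j}A/M\rfloor-2\lfloor 2^{j-1}A/M\rfloor\}\in\CH$ (Lemma~\ref{lemma-division-CH} again), the case $j=0$ being trivial. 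I expect the main obstacle to be the middle step: arguing that after the zero-cost contraction the relevant first-visit paths are short enough that the Parikh-vector sum has an exponentially bounded, polynomially encodable index set and $M$ is a legitimate doubly-exponential common denominator, together with the factorisation of the path probability as $N(\A_\chain,\vec{p})\prod_e\Delta(e)^{p_e}$ — it is exactly this factorisation that makes Theorem~\ref{thm-parikh-binary-dfa} and the $\CH$-toolbox applicable; the remaining manipulations are routine.
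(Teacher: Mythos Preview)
Your proposal is correct and follows essentially the same approach as the paper: apply Lemma~\ref{lemma-zero-cost-edges}, bound the length of first-visit paths by $B+1$, factor the probability of each Parikh class through $N(\A_\chain,\vec{p})$ via Theorem~\ref{thm-parikh-binary-dfa}, clear denominators with $M=\prod_e d_e^{B+1}$, and invoke the \CH\ toolbox (Lemmas~\ref{lemma-add-mult-CH}--\ref{lemma-comparision-CH} and Remarks~\ref{remark-factorial}, \ref{remark-restricted-sum-product}). The only cosmetic difference is that the paper handles the case where the satisfying set of~$\varphi$ is infinite by passing to~$\neg\varphi$, whereas you treat both cases uniformly via the constant~$\beta$ and a single modified difference.
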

\begin{proof}
Let us first show the statement for \BitProb.
Let $\chain$ be a cost chain and $\varphi$ be a cost formula. We first derive a formula for the probability
$\PM(K_\chain \models \varphi)$.
By Lemma~\ref{lemma-zero-cost-edges} we can assume that all zero-cost edges in $\chain'$ lead to
the target~$t$.
For any edge $e\in E$,
let $\Delta(e)=m_e/d_e$ be the probability of $e$. Recall that the numbers
$m_e$ and $d_e$ are given in binary notation.
Given a finite prefix of a run $u=s_0s_1\cdots s_k$ with the
corresponding set of edges $e_1,\ldots, e_k$, the Parikh image
$\vec{p}\colon E\to \N$ of $w$ is defined as expected and denoted by
$\parikh(u)$. Let $c$ be the maximum constant occurring in $\varphi$
and assume for now that the set $\links \varphi \rechts$
is finite; we will deal with the infinite case in due course. In
particular, this implies that $\links \varphi \rechts \subseteq \{0, \ldots, c\}$.
Given a Parikh vector
 $\vec{p}\colon E\to \N$, we define
\begin{multline*}
  W_{\vec{p}}\defeq \{ w\in \Run((q_0,0)) : w=u \cdot
  (t,k)^\omega,\\ u \in ((Q \setminus \{t\}) \times \N)^*, \parikh(u \cdot
  (t,k))=\vec{p} \}.
\end{multline*}
Note that $W_{\vec{p}}$ is a finite set, and denote its cardinality by
$N(\chain,\vec{p})\defeq \# W_{\vec{p}}$. Moreover, we set
\[
K(\vec{p})\defeq \sum_{e \in E} k(e) \cdot \vec{p}(e)
\] 
and, for $i \geq 0$, $K^{-1}(i) \defeq \{ \vec{p} \in \N^E :
K(\vec{p}) = i \}$.  With these definitions, we have:
\allowdisplaybreaks
\begin{align*}
\PM(K_\chain \models \varphi) & =  \sum_{i \in \links\varphi\rechts} \PM(K_\chain = i) \nonumber \\
& = \sum_{i \in \links\varphi\rechts} \sum_{\vec{p}\in K^{-1}(i)} \PM( W_{\vec{p}}) \nonumber \\
& = \sum_{i \in \links\varphi\rechts} \sum_{\vec{p}\in K^{-1}(i)} N(\chain,\vec{p})\cdot 
\prod_{e\in E} \Delta(e)^{\vec{p}(e)} \nonumber \\
& = \frac{\displaystyle \sum_{i \in \links\varphi\rechts} \sum_{\vec{p}\in K^{-1}(i)} N(\chain,\vec{p})\cdot 
\prod_{e\in E} m_e^{\vec{p}(e)}  d_e^{c+1-\vec{p}(e)}}{\displaystyle \prod_{e\in E} d_e^{c+1}}. \end{align*}
Note that $c+1-\vec{p}(e) \geq 0$ for every $i \in
\links\varphi\rechts$, $\vec{p}\in K^{-1}(i)$, and $e \in E$: indeed, $i \in
\links\varphi\rechts$ implies that $i \leq c$, and the fact that all
zero-cost edges lead to the target implies that $\sum_{e \in E}
\vec{p}(e) \leq c+1$ whenever $K(\vec{p}) = i \leq c$.

From the above formula for $\PM(K_\chain \models \varphi)$, it follows
that the $j$-th bit of $\PM(K_\chain \models \varphi)$ is the least significant
bit of the following integer:
\[
P(\chain,\varphi,j) \defeq 
\left\lfloor
\frac{\displaystyle \sum_{i \in \links\varphi\rechts} \sum_{\vec{p}\in K^{-1}(i)} \!\!\! 2^j \,  N(\chain,\vec{p})
\prod_{e\in E} m_e^{\vec{p}(e)}  d_e^{c+1-\vec{p}(e)}}{\displaystyle \prod_{e\in E} d_e^{c+1}} \right\rfloor
\]
Let us fix a standard binary encoding of the cost chain $\chain$ and
the cost formula $\varphi$ so that the definitions and results from
Section~\ref{sec-aux} are applicable.  By
Theorem~\ref{thm-parikh-binary-dfa}, the mapping $(\chain,\vec{p})
\mapsto N(\chain,\vec{p})$ can be computed in \CH. Hence,
Lemma~\ref{lemma-add-mult-CH} and \ref{lemma-division-CH} (see also
Remark~\ref{remark-factorial} and \ref{remark-restricted-sum-product}
after Lemma~\ref{lemma-add-mult-CH}) imply that the function
$P(\chain,\varphi,j)$ belongs to \CH\ (which implies that
\BitProb\ belongs to \CH). For this, note that for given $\chain$,
$\varphi$, $i \leq c$ and $\vec{p}$ with $\sum_{e \in E} \vec{p}(e)
\leq c+1$ one can decide in polynomial time whether $i \in
\links\varphi\rechts$ and $K(\vec{p}) = i$. This allows to apply
Remark~\ref{remark-restricted-sum-product} and concludes the proof in
case $\links \varphi \rechts$ is finite.

If $\links \varphi \rechts$ is not finite then $\links \neg \varphi
\rechts$ is finite. Moreover, we have $\PM(K_\chain \models \varphi) =
1 - \PM(K_\chain \models \neg\varphi)$.  Hence, we have to compute the
least significant bit of the number $\lfloor 2^j - 2^j \cdot \PM(K_\chain \models
\neg\varphi) \rfloor$. This can be done in \CH\ by using again the
above formula for $\PM(K_\chain \models \neg\varphi)$ and the lemmas
from Section~\ref{sec-aux}.

In order to show that the cost problem belongs to \CH, we can use the
same line of arguments. We first consider the case that $\links
\varphi \rechts$ is finite. Let $\tau = m/d$ be the threshold probability
from the cost problem. Then we have to check whether
$$
\sum_{i \in \links\varphi\rechts} \sum_{\vec{p}\in K^{-1}(i)} d \cdot N(\chain,\vec{p})\cdot 
\prod_{e\in E} m_e^{\vec{p}(e)}  d_e^{c+1-\vec{p}(e)} \geq m \cdot \prod_{e\in E} d_e^{c+1}.
$$
This can be checked in \CH\ as above, where in addition we have to use Lemma~\ref{lemma-comparision-CH}.
Finally, if $\links \varphi \rechts$ is not finite (but $\links \neg \varphi \rechts$ is finite), then we check as above whether
$\PM(K_\chain \models \neg \varphi) \le 1 - \tau$. 
\end{proof}

\section{CH Upper Bounds for DFA}\label{ssec-best}
\makeatletter{}In this section, we prove Theorem~\ref{thm-parikh-binary-dfa}.  To
this end, we will apply two classical results from graph theory:
Tutte's matrix-tree theorem and the BEST theorem which we introduce
first.
 
Let $G = (V,E,s,t)$ be a (finite directed) multi-graph as defined in
Section~\ref{sec-graphs}.  We call $G$ Eulerian if $d_G^-(v)=d_G^+(v)$ for
all $v\in V$. An Eulerian circuit is a path $e_1, e_2, \ldots, e_n$
such that $E = \{e_1, e_2, \ldots, e_n\}$, $e_i \neq e_j$ for $i \neq
j$, and $t(e_n) = s(e_1)$.  Let us denote by $e(G)$ the number of
Eulerian circuits of $G$, where we do not distinguish between the
Eulerian circuits $e_1, e_2, \ldots, e_n$ and $e_i, e_{i+1}, \ldots,
e_n, e_1, \ldots, e_{i-1}$. Alternatively, $e(G)$ is the number of
Eulerian circuits that start with a distinguished starting node.

Let $G= (V,E,s,t)$ be a connected loop-free multi-graph and assume
that $V = \{1,\ldots,n\}$. The adjacency matrix of $G$ is $A(G) =
(a_{i,j})_{1 \leq i,j \leq n}$, where $a_{i,j} \defeq \# \{ e \in E :
s(e)=i, t(e)=j\}$ is the number of edges from $i$ to $j$.  The
out-degree matrix $D^-(G) = (d_{i,j})_{1 \leq i,j \leq n}$ is defined
by $d_{i,j}\defeq 0$ for $i \neq j$ and $d_{i,i} \defeq d_G^-(i)$. The
Laplacian $L(G)$ of $G$ is $L(G) \defeq D^-(G) - A(G)$.  If $M$ is a
matrix then we denote by $M^{i,j}$ the $(i,j)$ minor of $M$, i.e., the
matrix obtained from $M$ by deleting its $i$-th row and $j$-th
column. By Tutte's matrix-tree theorem (see
e.g. \cite[p.~231]{Aigner2007}), the number $t(G,i)$ of directed
spanning trees oriented towards vertex $i$ (i.e., the number of
sub-graphs $T$ of $G$ such that (i) $T$ contains all nodes of $G$,
(ii) $i$ has out-degree $0$ in $T$ and (iii) for every other vertex $j
\in V \setminus \{i\}$ there is a unique path in $T$ from $j$ to $i$)
is equal to $(-1)^{i+j}\cdot \det(L(G)^{i,j})$, where $j \in V$ is
arbitrary. In particular, $t(G,i) = \det(L(G)^{i,i})$. Moreover, when
$G$ is Eulerian, then $t(G,i) = t(G,j)$ for all $i,j \in V$
\cite[p.~236]{Aigner2007} and we denote this number with $t(G)$.  If
$G$ is not loop-free then we set $t(G) \defeq t(G')$, where $G'$ is
obtained from $G$ by removing all loops. Since loops are not relevant
for counting the number of spanning trees, $t(G)$ still counts the
number of spanning trees (oriented towards an arbitrary vertex).

Assume now that $G= (V,E,s,t)$ is a connected Eulerian multi-graph and
assume without loss of generality that $V = \{1,\ldots,n\}$. Then, by
Tutte's matrix-tree theorem we have $t(G) = \det(L(G)^{1,1})$.  The
BEST theorem (named after de Bruijn, van Aardenne-Ehrenfest, Smith and
Tutte who discovered it) allows for computing the number $e(G)$ of
Eulerian paths of a connected Eulerian multi-graph, see
e.g. \cite[p.~445]{Aigner2007}:

\begin{theorem}[BEST theorem]\label{thm:best}
  Let $G= (V,E,s,t)$ be a connected Eulerian multi-graph. Then we have
  \[
  e(G) = t(G) \cdot \prod_{v \in V} (d_G^-(v) - 1)! .
  \]
\end{theorem}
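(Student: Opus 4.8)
The plan is to establish a bijection between Eulerian circuits of $G$ that begin at a fixed vertex $v_0$ and pairs consisting of (i) a spanning tree oriented towards $v_0$, together with (ii) for each vertex $v$, a linear ordering of the outgoing edges of $v$ in which, when $v \neq v_0$, the unique tree edge leaving $v$ comes last. Counting the latter objects directly gives $t(G) \cdot \prod_{v \neq v_0}(d_G^-(v)-1)!$; since $G$ is Eulerian we have $d_G^-(v) = d_G^+(v)$ for every $v$, and one checks that extending the product to include the factor $(d_G^-(v_0)-1)!$ is accounted for by the freedom in the ordering at $v_0$ (the last outgoing edge at $v_0$ is the one used to close the circuit, so it is determined once the circuit is read, leaving $(d_G^-(v_0)-1)!$ choices there as well). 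This yields $e(G) = t(G)\cdot\prod_{v\in V}(d_G^-(v)-1)!$.

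First I would fix a starting vertex $v_0$ and, given an Eulerian circuit starting at $v_0$, define the associated tree: for each vertex $v \neq v_0$, let its tree edge be the \emph{last} edge leaving $v$ that is used in the circuit. I would then verify that the edges so selected form a spanning tree oriented towards $v_0$ — the key point is that there is no cycle among these edges, which follows because if $v \to v'$ is the last exit from $v$ and $v' \to v''$ the last exit from $v'$, then in the circuit the traversal $v\to v'$ occurs before $v'\to v''$ (otherwise, after using $v'\to v''$ one could never return to $v'$ to later use $v\to v'$ and then be stuck at $v'$, contradicting that the circuit is closed and uses every edge). Conversely, given a spanning tree $T$ towards $v_0$ and admissible orderings of the outgoing edges at every vertex, I would reconstruct the circuit greedily: start at $v_0$, and each time we arrive at a vertex, leave along the next unused outgoing edge in its prescribed order; the tree condition guarantees we never get stuck before exhausting all edges. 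Checking that these two maps are mutually inverse is the routine bookkeeping.

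The main obstacle I expect is the argument that the greedy reconstruction actually consumes \emph{all} edges and returns to $v_0$, i.e. that it cannot terminate prematurely at some vertex $v$ with edges still unused elsewhere. The clean way to see this: the process can only halt at $v_0$ (a standard degree-counting / parity argument using $d^-=d^+$ at every intermediate vertex). Suppose some edge is unused when the walk halts at $v_0$; then some vertex $v$ has an unused outgoing edge. Among all such $v$, the tree structure lets us walk up the tree from $v$ towards $v_0$ along tree edges; but each tree edge is the \emph{last} in its ordering, so if the tree edge out of a vertex $u$ is unused, every outgoing edge of $u$ is unused, hence $u$ was never visited, and inductively $v_0$'s only unused-tree-child situation propagates to show $v_0$ itself was "left" fewer times than it was "entered" — contradicting that we are stuck at $v_0$ having entered it. Tightening this into a clean contradiction is the one delicate step; everything else (well-definedness of the tree, the count $\prod_v (d_G^-(v)-1)!$, invoking Tutte's matrix-tree theorem to identify $t(G)$ with the determinant) is straightforward given the machinery already set up in the section.
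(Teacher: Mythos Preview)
The paper does not prove this theorem: it is quoted as the classical BEST theorem with a reference to Aigner's textbook, and is used as a black box. Your proposal is the standard bijective proof, and it is correct in outline.

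One point is muddled, however: your accounting for the factor $(d_G^-(v_0)-1)!$ at the root. You write that ``the last outgoing edge at $v_0$ is the one used to close the circuit, so it is determined,'' but closing the circuit means \emph{entering} $v_0$, not leaving it; the last outgoing edge of $v_0$ carries no special status. If you only fix the starting \emph{vertex} $v_0$, the bijection yields $d_G^-(v_0)!$ unconstrained orderings at $v_0$ and hence counts $t(G)\cdot d_G^-(v_0)\cdot\prod_{v}(d_G^-(v)-1)!$ Eulerian circuits as edge sequences beginning at $v_0$; dividing by $d_G^-(v_0)$ (each rotation class has exactly that many representatives beginning at $v_0$) recovers the stated formula. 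The cleaner route is to fix a starting \emph{edge} $e_0$ out of $v_0$: then each rotation class has a unique representative, the first outgoing edge at $v_0$ is pinned to $e_0$, and the factor $(d_G^-(v_0)-1)!$ drops out directly. Either variant works; just tighten that paragraph.
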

Let $G= (V,E,s,t,w)$ be a connected edge-weighted multi-graph. 
Then $G$ is Eulerian if $\tilde{G}$ (the corresponding unweighted multi-graph)
is Eulerian, and in this case we define $e(G)$ as the number of 
paths $e_1,e_2,\ldots,e_n$ such that $t(e_n) = s(e_1)$ and for every edge
$e$, $w(e) = \# \{ i : 1 \leq i \leq n, e = e_i \}$.
The following result is then a straightforward corollary of the BEST theorem:
\begin{corollary}\label{coro:best}
  Let $G= (V,E,s,t,w)$ be a connected Eulerian edge-weighted multi-graph.  
  Then we have
  \[
  e(G) = t(\tilde{G}) \cdot \frac{\prod_{v \in V} (d_G^-(v) - 1)!}{\prod_{e \in E} w(e)!}
  \]
\end{corollary}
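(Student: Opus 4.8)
The plan is to derive the corollary from the BEST theorem (Theorem~\ref{thm:best}) by a short counting argument that keeps track of the edge multiplicities $w(e)$. Passing from $G$ to $\tilde G$ only expands each edge --- each $e \in E$ is split into $w(e)$ parallel copies $e_1,\ldots,e_{w(e)}$, all with the same source and target as $e$ --- and hence does not change how many edges enter or leave a node; so, since $G$ is connected and Eulerian in the edge-weighted sense, $\tilde G$ is a connected Eulerian multi-graph, and by the conventions of Section~\ref{sec-graphs} we have $d^-_{\tilde G}(v)=d^-_G(v)$ (and likewise for out-degrees) for every $v\in V$. Applying Theorem~\ref{thm:best} to $\tilde G$ therefore gives
\[
e(\tilde G) \;=\; t(\tilde G)\cdot\prod_{v\in V}\bigl(d^-_G(v)-1\bigr)! ,
\]
and it remains only to establish the identity $e(\tilde G)=e(G)\cdot\prod_{e\in E}w(e)!$; the corollary then follows by dividing the displayed equation by $\prod_{e\in E}w(e)!$.

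For this identity I would introduce the forgetful map $\pi$ sending an Eulerian circuit $\hat e_1,\hat e_2,\ldots,\hat e_n$ of $\tilde G$ (with $n=\sum_{e\in E}w(e)$) to the closed edge-sequence $\phi(\hat e_1),\ldots,\phi(\hat e_n)$, where $\phi$ maps each parallel copy back to the edge of $G$ it was split from. Its image is exactly the set of closed edge-sequences counted by $e(G)$: each edge $e$ is traversed precisely $w(e)$ times, and $\pi$ respects the cyclic identification (resp.\ the choice of a distinguished starting node) used to define $e(\cdot)$. The heart of the argument is to show that $\pi$ is surjective and that all its fibres have size $\prod_{e\in E}w(e)!$: given a closed sequence $e_1,\ldots,e_n$ of $G$ in which $e$ occurs $w(e)$ times, a preimage is obtained by choosing, independently for each $e\in E$, a bijection from the $w(e)$ occurrences of $e$ in the sequence to the $w(e)$ parallel copies of $e$ in $\tilde G$; there are $\prod_{e\in E}w(e)!$ such choices, each of them yields a genuine Eulerian circuit of $\tilde G$ (every copy is used exactly once), and distinct choices yield distinct circuits. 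Hence $e(\tilde G)=e(G)\cdot\prod_{e\in E}w(e)!$, which completes the proof.

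The step I expect to require the most care is this last one --- not because it is deep, but because one has to line up the counting conventions for $e(\cdot)$ on the weighted and the unweighted side so that the fibre-size count really equals the constant $\prod_{e\in E}w(e)!$ and is undisturbed by passing to rotation classes (or by fixing a starting node). Once that bookkeeping is settled, the fibres of $\pi$ are uniformly sized and the identity above drops out; everything else is immediate from the definitions and from the BEST theorem.
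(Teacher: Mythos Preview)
Your proposal is correct and follows essentially the same approach as the paper: both derive the corollary from the BEST theorem applied to~$\tilde G$ together with the identity $e(\tilde G)=e(G)\cdot\prod_{e\in E}w(e)!$, justified by the observation that each Eulerian circuit of~$G$ lifts to exactly $\prod_{e\in E}w(e)!$ Eulerian circuits of~$\tilde G$ via independent permutations of the parallel copies. Your write-up is more careful about the forgetful map and the rotation-class bookkeeping, but the underlying argument is the same.
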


\begin{proof}
The result follows from the identity
$$
e(G) = \frac{e(\tilde{G})}{\prod_{e \in E} w(e)!}.
$$
To see this, note that every Eulerian circuit of $G$ corresponds to 
exactly $\prod_{e \in E} w(e)!$ many Eulerian circuits of $\tilde{G}$.
These Eulerian circuits are obtained by fixing for every $e \in E$ 
an arbitrary permutation of the $k = w(e)$ many edges $e_1, \ldots, e_k$ 
in $\tilde{G}$ that are derived from $e$. 
\end{proof}
We now turn towards the proof of
Theorem~\ref{thm-parikh-binary-dfa}. Let $\A=(Q,\Sigma,q_0,F,\Delta)$
be a DFA and fix a Parikh vector $\vec{p}$. We will use
Corollary~\ref{coro:best} to compute the number
$N(\A,\vec{p})$.  First define a new DFA $\A'$ as follows: Add a
fresh symbol $b$ to the alphabet together with all transitions
$(q_f, \omega, q_0)$ for $q_f \in F$; the initial state $q_0$ is the
only final state of $\A'$. Moreover, we extend the Parikh vector
$\vec{p}$ to a Parikh vector $\vec{p}'$ by $\vec{p}'(b)=1$ and
$\vec{p}'(a) = \vec{p}(a)$ for all $a \neq b$. Then we have
$N(\A,\vec{p}) = N(\A',\vec{p}')$. Hence, for the rest of this section
we may only consider DFA where the initial state is also the unique
final state. We call such a DFA {\em well-formed}. For a well-formed DFA
$\A=(Q,\Sigma,q_0,\{q_0\},\Delta)$ and $\vec{p}\colon \Sigma \to \N$
let $W(\A,\vec{p})$ be the set of all mappings $w\colon \Delta \to
\mathbb{N}$ such that the following two conditions hold:
\begin{itemize}
\item For every $a \in \Sigma$, we have $\vec{p}(a) = \sum_{(p,a,q)
  \in \Delta} w(p,a,q)$.
\item The edge-weighted multi-graph $\A^w \defeq (Q,\Delta,s,t,w)$
  (where $s(p,a,q) = p$ and $t(p,a,q)=q$) is connected and Eulerian.
\end{itemize}
With these definitions at hand, the following lemma is straightforward
to show.
\begin{lemma} \label{lemma-BEST->DFA}
Let $\A$ be a well-formed DFA. Then, we have
$$
N(\A,\vec{p}) = \sum_{w \in W(\A,\vec{p})} e(\A^w).
$$
\end{lemma}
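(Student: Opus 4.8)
The plan is to reduce the identity to a bijection. Fix $\vec p$. Since $\A=(Q,\Sigma,q_0,\{q_0\},\Delta)$ is a \emph{well-formed} DFA, its unique final state is $q_0$, so every $u\in L(\A)$ has an accepting run that starts and ends in $q_0$, and by determinism this run $\run=p_0p_1\cdots p_n$ is \emph{uniquely} determined by $u$; consequently so is the induced sequence of transitions $e_1,\ldots,e_n\in\Delta$ with $e_i=(p_{i-1},a_i,p_i)$ for $u=a_1\cdots a_n$. I will show that $u\mapsto(w_u,e_1\cdots e_n)$ is a bijection from $\{u\in L(\A):\parikh(u)=\vec p\}$ onto the set of pairs $(w,C)$ with $w\in W(\A,\vec p)$ and $C$ an Eulerian circuit of $\A^w$ starting at the distinguished node $q_0$; the latter set has size $\sum_{w\in W(\A,\vec p)}e(\A^w)$, where, as the paper notes, the distinguished starting node in the definition of $e(\cdot)$ is taken to be $q_0$. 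Determinism is exactly what makes the map $u\mapsto\run$ well defined and injective, which is why the lemma is restricted to DFA.

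For the forward map, given $u\in L(\A)$ with $\parikh(u)=\vec p$ and transition sequence $e_1,\ldots,e_n$ as above, define $w_u\colon\Delta\to\N$ by $w_u(e)=\#\{i:e_i=e\}$. Then for each $a\in\Sigma$ we get $\sum_{(p,a,q)\in\Delta}w_u(p,a,q)=|u|_a=\vec p(a)$, which is the first defining condition of $W(\A,\vec p)$. The sequence $e_1,\ldots,e_n$ is a closed walk (since $s(e_1)=p_0=q_0=p_n=t(e_n)$) that traverses each edge $e$ exactly $w_u(e)$ times, so in $\A^{w_u}$ every node is entered as often as it is left and the underlying graph is connected; hence $\A^{w_u}$ is connected and Eulerian and $w_u\in W(\A,\vec p)$. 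By the definition of $e(\cdot)$ for edge-weighted multi-graphs, $e_1,\ldots,e_n$ is precisely an Eulerian circuit of $\A^{w_u}$, and it starts in $q_0$.

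For the inverse, take $w\in W(\A,\vec p)$ and an Eulerian circuit $C=e_1,\ldots,e_n$ of $\A^w$ with $s(e_1)=q_0$; writing $e_i=(p_{i-1},a_i,p_i)$ with $p_0=q_0$ and $u_C=a_1\cdots a_n$, compatibility of $C$ with $\Delta$ makes $p_0p_1\cdots p_n$ a run of $\A$ on $u_C$ ending in $q_0$, hence accepting, and $|u_C|_a=\#\{i:a_i=a\}=\sum_{(p,a,q)\in\Delta}w(p,a,q)=\vec p(a)$, so $u_C\in L(\A)$ with $\parikh(u_C)=\vec p$. That the two maps are mutually inverse is immediate from determinism: $C$ is the one and only run of $\A$ on $u_C$, so $w_{u_C}=w$ and the circuit read back off $u_C$ is $C$; conversely the word read off the run of $u$ is $u$. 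Summing $e(\A^w)$ over $w\in W(\A,\vec p)$ therefore counts exactly the words on the left, proving the lemma.

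The one point needing a little care, and the only mild obstacle, is connectivity: $\A^w$ carries the whole vertex set $Q$, so states not visited by a given run appear as isolated vertices. I would handle this by reading ``connected'' for an edge-weighted multi-graph as connectivity of the sub-multi-graph spanned by the edges of positive weight (equivalently, assume $\A$ is trimmed and note that a closed walk always lives inside a single such component). With this convention the forward map indeed lands in $W(\A,\vec p)$, and conversely every $w\in W(\A,\vec p)$ has $e(\A^w)\ge 1$, since a connected Eulerian multi-graph has an Eulerian circuit through any of its non-isolated vertices, in particular through $q_0$.
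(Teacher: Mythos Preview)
Your bijection is correct and is precisely the ``straightforward'' argument the paper alludes to without spelling out. One quibble: the parenthetical ``equivalently, assume $\A$ is trimmed'' does not actually resolve the connectivity issue (even in a trimmed DFA a particular accepting run need not visit every state), so rely on your primary fix---read connectivity of $\A^w$ as connectivity of the subgraph induced by the edges of positive weight---which is in any case the reading forced by Corollary~\ref{coro:best}, since otherwise $(d_G^-(v)-1)!=(-1)!$ at unvisited vertices.
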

We are now fully prepared to give the proof of
Theorem~\ref{thm-parikh-binary-dfa}. The statement for \PIC\ is an
immediate corollary of the statement for \BitP\ and
Lemma~\ref{lemma-comparision-CH}. Hence, it suffices to show that
\BitP\ belongs to \CH. Consequently we have to show that the
mapping $(\A, \vec{p}) \mapsto N(\A,\vec{p})$ belongs to \CH, where
the Parikh vector $\vec{p}$ is encoded in binary and (by the above remark)
$\A$ is a well-formed DFA. Note that $N(\A,\vec{p})$ is doubly
exponentially bounded in the number of bits of $\vec{p}$ and the size
of the DFA $\A$.  From Lemma~\ref{lemma-BEST->DFA} we get
\[
N(\A,\vec{p}) = \sum_{w \in W(\A,\vec{p})} e(\A^w),
\]
where $e(\A^w)$ can be computed according to
Corollary~\ref{coro:best}. Using Lemma~\ref{lemma-add-mult-CH} and
\ref{lemma-division-CH} we can show that
the mapping $(\A, \vec{p}) \mapsto N(\A,\vec{p})$ belongs to \CH.  For
this, note that for a given mapping $w\colon \Delta \to \mathbb{N}$
one can check in polynomial time whether $w \in
W(\A,\vec{p})$. Moreover, from $w$ and $\A$ one can construct the
edge-weighted multi-graph $\A^w$ in polynomial time. Finally note that
by Tutte's matrix-tree theorem, the number $t(\tilde{\A}_w)$ is a
determinant that can be computed in polynomial time from the edge
weights of $\A_w$, i.e., the values of $w$.
This concludes the proof of Theorem~\ref{thm-parikh-binary-dfa}.

It would be interesting to reduce the upper bound for \BitP\ in
Theorem~\ref{thm-parikh-binary-dfa} further to \textsc{BitSLP} (the
problem of computing a given bit in the number computed by an
arithmetic circuit). But it might be difficult to come up with an
arithmetic circuit for computing $N(\A,\vec{p})$. Note that factorials
appear in the formula from Corollary~\ref{coro:best}.  It is well
known that the existence of polynomial-size arithmetic circuits for
the factorial function implies that integer factoring can be done in
non-uniform polynomial time \cite{Str76}, see also \cite{Burgisser09}.

\section{More on Counting Problems for Parikh Images} \label{sec-further}

In this section, we prove further complexity results for \PIC\ shown
in Table~\ref{tab:main-results} and also the counting problem \countP,
which complement Theorem~\ref{thm-parikh-binary-dfa} (similar results
can also be obtained for \BitP). Due to space constraints, we focus on
DFA-related results; however, most proofs are
deferred to the appendix.

\subsection{Further Results for DFA}
\makeatletter{}In Section~\ref{ssec-best} we proved that \BitP\ and \PIC\ belong to
\CH\ for DFA over a variable alphabet (meaning that the alphabet is
part of the input) and Parikh vectors encoded in binary. Moreover, in
\cite{HK15} it was shown that \PIC\ is \PosSLP-hard for DFA over a
variable alphabet and Parikh vectors encoded in binary (and the proof
shows that in this case \BitP\ is \BitSLP-hard). The variable alphabet
and binary encoding of Parikh vectors are crucial for the proof of the
lower bound. In this section, we show several other results for
DFA when the alphabet is not unary. The results of this section are
collated in the following proposition.
\begin{restatable}{proposition}{propDFAResults}\label{prop:dfa-results}
  For DFA,
  \begin{enumerate}[(i)]
  \item \countP\ (resp.~\PIC) is \sharpL-complete
    (resp.~\PL-complete) for a \emph{fixed} alphabet of size at least two and Parikh
    vectors encoded in \emph{unary};
  \item \countP\ (resp.~\PIC) is \sharpP-complete (resp.~\PP-complete)
    for a \emph{variable} alphabet and Parikh vectors encoded in
    \emph{unary}; and
  \item \PIC\ is \PosMatPow-hard for a \emph{fixed binary} alphabet
    and Parikh vectors encoded in \emph{binary}.
  \end{enumerate}
\end{restatable}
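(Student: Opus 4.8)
The plan is to treat the three parts separately; (i) and (ii) rest on routine product/path‑counting arguments, whereas (iii) needs a genuinely new construction, which I expect to be the main obstacle.

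\emph{Parts (i) and (ii).} For the upper bounds, note that when $\vec p$ is given in unary every word with $\parikh$-image $\vec p$ has length $n=\sum_a\vec p(a)=\mathrm{poly}$. Take the product of the DFA $\A$ with the acyclic ``counting automaton'' whose states record how many copies of each letter have been read (bounded componentwise by $\vec p$); this is a polynomial-size acyclic graph---polynomial in case (i) because the alphabet is fixed---in which $N(\A,\vec p)$ is the number of paths from the initial vertex to an accepting vertex tagged $\vec p$, an \sharpL\ computation, so \countP\ is in \sharpL\ and, by the characterisation of \PL\ through differences of \sharpL\ functions recalled in Section~\ref{ssec:complexity}, \PIC\ is in \PL. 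In case (ii) the same automaton has polynomial size in the unary size of $\vec p$, and a nondeterministic polynomial-time machine that guesses a length-$n$ word over $\Sigma$, simulates $\A$ on it and checks its Parikh image has exactly $N(\A,\vec p)$ accepting paths, so \countP\ is in \sharpP\ and \PIC\ is in \PP. For the lower bounds I reduce parsimoniously. For (i), start from the canonical \sharpL-complete problem of counting $s$--$t$ paths in a DAG on $N$ vertices (with $t$ a sink): build a DFA over $\{a,b\}$ whose states are pairs (current vertex, step counter $\le N-1$), which in a vertex of out-degree $d$ reads the $\lceil\log N\rceil$-bit index of the chosen out-edge (or of a ``stay'' move, available only at $t$) with each bit written as the two-letter block $ab$ or $ba$, and accepts in state $(t,N-1)$. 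Every bit contributes one $a$ and one $b$, so all accepted words share the Parikh image $\big((N-1)\lceil\log N\rceil,(N-1)\lceil\log N\rceil\big)$, and---since the graph is acyclic and $t$ a sink---the accepted words of that image correspond bijectively to the $s$--$t$ paths; comparing two such instances (padded to a common $N$) gives \PL-hardness of \PIC. For (ii), reduce from the permanent of a matrix $A\in\{0,1\}^{n\times n}$ (counting perfect matchings in a bipartite graph), \sharpP-complete under parsimonious reductions: over the alphabet $\{1,\dots,n\}$ let $\A$ have states $1,\dots,n$, moving from $i$ to $i+1$ on a column $c$ with $A_{i,c}=1$ and accepting after $n$ letters; with Parikh vector $(1,\dots,1)$ the counted words are exactly the permutations $\sigma$ with $A_{i,\sigma(i)}=1$, so $N(\A,(1,\dots,1))=\mathrm{perm}(A)$, and padding to a common dimension gives \PP-hardness of \PIC.

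\emph{Part (iii): setup.} Reduce from \PosMatPow: given $M\in\Z^{m\times m}$, a linear functional $f$ with integer coefficients $f_{ij}$, and $n\in\N$ in binary, decide $f(M^n)\ge 0$. The core gadget is a polynomial-size DFA over $\{0,1\}$ that---helped by a \emph{binary} Parikh vector---counts length-$n$ walks in the succinct multigraph determined by $M$: regard $M$ as a graph on $[m]$ with $|M_{uv}|$ parallel $u\to v$ edges, each carrying the sign of $M_{uv}$, and encode a walk as a sequence of ``move blocks'', each block consisting of $\lceil\log m\rceil$ bits naming the next vertex and $\lceil\log(1+\max_{uv}|M_{uv}|)\rceil$ bits naming which parallel edge is used. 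A DFA whose state records the current vertex, the next vertex being decoded, the progress within the current block, a comparison status, and the parity of the number of negative edges used so far checks each block locally and accepts precisely at a block boundary, at a distinguished vertex, with a distinguished parity. The crucial point is that the automaton \emph{never mentions $n$}: applying the encoding $0\mapsto 01$, $1\mapsto 10$ to every bit makes the Parikh image of an accepted word depend only on the \emph{number} of blocks, so declaring $\vec p$ to be the value corresponding to exactly $n$ blocks---an integer of magnitude $n\cdot\mathrm{poly}(m)$, hence of polynomially many binary digits---isolates precisely the length-$n$ walks. Thus $N(\cdot,\vec p)$ of this gadget equals $\mathrm{Ev}_{uv}$ (resp.\ $\mathrm{Od}_{uv}$): the number of length-$n$ walks from $u$ to $v$ traversing an even (resp.\ odd) number of negative edges, each walk weighted by the product of the edge multiplicities; by the definition of $M^n$ one has $(M^n)_{uv}=\mathrm{Ev}_{uv}-\mathrm{Od}_{uv}$.

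\emph{Part (iii): finishing.} Expand $f(M^n)=\sum_{ij}f_{ij}(\mathrm{Ev}_{ij}-\mathrm{Od}_{ij})$ and group by the sign of $f_{ij}$ to get $f(M^n)=S_1-S_2$ with $S_1,S_2\ge 0$, each a non-negative integer combination of the quantities $\mathrm{Ev}_{uv},\mathrm{Od}_{uv}$, with coefficients $c^{(\iota)}_{uv}$ read off $f$. Prepend to the walk gadget a ``selector'' that reads $u$, then $v$, then a number $\gamma$ (all in binary, bit-doubled) and continues into the appropriate $\mathrm{Ev}/\mathrm{Od}$ gadget iff $\gamma<c^{(\iota)}_{uv}$; padding the selector blocks to a common length yields DFAs $\A,\B$ over $\{0,1\}$ with a common binary Parikh vector $\vec p$ such that $N(\A,\vec p)=S_1$ and $N(\B,\vec p)=S_2$. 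Since $f(M^n)\ge 0\iff S_1\ge S_2\iff S_1+1>S_2$, routing one unused selector value into a small gadget contributing exactly one further accepted word of Parikh image $\vec p$ produces $\A'$ with $N(\A',\vec p)=S_1+1$, so $(\A',\B,\vec p)$ is the desired \PIC\ instance; forming $|M|$, reading off the $c^{(\iota)}_{uv}$, computing the bit-lengths and the single product $n\cdot\mathrm{poly}(m)$, and writing down the automata are all logspace. I expect the difficulty to lie entirely in this construction: the DFA cannot count up to the exponentially large $n$, so one must push $n$ into the binary Parikh vector and, via bit-doubling, make the Parikh image blind to the word's content so that this single numeric constraint fixes the walk length; the sign handling via a parity bit and the assembly of the linear combination via the selector are then routine bookkeeping, as is keeping everything polynomial-size and logspace.
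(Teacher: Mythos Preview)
Your proposal is essentially correct and close in spirit to the paper, but the execution differs in places, and there is one minor gap.

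\textbf{Parts (i) and (ii).} Your upper bounds coincide with the paper's. For the lower bound in~(i), the paper also reduces from counting $s$--$t$ paths in a DAG, but instead of your bit-doubled edge-index encoding it uses a simpler device (its Lemma~\ref{lem-graph-to-DFA}): after adding a self-loop at~$t$, each edge of a vertex of out-degree at most~$d$ is replaced by a length-$d$ path carrying exactly one~$a$ and $d-1$ letters~$b$, arranged so that the position of the~$a$ determines the edge deterministically. Both encodings force a uniform Parikh image; yours is slightly more compact, the paper's avoids the intermediate comparison state. For~(ii) the paper reduces from \textsc{\#3SAT} with a bespoke variable-size alphabet. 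Your permanent reduction is more elegant, but your claim that the $0/1$ permanent is \sharpP-complete under \emph{parsimonious} logspace reductions is, as far as I know, not established: Valiant's reduction carries a multiplicative factor. Since the paper insists on parsimonious reductions for counting classes, you should either justify this claim or switch to a source problem (such as \textsc{\#3SAT} or \textsc{\#HamCycle}) where parsimony is standard; the latter fits your template equally well.

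\textbf{Part (iii).} Both proofs rest on the same two ideas: (a)~track the parity of negative-sign edges so that $(M^n)_{uv}$ splits as a difference of two nonnegative path counts, and (b)~make the Parikh image record only the walk \emph{length}, so that the binary-encoded~$n$ lands in~$\vec p$ rather than in the automaton. The paper realises this via a modular chain of graph-level lemmas: first build an edge-weighted multigraph on vertices $v_k^+,v_k^-$ encoding~$|M_{k\ell}|$ and the signs (Lemma~\ref{lem-MatPower-gadget}), then glue in the linear functional via weighted edges to fresh vertices $v_0,v^+,v^-$ (Lemma~\ref{lem-PosMat-to-multigraph}), then expand binary weights into unweighted paths of a common length~$k$ (Lemma~\ref{lem-PosMat-multigraph-to-graph}), add~$+1$ by a local surgery (Lemma~\ref{lem-add-one-path}), and finally convert to a DFA over $\{a,b\}$ by the same one-$a$-per-block trick as in~(i) (Lemma~\ref{lem-graph-to-DFA}). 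You instead build a single DFA directly, using bit-doubling for~(b), a parity bit for~(a), and a prepended selector block to realise the coefficients of~$f$. The two constructions are interchangeable; the paper's decomposition is reusable across several results (indeed Lemma~\ref{lem-graph-to-DFA} is also what drives~(i)), while yours keeps everything in one place and makes the role of the binary Parikh vector more explicit.
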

\begin{proof}[Proof of Proposition~\ref{prop:dfa-results}(i)]
  \iftechrep{The lower bound for \sharpL\ follows via a reduction from the
    canonical \sharpL-complete problem of computing the
    number of paths between two nodes in a directed acyclic
    graph~\cite{MV97}, and for the \PL\ lower bound one reduces from the problem
    whether the number of paths from $s$ to $t_0$ is larger than the number of paths from $s$ to $t_1$;
    see the appendix.}{The lower
    bound follows via a reduction from the canonical
    \sharpL-complete problem of computing the number of
    paths between two nodes in a directed acyclic graph~\cite{MV97}.}

  For the upper bound, let $\A$ be a DFA over a fixed alphabet and
  $\vec{p}$ be a Parikh vector encoded in unary.  A non-deterministic
  logspace machine can guess an input word for $\A$ symbol by
  symbol. Thereby, the machine only stores the current state of $\A$
  (which needs logspace) and the binary encoding of the Parikh image
  of the word produced so far. The machine stops when the Parikh
  image reaches the input vector $\vec{p}$ and accepts iff the current
  state is final. Note that since the input
  Parikh vector $\vec{p}$ is encoded in unary notation, all numbers
  that appear in the accumulated Parikh image stored by the machine
  need only logarithmic space. Moreover, since the alphabet has fixed
  size, logarithmic space suffices to store the whole Parikh image.
  The number of accepting computations of the machine is  exactly $N(\A,\vec{p})$, 
  which yields the upper bound for \sharpL\ as well as for \PL.
\end{proof}

\begin{proof}[Proof of Proposition~\ref{prop:dfa-results}(ii)] 
We prove hardness of \countP\ by a reduction from the 
\sharpP-complete counting problem \textsc{\#3SAT}, see
e.g.~\cite[p.~442]{Pap94}: Given a Boolean formula
$\psi(X_1,\ldots,X_n)$ in 3-CNF, compute the number
of satisfying assignments for $\psi$. Let $\psi$ be of the form
$\psi(X_1,\ldots,X_n) = \bigwedge_{1\le i\le k} C_i$, where $C_i$
is the clause $C_i = \ell_{i,1} \vee \ell_{i,2} \vee \ell_{i,3}$ and each $\ell_{i,j}$ is a literal. We define
the alphabet $\Sigma$ used in our reduction as
\begin{multline*}
  \Sigma \ \defeq \{ x_i,\overline{x_i} : 1\le i\le n\} \cup \{ c_i :
  1\le i \le k \} \ \cup \\ 
  \cup \{ d_{i,j} : 1\le i\le k, 0 \le j \leq 2 \}.
\end{multline*}
The informal meaning of the alphabet symbol $x_i$ is that it indicates
that $X_i$ has been set to true, and symmetrically $\overline{x_i}$
indicates that $X_i$ has been set to false. Likewise, $c_i$ indicates
that clause $C_i$ has been set to true. The $d_{i,j}$ will be used as
dummy symbols in order to ensure that the automaton we construct is
deterministic.

\tikzset{initial text={}}
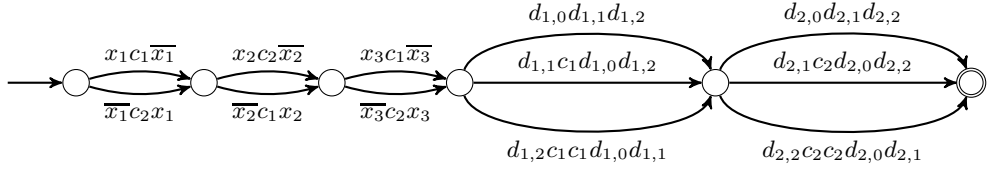
\begin{figure*}[t]
\begin{center}
  \begin{tikzpicture}[xscale=1.7,yscale=0.85,LMC style]
    \node[state, initial] (a) at (0,0) {};
    \node[state] (a1)  at (1,0) {};

    \path[->, bend left] (a) edge node[above] {$x_1c_1\overline{x_1}$} (a1);
    \path[->, bend right] (a) edge node[below] {$\overline{x_1}c_2x_1$} (a1);

    \node[state] (a2) at (2,0) {};

    \path[->, bend left] (a1) edge node[above] {$x_2c_2\overline{x_2}$} (a2);
    \path[->, bend right] (a1) edge node[below] {$\overline{x_2}c_1x_2$} (a2);    

    \node[state] (a3) at (3,0) {};

    \path[->, bend left] (a2) edge node[above] {$x_3 c_1 \overline{x_3}$} (a3);
    \path[->, bend right] (a2) edge node[below] {$\overline{x_3}c_2x_3$} (a3);    

    \node[state] (a4) at (5,0) {};

    \path[->, bend left=80] (a3) edge node[above] {$d_{1,0} d_{1,1}  d_{1,2}$} (a4);
     \path[->]               (a3) edge node[above] {$d_{1,1} c_1 d_{1,0} d_{1,2}$} (a4);
    \path[->, bend right=80] (a3) edge node[below] {$d_{1,2} c_1 c_1 d_{1,0} d_{1,1}$} (a4);    

    \node[state,accepting] (a5) at (7,0) {};

    \path[->, bend left=80] (a4) edge node[above] {$d_{2,0} d_{2,1} d_{2,2}$} (a5);
    \path[->]               (a4) edge node[above] {$d_{2,1} c_2 d_{2,0} d_{2,2}$} (a5);
    \path[->, bend right=80] (a4) edge node[below] {$d_{2,2} c_2 c_2 d_{2,0} d_{2,1}$} (a5);
  \end{tikzpicture}
  \caption{Illustration of the DFA for the reduction from an instance
    $\psi(X_1,X_2,X_3)$ of \textsc{\#3SAT}, where
    $\psi=C_1 \land C_2$ with $C_1=X_1 \lor \neg X_2 \lor X_3$ and
    $C_2=\neg X_1 \vee X_2 \vee \neg X_3$.}
  \label{fig:pp-hardness}
\end{center}
\end{figure*}

Let us now describe how to construct a DFA $\A$ and a Parikh vector
$\vec{p}$ such that $N(\A,\vec{p})$ is the number of satisfying
assignments for $\psi$. The construction of $\A$ is illustrated in
Figure~\ref{fig:pp-hardness}. We construct $\A$ such that it consists
of two phases. In its first phase, $\A$ guesses for every $1 \leq i \leq n$ a valuation of
$X_i$ by producing either (i) $x_i$ followed by all $c_j$ such that $C_j$ contains
$X_i$ followed by $\overline{x_i}$ or (ii) $\overline{x_i}$ followed by all $c_j$ such that $C_j$ contains
$\neg X_i$ followed by $x_i$.
Subsequently in its second phase, $\A$ may non-deterministically produce
at most $2$ additional symbols $c_i$ for every $1\le i\le k$ by
first producing $d_{i,j}$ followed by $j$ letters $c_i$ and all
$d_{i,g}$ such that $j\neq g$. This ensures that $\A$ remains
deterministic.

Now define the Parikh vector $\vec{p}$ over $\Sigma$ as
$\vec{p}(x_i)=\vec{p}(\overline{x_i}) \defeq 1$ for all $1\le i\le n$,
$\vec{p}(c_i)\defeq 3$ for all $1\le i\le j$, and
$\vec{p}(d_{i,j})\defeq 1$ for all $1\le i\le k$ and $0\le j \leq 2$.
The construction of $\A$ ensures that if $\A$ initially guesses a
satisfying assignment of $\psi$ then it can produce a word with Parikh
image $\vec{p}$ in a unique way, since then at least one alphabet
symbol $c_i$ was produced in the first phase of $\A$, and the second
phase of $\A$ can be used in order to make up for missing alphabet
symbols.

In order to show the \sharpP-upper bound for \countP, let $\A$ be a DFA and $\vec{p}$
be a Parikh vector encoded in unary. A non-deterministic
polynomial-time Turing machine can first non-deter\-ministically produce
an arbitrary word $w$ with $\parikh(w)=\vec{p}$. Then, it checks in
polynomial time whether $w\in L(\A)$, in which case it accepts.

The proof that \PIC\ is \PP-complete is similar and can be found in the appendix.
\end{proof}

\begin{remark}
  It is worth mentioning that the above \PP-lower bound together with
  the discussion after Proposition~5 in~\cite{HK15} improves the
  \PP-lower bound of the cost problem by yielding \PP-hardness under
  many-one reductions. In~\cite{HK15}, the cost problem is shown
  \PP-hard via a reduction from the $K$-th largest subset problem,
  which is only known to be \PP-hard under polynomial-time
  \emph{Turing reductions}~\cite{HK16}. In fact, it is not even known
  if this problem is \NP-hard under many-one
  reductions~\cite[p.\ 148]{HS11}.
\end{remark}

\begin{proof}[Proof of Proposition~\ref{prop:dfa-results}(iii)] As stated
in Section~\ref{sec-introduction}, the \PosMatPow\ problem asks, given
a square integer matrix $M \in \Z^{m \times m}$, a linear function $f
\colon \Z^{m \times m} \to \Z$ with integer coefficients, and a positive
integer~$n$, whether $f(M^n) \ge 0$. Unless stated otherwise,
subsequently we assume that all numbers are encoded in binary. Here,
we show that \PIC\ is \PosMatPow-hard for DFA over two-letter
alphabets and Parikh vectors encoded in binary. We first establish two
lemmas that will enable us to prove this proposition. It is well-known
that counting the number of paths in a directed graph corresponds to
matrix powering. In the following lemma, we additionally show that the
application of a linear function can be encoded in such a way as well.

\begin{restatable}{lemma}{lePosMatToMultigraph}\label{lem-PosMat-to-multigraph}
  Given a matrix $M \in \Z^{m \times m}$ and a linear function
  $f\colon \Z^{m \times m} \to \Z$ with integer coefficients, one can compute in
  logspace an edge-weighted multi-graph $G =
  (V,E,s,t,w)$ and $v_0, v^+, v^- \in V$ such that for all $n \in \N$
  we have $f(M^n) = N(G, v_0, v^+, n+2) - N(G, v_0, v^-, n+2)$.
\end{restatable}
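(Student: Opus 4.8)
The plan is to combine the standard fact that, for a non-negative integer matrix, $(M^n)_{ij}$ counts the length-$n$ walks from $i$ to $j$ in the associated weighted multi-graph, with a sign-splitting device that pushes both the negative entries of $M$ and the negative coefficients of $f$ into the second target vertex $v^-$. First I would write $M = M^+ - M^-$ with $M^+_{ij} = \max\{0,M_{ij}\}$ and $M^-_{ij} = \max\{0,-M_{ij}\}$, and form the $2m\times 2m$ non-negative integer matrix
\[
\widehat M \defeq \begin{pmatrix} M^+ & M^- \\ M^- & M^+ \end{pmatrix},
\]
with rows and columns indexed by pairs $(i,\sigma)$, $i\in\{1,\dots,m\}$, $\sigma\in\{+,-\}$. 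Conjugating the $2\times 2$ block structure by $\left(\begin{smallmatrix} I & I \\ I & -I\end{smallmatrix}\right)$ block-diagonalises $\widehat M$ with diagonal blocks $M^++M^-$ and $M^+-M^-$, so that $\widehat M^n = \left(\begin{smallmatrix} P_n & Q_n \\ Q_n & P_n\end{smallmatrix}\right)$ with $P_n-Q_n = (M^+-M^-)^n = M^n$; equivalently, $(M^n)_{ab} = (\widehat M^n)_{(a,+),(b,+)} - (\widehat M^n)_{(a,+),(b,-)}$ for all $a,b$. Since $\widehat M$ is the weighted adjacency matrix of an edge-weighted multi-graph on $\{1,\dots,m\}\times\{+,-\}$, the right-hand side is a difference of two weighted walk counts.

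To also encode $f$, I would write $f(X) = \sum_{a,b} c_{ab}X_{ab}$ and split each coefficient as $c_{ab} = c^+_{ab}-c^-_{ab}$ with $c^+_{ab},c^-_{ab}\ge 0$. I then take $G$ to be $m$ vertex-disjoint copies of the multi-graph of $\widehat M$ — writing $(a;i,\sigma)$ for vertex $(i,\sigma)$ of the $a$-th copy, so that the copy index $a$ is frozen along every edge internal to a copy — augmented by three fresh vertices $v_0,v^+,v^-$ and the edges: $v_0 \to (a;a,+)$ of weight $1$ for every $a$; and, for all $a,b$, the edges $(a;b,+)\to v^+$ of weight $c^+_{ab}$, $(a;b,-)\to v^+$ of weight $c^-_{ab}$, $(a;b,+)\to v^-$ of weight $c^-_{ab}$, and $(a;b,-)\to v^-$ of weight $c^+_{ab}$ (edges of weight $0$ being omitted). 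So $v_0$ is a pure source and $v^+,v^-$ are pure sinks; $G$ has $2m^2+3$ vertices; all weights are entries of $M^+$ or $M^-$, coefficients of $f$, or $1$; and computing $G$, $v_0$, $v^+$, $v^-$ from $M$ and $f$ is plainly a logspace task.

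It remains to verify the claimed identity. Because $v_0$ has no incoming edges, $v^+$ and $v^-$ have no outgoing edges, and the copies of $\widehat M$ are vertex-disjoint with the copy index frozen internally, every length-$(n+2)$ walk from $v_0$ to $v^+$ must have the form $v_0 \to (a;a,+) \to \cdots \to (a;b,\tau) \to v^+$ whose middle part is a length-$n$ walk inside copy $a$, hence contributing $(\widehat M^n)_{(a,+),(b,\tau)}$. Summing over such walks weighted by the edge weights gives
\[
N(G,v_0,v^+,n+2) = \sum_{a,b}\Bigl( (\widehat M^n)_{(a,+),(b,+)}\,c^+_{ab} + (\widehat M^n)_{(a,+),(b,-)}\,c^-_{ab} \Bigr),
\]
and the analogous expression for $v^-$ with $c^+$ and $c^-$ interchanged. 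Subtracting, and using $(M^n)_{ab} = (\widehat M^n)_{(a,+),(b,+)} - (\widehat M^n)_{(a,+),(b,-)}$, yields $N(G,v_0,v^+,n+2) - N(G,v_0,v^-,n+2) = \sum_{a,b} c_{ab}(M^n)_{ab} = f(M^n)$.

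I do not anticipate a real obstacle; the two points needing care are (i) ruling out spurious length-$(n+2)$ walks, which is forced by the in-/out-degree conditions on $v_0, v^+, v^-$ together with the frozen copy index (a walk can neither drift between copies nor stall at a sink), and (ii) tracking which of $(\widehat M^n)_{(a,+),(b,+)}$ and $(\widehat M^n)_{(a,+),(b,-)}$ is paired with $c^+_{ab}$ and which with $c^-_{ab}$ on the edges into $v^+$ versus $v^-$ — this pairing is exactly what makes the difference of the two walk counts collapse to $f(M^n)$. Finally, the $m$-fold blow-up into frozen copies is genuinely needed: a single copy of the graph of $\widehat M$ would let the difference realise only linear functionals of the restricted form $X \mapsto u^\top X v$, whereas $f$ may have full support.
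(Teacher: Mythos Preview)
Your proof is correct and follows essentially the same approach as the paper: both double the vertex set to $\{1,\dots,m\}\times\{+,-\}$ with the sign-crossing edges (your $\widehat M$ is exactly the weighted adjacency matrix of the paper's gadget from Lemma~\ref{lem-MatPower-gadget}), and both then attach a fresh source $v_0$ and sinks $v^\pm$ via disjoint copies to encode the coefficients of~$f$. Your version is slightly streamlined in two respects --- you verify $(M^n)_{ab} = (\widehat M^n)_{(a,+),(b,+)} - (\widehat M^n)_{(a,+),(b,-)}$ by block-diagonalising $\widehat M$ rather than by induction on~$n$, and you use only $m$ copies (one per starting row~$a$) where the paper redundantly takes $m^2$ copies (one per pair $(i,j)$, even though the underlying graph is the same each time) --- but neither change affects the argument.
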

\begin{proof}
  Denote by $b_{i,j} \in \Z$ the coefficients of~$f$, i.e., for $i, j
  \in \{1, \ldots, m\}$ let $b_{i,j} \in \Z$ such that for all $A \in
  \Z^{m \times m}$ we have $f(A) = \sum_{i=1}^m \sum_{j=1}^m b_{i,j}
  A_{i,j}$.

  \iftechrep{In the appendix in Lemma~\ref{lem-MatPower-gadget}, we
    show that}{In the full version of this paper we show that} for all
  $i,j \in \{1, \ldots, m\}$ one can compute in logspace an
  edge-weighted multi-graph~$G_{i,j}$ with vertex set~$V_{i,j}$, and
  vertices $v_{i,j}^0, v_{i,j}^+, v_{i,j}^- \in V_{i,j}$ such that for
  all $n \in \N$ we have:
  \begin{equation}\label{eq-lemPosMat-to-multigraph}
    M^n_{i,j} = N(G_{i,j}, v_{i,j}^0, v_{i,j}^+, n) - N(G_{i,j},
    v_{i,j}^0, v_{i,j}^-, n) \iftechrep{}{\notag}
  \end{equation}
  Compute the desired edge-weighted multi-graph~$G$ as follows. For
  each $i,j \in \{1, \ldots, m\}$ include in~$G$ (a fresh copy of) the
  edge-weighted multi-graph $G_{i,j}$.  Further, include in~$G$ fresh
  vertices $v_0, v^+, v^-$, and edges with weight~$1$ from $v_0$
  to~$v_{i,j}^0$, for each $i,j \in \{1, \ldots, m\}$.  Further, for
  each $i,j \in \{1, \ldots, m\}$ with $b_{i,j} > 0$, include in~$G$
  an edge from $v_{i,j}^+$ to~$v^+$ with weight~$b_{i,j}$, and an edge
  from $v_{i,j}^-$ to~$v^-$ with weight~$b_{i,j}$.  Similarly, for
  each $i,j \in \{1, \ldots, m\}$ with $b_{i,j} < 0$, include in~$G$
  an edge from $v_{i,j}^+$ to~$v^-$ with weight~$-b_{i,j}$, and an
  edge from $v_{i,j}^-$ to~$v^+$ with weight~$-b_{i,j}$. \iftechrep{In
    the appendix, we show}{In the full version of this paper, we show}
  that $f(M^n) = N(G, v_0, v^+, n+2) - N(G, v_0, v^-, n+2)$ for all
  $n\in \N$.
\end{proof}
The next lemma shows that one can obtain from an edge-weighted
multi-graph a corresponding DFA such that the number of paths in the
graph corresponds to the number of words with a certain Parikh image
accepted by the DFA. \iftechrep{The lemma is shown in a couple of
  intermeditate steps in the appendix.}{}
\begin{restatable}{lemma}{lemGraphToDFA}\label{lem:graph-to-dfa}
  Let $G = (V,E,s,t,w)$ be an edge-weighted multi-graph, $u,v\in V$
  and $k\in \N$. There exists a logspace-computable DFA $\A$ over a
  two-letter alphabet and a Parikh vector $\vec{p}$ such that
  $N(\A,\vec{p})=N(G,u,v,k)$.
\end{restatable}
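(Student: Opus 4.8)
The plan is to reduce the lemma to two logspace-computable steps. First I would eliminate the edge weights, producing an \emph{unweighted} multi-graph $H$ of polynomial size and a number $K$ with $N(G,u,v,k)=N(H,u,v,K)$; then I would encode length-$K$ walks of $H$ as words over $\{a,b\}$, designing the encoding so that a single Parikh vector forces a word to describe a walk of length exactly~$K$. This yields the DFA~$\A$.

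\emph{Step 1 (removing weights).} Delete every edge of weight~$0$, since it does not occur in $\tilde G$. For an edge $e$ of weight $w=w(e)\ge 1$, write $w=\sum_{i\in I}2^i$ with $I\subseteq\{0,\dots,L-1\}$ and $L=\lceil\log_2(w+1)\rceil$, and replace $e$ by a small unweighted gadget $\Gamma_e$ with an in-port $s_e$ and an out-port $t_e$: for each $i\in I$ attach a bundle consisting of a fresh chain of length $L-i$ from $s_e$ to a new node $p_i$, then $i$ consecutive pairs of parallel edges from $p_i$ to a new node $q_i$ (giving $2^i$ subpaths of length~$i$), and then a fresh chain of length~$L$ from $q_i$ to $t_e$. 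Each bundle contributes $2^i$ paths of length $(L-i)+i+L=2L$; the bundles are pairwise disjoint except for the two ports; hence $\Gamma_e$ has exactly $\sum_{i\in I}2^i=w$ port-to-port paths, all of length~$2L$, and $O(L^2)$ vertices. Pad all gadgets with chains to a common port-to-port length $\ell^*:=2\max_e\lceil\log_2(w(e)+1)\rceil$ and substitute $\Gamma_e$ for $e$, identifying $s_e$ and $t_e$ with the endpoints of~$e$; the result is a polynomial-size unweighted multi-graph~$H$. As original vertices touch gadgets only at ports and every port-to-port gadget path has length exactly~$\ell^*$, a walk of length $\ell^*k$ in $H$ from $u$ to~$v$ decomposes uniquely into $k$ gadget traversals, and summing over the choices inside each traversal yields $N(H,u,v,\ell^*k)=N(G,u,v,k)$. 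Set $K:=\ell^*k$.

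\emph{Step 2 (from $H$ to a DFA over $\{a,b\}$).} Let $r:=\lceil\log_2(|E_H|+1)\rceil$, where $E_H$ is the edge set of $H$, and fix an enumeration of the out-edges of every vertex. The DFA $\A$ reads its input in \emph{blocks} of length $2r$; a block is the ``dual-rail'' spelling of a number $c\in\{0,\dots,2^r-1\}$ in which bit~$0$ is written $ab$ and bit~$1$ is written $ba$, so that \emph{every} block contains exactly $r$ letters~$a$ and $r$ letters~$b$. From the state ``at vertex~$x$, at a block boundary'', reading a block that decodes to~$c$ moves to ``at vertex~$x'$, at a block boundary'', where $x'$ is the target of the $c$-th out-edge of~$x$ when $c<d_H^-(x)$, and the run is rejected otherwise; the unique accepting state is ``at vertex~$v$, at a block boundary''. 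Then $\A$ is deterministic, logspace-computable, and of polynomial size (a state stores the current vertex and, inside a block, the position and the value decoded so far, and $2^r=O(|E_H|)$). Accepted words are in bijection with out-edge sequences along $u$-$v$ walks of $H$ of \emph{arbitrary} length~$j$, and such a word has Parikh image $(rj,rj)$. Hence with $\vec p:=(rK,rK)$ the accepted words with Parikh image~$\vec p$ are exactly those made of $K$ blocks, i.e.\ the length-$K$ walks from $u$ to~$v$, so $N(\A,\vec p)=N(H,u,v,K)=N(G,u,v,k)$; and $\vec p$ is logspace-computable since $r$ and $K$ are.

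I expect the crux to be size control: both the weights $w(e)$ and the length $K$ are in general \emph{exponential}, so $\A$ itself cannot count up to them. Step~1 sidesteps the first issue by encoding a weight as the number of equal-length paths through a gadget with only $O(\log^2 w(e))$ vertices; Step~2 sidesteps the second by making every block carry a \emph{fixed} number of letters~$a$, so that the binary-encoded Parikh vector, rather than a counter inside~$\A$, is what pins down the number of blocks. Making these two devices compatible — fixed-composition blocks, logarithmic-size weight gadgets, and one target Parikh vector doing all the bookkeeping — is the part that will need the most care.
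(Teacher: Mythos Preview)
Your two-step plan matches the paper's own proof: it first passes through Lemma~\ref{lem-PosMat-multigraph-to-graph} to replace each weighted edge by a logarithmic-size gadget with $w(e)$ equal-length port-to-port paths, and then through Lemma~\ref{lem-graph-to-DFA} to encode walks in the resulting unweighted graph as words over $\{a,b\}$ with a fixed per-step Parikh contribution. The only notable difference is cosmetic: in Step~2 the paper encodes the $i$-th out-edge of a vertex by placing a single~$a$ at position~$i$ among $d-1$ letters~$b$ (yielding $\vec p=(K,K(d-1))$ with $d$ the maximal out-degree), whereas you use a dual-rail binary block with $r$ letters of each kind (yielding $\vec p=(rK,rK)$); both devices achieve the same goal of letting the Parikh vector, not the automaton, count steps.
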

In order to prove Proposition~\ref{prop:dfa-results}(iii),
Lemma~\ref{lem-PosMat-to-multigraph} allows us to obtain a multi-graph
$G=(V,E,s,t,w)$ such that $f(M^n)+1 = N(G, v_0, v^+, n+2) - N(G, v_0,
v^-, n+2)$ for vertices $v_0,v^+,v^-\in
V$. Lemma~\ref{lem:graph-to-dfa} yields DFA $\A,\B$ and Parikh vectors
$\vec{p},\vec{p}'$ such that $N(\A,\vec{p})=N(G, v_0, v^+, n+2)$ and
$N(\B,\vec{p}')= N(G, v_0, v^-, n+2)$. In fact, an inspection of the
proof of Lemma~\ref{lem:graph-to-dfa} shows that $\vec{p}=\vec{p}'$.
Hence, $f(M^n)\ge 0$ iff $f(M^n)+1>0$ iff
$N(\A,\vec{p})>N(\B,\vec{p})$.
\end{proof}

\subsection{Further Results for NFA and CFG}\label{section-NFA-CFG}
\makeatletter{}We now show the remaining results for NFA and CFG from
Table~\ref{tab:main-results} when the alphabet is not unary. The
following theorem states upper bounds for \PIC\ and \countP\ for NFA
and CFG.

\begin{restatable}{proposition}{propVariableUpper}\label{prop:nfa-cfg-variable-upper}
    For an alphabet of variable size, \countP\ (resp., \PIC) is in
  \begin{enumerate}[(i)]
  \item \sharpP\ (resp., \PP) for CFG with Parikh vectors encoded in unary;
  \item \sharpPSPACE\ (resp., \PSPACE) for NFA with Parikh vectors encoded in binary; and
  \item \sharpEXP\ (resp., \PEXP) for CFG with Parikh vectors encoded in binary.
  \end{enumerate}
\end{restatable}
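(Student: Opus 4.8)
The plan is to prove all three items by a uniform ``guess-and-check'' argument, the only quantity that varies being the length $n \defeq \sum_{a\in\Sigma}\vec{p}(a)$ of the words to be counted: $n$ is polynomial in the input when $\vec{p}$ is encoded in unary, and at most singly exponential when $\vec{p}$ is encoded in binary. In each setting I will exhibit a nondeterministic machine --- polynomial-time, polynomial-space, or exponential-time, respectively --- whose number of accepting computations equals $N(\A,\vec{p})$, which yields the stated bound for \countP. The bound for \PIC\ then follows: for \PP\ and \PEXP\ one uses the characterisation of these classes as the languages of the form $\{x : f_1(x)>f_2(x)\}$ with $f_1,f_2$ in the corresponding counting class, instantiated with $N(\A,\cdot)$ and $N(\B,\cdot)$; for \PSPACE, since $N(\A,\cdot)\in\sharpPSPACE$, Ladner's bit-characterisation makes every bit of $N(\A,\vec{p})$ and of $N(\B,\vec{p})$ computable in \PSPACE, so the two numbers can be compared in \PSPACE.

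For (i) and (iii), let $\A$ be a CFG. The machine writes down a candidate word $w=w_1\cdots w_n$ one position at a time, at each step choosing a symbol of $\Sigma$ while maintaining for each $a\in\Sigma$ a counter of the occurrences of $a$ produced so far and refusing to let it exceed $\vec{p}(a)$; after exactly $n$ steps these counters necessarily equal $\vec{p}$, and the machine then checks deterministically whether $w\in L(\A)$ (converting $\A$ to Chomsky normal form and running CYK, in time polynomial in $|w|+|\A|$), accepting iff the check succeeds. Distinct choice sequences give distinct words and every word with Parikh image $\vec{p}$ arises from exactly one choice sequence, so the number of accepting computations is $N(\A,\vec{p})$. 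When $\vec{p}$ is unary, all counters and $n$ are of polynomial size, the machine runs in polynomial time, and we obtain \countP\ in \sharpP\ and \PIC\ in \PP; when $\vec{p}$ is binary, $n$ is exponential, the same machine runs in exponential time (CYK is still polynomial in $|w|+|\A|$, hence exponential in the input), and we obtain \countP\ in \sharpEXP\ and \PIC\ in \PEXP.

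The item needing a genuinely new idea is (ii), because an NFA may accept a word along several runs, so guessing a run would count runs rather than words. The fix is to determinise on the fly. A nondeterministic \emph{polynomial-space} machine guesses $w$ symbol by symbol exactly as above --- the counters now need only polynomially many bits, as $\vec{p}$ is binary --- but in parallel it maintains the current subset $S\subseteq Q$ of the powerset automaton, replacing $S$ by $\bigcup_{q\in S}\Delta(q,a)$ whenever the symbol $a$ is appended; a subset of $Q$ is storable in polynomial space. After $n$ steps it accepts iff $S\cap F\neq\emptyset$. The whole computation uses only polynomial space --- for $S$, the symbol counters, and a step counter --- so it is a valid polynomial-space machine (it halts since $n$ is only singly exponential). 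Each word $w$ with $\parikh(w)=\vec{p}$ and $w\in L(\A)$ then corresponds to exactly one accepting computation, so the count is $N(\A,\vec{p})$, giving \countP\ in \sharpPSPACE\ and \PIC\ in \PSPACE. Everything beyond this on-the-fly determinisation is routine; the point to keep in mind is that for CFGs there is no comparable polynomial-space representation of the parsing state, which is exactly why (iii) only achieves the weaker \sharpEXP/\PEXP\ bound rather than a \PSPACE\ one.
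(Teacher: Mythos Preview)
Your proof is correct and follows essentially the same approach as the paper: guess a word with the target Parikh image and then test membership (via CYK for CFGs, via on-the-fly powerset simulation for NFAs), with the resource bound determined by whether $\vec{p}$ is unary or binary. The paper is terser---it simply says ``it suffices to show the statements for the \#-classes'' and points to the earlier DFA argument---but the underlying machines and the use of Ladner's characterisation for the \PSPACE\ case are the same.
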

\begin{proof}[Proof (sketch)]
In all cases, the proof is a
straightforward adaption of the proof for the upper bounds in
Proposition~\ref{prop:dfa-results}(i), see the appendix.
\end{proof}

\makeatletter{}The following proposition states matching lower bounds for \PIC\ for
the cases considered in Proposition~\ref{prop:nfa-cfg-variable-upper}:
\begin{restatable}{proposition}{propFixedLower}\label{prop:nfa-cfg-fixed-lower}
  For a fixed alphabet of size two, \PIC\ is hard for
  \begin{enumerate}[(i)]
  \item \PP\ for NFA and Parikh vectors encoded in \emph{unary};
  \item \PSPACE\ for NFA and Parikh vectors encoded in \emph{binary}; and
  \item \PEXP\ for CFG and Parikh vectors encoded in \emph{binary}.
  \end{enumerate}
\end{restatable}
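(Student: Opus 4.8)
All three parts follow from one construction: encode Turing-machine computations as words over $\{0,1\}$ so that the number of accepting computations of a machine $M$ on input $x$ becomes a difference $N(\A_{\mathrm{fmt}},\vec p)-N(\A_{\mathrm{bad}},\vec p)$. Fix a bound $T$ on the running time and a bound $p$ on the space used by $M$ ($T=\mathrm{poly}(|x|)$, $p=\mathrm{poly}(|x|)$ for part~(i); $T=2^{\mathrm{poly}(|x|)}$, $p=\mathrm{poly}(|x|)$ for (ii); $T=2^{\mathrm{poly}(|x|)}$, $p=2^{\mathrm{poly}(|x|)}$ for (iii)), and pad $M$ so that every accepting run ends in an accepting sink configuration after exactly $T$ steps. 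Encode a computation history $c_0\#c_1\#\cdots\#c_T\#$ by replacing each tape symbol and each $\#$ by a fixed-length \emph{balanced} binary block (equally many $0$s and $1$s); then the binary Parikh image of every syntactically correct history of the prescribed shape is one and the same vector $\vec p$, whose entries are $\Theta(T\cdot p)$ and hence need polynomially many bits precisely when $T$ or $p$ is exponential --- this is why a binary encoding of the Parikh vector is required for (ii) and (iii).

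Next I would build $\A_{\mathrm{fmt}}$ accepting exactly the syntactically correct histories (a sequence of valid balanced blocks of the right form, with $c_0$ the initial configuration on $x$ and some $c_\ell$ an accepting configuration --- the device does not count configurations, this being forced instead by $\vec p$, which pins down the total length), and $\A_{\mathrm{bad}}$ accepting those syntactically correct histories that additionally contain \emph{some} locally inconsistent step (nondeterministically guess a step index and a tape position, and check locally that the relevant three cells of $c_i$ do not produce the corresponding cell of $c_{i+1}$). Since a syntactically correct history has all steps consistent iff it is a genuine accepting computation (the sink padding makes ``some $c_\ell$ accepting and all steps consistent'' the same as ``$c_T$ accepting and the run valid''), and since $N(\cdot,\vec p)$ counts \emph{words} --- so a history with several bad steps still contributes $1$ on each side and cancels --- one gets $N(\A_{\mathrm{fmt}},\vec p)-N(\A_{\mathrm{bad}},\vec p)=\#\{\text{accepting runs of }M\text{ on }x\}$.

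For part~(ii) the two devices are polynomial-size NFAs (configurations are polynomially long, so the local checks and the guessed tape position need only a polynomial counter, and the step index is never stored, merely passed over). Reducing from the \PSPACE-complete acceptance problem for nondeterministic polynomial-space machines, ``$M$ accepts $x$'' is equivalent to $N(\A_{\mathrm{fmt}},\vec p)>N(\A_{\mathrm{bad}},\vec p)$, so $(\A_{\mathrm{fmt}},\A_{\mathrm{bad}},\vec p)$ is the desired \PIC instance. Part~(i) is the same with $T,p$ polynomial, so the Parikh vector is polynomially bounded and may be written in unary; reducing from an arbitrary \PP language via a machine with $2^{p(|x|)}$ paths, membership becomes $\#\mathrm{acc}_M(x)>2^{p(|x|)-1}$, which one turns into a \PIC instance by adding, through a leading-marker disjoint union $01\cdot(\cdot)\cup 10\cdot(\cdot)$ (which only shifts $\vec p$ by $(1,1)$), a DFA accepting exactly $2^{p(|x|)-1}$ words of Parikh image $\vec p$. (Part~(i) also admits a shorter direct proof: reduce from \textsc{Majority-DNF}, encoding a truth assignment $b$ of $\varphi=T_1\vee\cdots\vee T_k$ as $w_1\cdots w_n$ with $w_i\in\{01,10\}$ and taking for $\A$ the polynomial-size NFA for $\bigcup_j\{\mathrm{enc}(b):b\models T_j\}$; all accepted words then have Parikh image $(n,n)$, so $N(\A,(n,n))=\#\mathrm{sat}(\varphi)$, to be compared with $2^{n-1}$.) For part~(iii), configurations themselves have exponential length, so $\A_{\mathrm{fmt}}$ and $\A_{\mathrm{bad}}$ must become context-free grammars: a doubling grammar ($C_0\to\beta\beta$, $C_{i+1}\to C_iC_i$) produces the $2^{\mathrm{poly}(|x|)}$ blocks of one configuration with polynomially many nonterminals, the initial and accepting configurations are described by similar polynomial-size grammars, and the local-inconsistency check is performed by a pushdown automaton that pushes $c_i$ onto its (exponentially deep) stack while recording the guessed position and then pops it against $c_{i+1}$. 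Reducing from an arbitrary \PEXP language via the characterisation $x\in L\iff f_1(x)>f_2(x)$ with $f_1,f_2$ counting accepting runs of exponential-time machines $M_1,M_2$ (padded to share $\vec p$), one has $f_i(x)=N(G_{i,\mathrm{fmt}},\vec p)-N(G_{i,\mathrm{bad}},\vec p)$, so $f_1(x)>f_2(x)$ becomes $N(G_{1,\mathrm{fmt}},\vec p)+N(G_{2,\mathrm{bad}},\vec p)>N(G_{2,\mathrm{fmt}},\vec p)+N(G_{1,\mathrm{bad}},\vec p)$, and each side is realised as $N(\cdot,\vec q)$ of a single grammar by the leading-marker disjoint union.

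The step I expect to be the main obstacle is part~(iii): producing a \emph{polynomial-size} grammar that both enforces the exponential length $2^{\mathrm{poly}(|x|)}$ of every configuration block and verifies the local successor relation across two exponentially long blocks using only the pushdown store, while still guaranteeing that each accepting computation is counted exactly once and that every syntactically correct non-computation is counted the same number of times in $\A_{\mathrm{fmt}}$ as in $\A_{\mathrm{bad}}$ --- that is, reconciling the stack-based verification with the word-counting (not run-counting) semantics of $N(\cdot,\vec p)$ and with the rigidity of the single Parikh vector. The balanced-block encoding and the ``$T$-step sink'' padding are the design choices that make everything line up, and verifying that they do is the bulk of the (routine but lengthy) argument.
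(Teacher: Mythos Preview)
Your approach is essentially correct and close in spirit to the paper's, but the decomposition is different in a way that matters for part~(iii). The paper does not build a pair $(\A_{\mathrm{fmt}},\A_{\mathrm{bad}})$ with $L(\A_{\mathrm{bad}})\subseteq L(\A_{\mathrm{fmt}})$. Instead it takes the \emph{full complement} $L_{\mathrm{inv}}=\Sigma_\$^*\setminus L_{\mathrm{val}}$ of the set of valid computations (this is a small NFA by Stockmeyer--Meyer, and a small CFG by Hunt--Rosenkrantz--Szymanski), together with the \emph{regular} tail conditions ``ends in an accepting configuration'' and ``ends in a rejecting configuration'', and sets
\[
L(\A)=(\Sigma_\$^* L_{\mathrm{acc}})\cup\bigl(L_{\mathrm{inv}}\cap\Sigma_\$^* L_{\mathrm{rej}}\bigr),\qquad
L(\B)=\bigl(L_{\mathrm{inv}}\cap\Sigma_\$^* L_{\mathrm{acc}}\bigr)\cup(\Sigma_\$^* L_{\mathrm{rej}}).
\]
Then $\#(L(\A)\cap\Sigma_\$^{g(n)})-\#(L(\B)\cap\Sigma_\$^{g(n)})$ telescopes to $\#\mathrm{acc}-\#\mathrm{rej}$, which \emph{is} the native \PP/\PEXP\ question; no separate threshold gadget or disjoint-union trick is needed. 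The passage to $\{0,1\}$ is via a one-hot morphism $a_i\mapsto 0^{i-1}10^{|\Sigma_\$|-i}$ (same Parikh image for every letter, not specifically balanced). For part~(ii) the paper uses the simpler pair $L(\A)=\Sigma_\$^*L_{\mathrm{acc}}$ and $L(\B)=L_{\mathrm{inv}}\cap\Sigma_\$^*L_{\mathrm{acc}}$, so the difference is just $\#\mathrm{acc}$.

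The practical payoff of the paper's decomposition shows up exactly at the obstacle you flagged. Your $\A_{\mathrm{bad}}$ must simultaneously enforce ``syntactically correct'' (which for exponential configurations includes the CFL condition ``$c_0$ is the specific exponentially long initial configuration'') \emph{and} ``some locally inconsistent step'' (another CFL condition, realised by the stack/doubling gap of length $p{+}O(1)$). These are two genuinely context-free constraints, and $\mathrm{CFL}\cap\mathrm{CFL}$ is not $\mathrm{CFL}$ in general; your PDA sketch uses the stack for the cross-block alignment and has nothing left to verify the exponential-length prefix $c_0$. The paper sidesteps this: all invalidity reasons (wrong initial prefix, wrong block structure, local inconsistency) are \emph{disjuncts} of $L_{\mathrm{inv}}$, hence a union of small CFLs, and the only intersections taken are with the regular tail predicates $\Sigma_\$^*L_{\mathrm{acc}}$ and $\Sigma_\$^*L_{\mathrm{rej}}$. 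Your construction can be repaired along the same lines---drop format-correctness from both automata and let the ``bad'' side absorb every invalidity reason as a disjunct---but as written the $\A_{\mathrm{bad}}$ for part~(iii) is not obviously a polynomial-size CFG.
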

\begin{proof}[Proof (sketch)]
  \iftechrep{We only provide the main ideas for the lower bounds, all
    details can be found in the appendix.}{} 
    Let us sketch the proof for (i).
    The proof is based on the fact that those strings (over an alphabet $\Sigma$) that do not encode
    a valid computation (called erroneous below) of a polynomial-space bounded 
    non-deterministic Turing machine $\M$ started on an input $x$ (with $|x|=n$) can be produced by a small NFA \cite{SM73}
    (and this holds also for polynomial-space bounded machines, which is important for (ii)). 
    Suppose the NFA $\A$ generates all words that end
  in an accepting configuration of $\M$, or that are erroneous and end in a
  rejecting configuration. Symmetrically, suppose that $\B$ generates
  all words that are erroneous and end in an accepting configuration,
  or that end in a rejecting configuration. We then
  have that
  $\# (L(\A) \cap \Sigma^{g(n)}) - \# (L(\B) \cap \Sigma^{g(n)})$ 
  equals the difference between the number of accepting paths and rejecting paths of
  $\M$. Here, $g(n)$ is a suitably chosen polynomial. 
  
  Let $h\colon \Sigma^* \to \{0,1\}^*$ be the morphism that maps the $i$-th element of $\Sigma$ (in some enumeration)
  to $0^{i-1} 1 0^{\#\Sigma-i}$. Moreover, let $\A_h$ and $\B_h$ be NFA for $h(L(\A))$ and $h(L(\B))$, respectively, and let 
  $\vec{p}$ be the Parikh vector with $\vec{p}(0)\defeq g(n) \cdot (\#\Sigma-1)$ and
 $\vec{p}(1)\defeq g(n)$. Then $N(\A_h,\vec{p})-N(\B_h,\vec{p}) = \# (L(\A) \cap \Sigma^{g(n)}) - \# (L(\B) \cap \Sigma^{g(n)})$
  equals the difference between the number of accepting paths and rejecting paths of
  $\M$.
    
  The proof for (ii) is similar.
  For (iii) we use the fact that those strings that do not encode
  a valid computation of an exponential-time bounded
  non-deterministic Turing machine started on an input $x$ can be produced by a small CFG
 \cite{HuntRS76}.
\end{proof}
In our construction above, we do not construct an NFA (resp., CFG)
$\A$ and a Parikh vector $\vec{p}$ such that $N(\A, \vec{p})$ is
\emph{exactly} the number of accepting computations of $\M$ on the
given input. This is the reason for not stating hardness for \sharpP\
(resp., \sharpPSPACE\; \sharpEXP) in the above proposition (we could only
show hardness under Turing reductions, but not parsimonious
reductions).

\subsection{Unary alphabets}
\makeatletter{}
A special case of \PIC\ that has been ignored so far is the case of a
unary alphabet. Of course, for a unary alphabet a word is determined
by its length, and a Parikh vector is a single number.  Moreover,
there is not much to count: Either a language $L \subseteq \{a\}^*$
contains no word of length $n$ or exactly one word of length
$n$. Thus, \PIC\ reduces to the question whether for a given length
$n$ (encoded in unary or binary) the word $a^n$ is accepted by $\A$
and rejected by $\B$. In this section we clarify the complexity of
this problem for (i) unary DFA, NFA, and CFG, and (ii) lengths encoded
in unary and binary. In the case of lengths encoded in binary,
\PIC\ is tightly connected to the {\em compressed word problem}: Given
a unary DFA (resp., NFA, CFG) $\A$ and a number~$n$ in binary
encoding, determine if $n \in L(\A)$. In particular, if this problem
belongs to a complexity class that is closed under complement
(e.g. $\LOGSPACE$, $\NL$, $\P$), then \PIC\ belongs to the same class.

\begin{restatable}{proposition}{propDFAUnaryUpper}\label{prop:dfa-unary-upper}
  For unary alphabets, \PIC\ is
    \begin{enumerate}[(i)]
  \item in \LOGSPACE\ for DFA with Parikh vectors encoded in \emph{binary};
  \item \NL-complete for NFA irrespective of the encoding of the
    Parikh vector; and
  \item \P-complete and \DP-complete for CFG with Parikh vectors encoded
    in unary and binary, respectively.
  \end{enumerate}
\end{restatable}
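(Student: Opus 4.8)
The uniform observation behind all three parts is that over a one‑letter alphabet $N(\A,n)\in\{0,1\}$, with $N(\A,n)=1$ iff $a^n\in L(\A)$; hence $\PIC$ on input $(\A,\B,n)$ asks exactly whether $a^n\in L(\A)$ and $a^n\notin L(\B)$. For the upper bounds it therefore suffices to place the plain membership problem ``$a^n\in L(\A)$?'' in the class in question and then combine the $\A$‑instance with the complemented $\B$‑instance: for the \LOGSPACE, \NL\ and \P\ bounds those classes are closed under complement and under conjunction of two instances, while for the \DP\ bound we exhibit the two instances as one \NP-set and one \coNP-set. For the lower bounds it suffices, taking $\B$ with $L(\B)=\emptyset$, to prove hardness of membership alone, except in the \DP\ case where we pair a membership and a non‑membership instance.

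\emph{Part (i).} A unary DFA traces out along its $a$-transitions from $q_0$ a ``lasso'': a tail followed by a cycle of combined length at most $|Q|$. In logspace I would first determine the cycle length $p\le|Q|$ as the least $p$ with $\delta^{|Q|}(q_0)=\delta^{|Q|+p}(q_0)$, each state $\delta^i(q_0)$ with $i\le 2|Q|$ being computed by a counter of $O(\log|Q|)$ bits. Reducing the binary number $n$ modulo the small number $p$ (logspace‑computable by scanning the bits of $n$), I replace $n$ by an equivalent $n'<2|Q|$, simulate $n'$ steps, and test whether the state reached is final; running this for $\A$ and for $\B$ and combining the answers puts $\PIC$ in \LOGSPACE.

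\emph{Part (ii).} For the lower bound I reduce directed $s$--$t$-reachability: attach a self‑loop to every vertex, read the graph as a unary NFA with initial state $s$ and unique final state $t$, and set $n$ to the number of vertices; then $a^n\in L(\A)$ iff $t$ is reachable from $s$, and this works verbatim whether $n$ is written in unary or in binary. When $n$ is in unary the \NL\ upper bound is immediate: guess an accepting walk step by step while counting up to $n$. The case of $n$ in binary is the main obstacle, since the step counter would now be too long; here I would exploit the structure of unary NFAs (Chrobak normal form), by which the accepted lengths beyond a polynomial threshold form a union of arithmetic progressions whose periods are bounded by $|Q|$. An \NL\ machine would guess a canonical decomposition of an accepting walk into at most $|Q|$ simple path segments and simple cycles (each of length $<2|Q|$, hence verifiable step by step), leave the repetition counts of the cycles implicit, and verify that the total length is $n$ by checking a residue condition modulo the $\gcd$ of the chosen cycle lengths---a number $\le|Q|$, so $n$ can be reduced modulo it in logspace---together with a Frobenius‑type bound that disposes of small residual lengths. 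Turning this into a correct canonical‑form lemma and feasibility argument is the technical heart of the proof.

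\emph{Part (iii).} When $n$ is in unary, ``$a^n\in L(\A)$?'' is decided in polynomial time by the standard dynamic program that computes, for every nonterminal $A$ and every length $\ell\le n$, whether $A\Rightarrow^* a^\ell$; combining the computations for $\A$ and $\B$ gives the \P\ upper bound. \P-hardness already holds at $n=0$, which is exactly the emptiness/nullability problem for context‑free grammars: from a monotone circuit one builds a grammar in which an AND‑gate becomes a concatenation production, an OR‑gate two productions, a true input an $\varepsilon$-production and a false input no production, so the start symbol derives $\varepsilon$ iff the circuit outputs true. When $n$ is in binary the dynamic program has exponentially many lengths; instead ``$a^n\in L(\A)$?'' lies in \NP\ because a derivation of $a^n$ is witnessed by a polynomially‑sized compressed derivation (a straight‑line‑program‑like object with binary‑encoded repetition counts) whose validity and total length are checkable in polynomial time, and it is \NP-hard by a reduction from \textsc{Subset-Sum}: with productions $A_i\to\varepsilon\mid P_{a_i}$, where $P_{a_i}$ is a sub‑grammar of size $O(\log a_i)$ deriving exactly $a^{a_i}$, the grammar generates $a^t$ iff $t$ is a subset sum of the $a_i$. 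Hence ``$a^n\in L(\A)$?'' is in \NP\ and ``$a^n\notin L(\B)$?'' is in \coNP, so $\PIC\in\DP$; and \DP-hardness follows by reducing a \textsc{Subset-Sum}/co-\textsc{Subset-Sum} pair, padding the two grammars by doubling sub‑grammars so that both speak about a common target length $n$.
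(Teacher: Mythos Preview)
Your proposal matches the paper almost everywhere: Part~(i), Part~(iii), and the hardness half of Part~(ii) are essentially the paper's arguments (the paper adds a loop only at~$t$ rather than at every vertex, and for the \DP\ bounds it cites Huynh for \NP-membership and reduces from a \textsc{SubsetSum}/co-\textsc{SubsetSum} pair exactly as you suggest).

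The one real divergence is the \NL\ upper bound for unary NFA with $n$ in binary. You sketch a Chrobak-style argument: guess a decomposition of an accepting walk into at most $|Q|$ simple path segments and simple cycles, and then certify the total length~$n$ via a residue test modulo the $\gcd$ of the chosen cycle lengths together with a Frobenius bound. As stated this is a gap. First, the bound ``at most $|Q|$ segments'' is not justified---a walk may thread through many distinct simple cycles, and there is no obvious canonical decomposition of bounded size that an \NL\ machine can guess and verify. Second, and more seriously, the $\gcd$/Frobenius test checks only that $n$ minus the path length is a non-negative integer combination of the cycle lengths; it does \emph{not} check that those particular cycles can all be spliced into a single walk from $q_0$ to a final state. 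Cycles that are not mutually reachable along the guessed skeleton would pass your residue test yet correspond to no actual run. You flag this yourself as ``the technical heart of the proof,'' but the missing lemma is exactly the hard part.

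The paper sidesteps this by invoking a structural lemma of Sawa: for $n\ge m^2$ (with $m=|Q|$), $a^n\in L(\A)$ iff there exist $q\in Q$, $q_f\in F$, $b\in\{1,\ldots,m\}$ and $a\in\{m^2-b-1,\ldots,m^2-2\}$ with $n\in a+b\N$ and $q_0\xrightarrow{m-1} q\xrightarrow{b} q\xrightarrow{a-(m-1)} q_f$. The point is that a \emph{single} short cycle through a single state~$q$ suffices; there is no need to combine several cycles or reason about their composability. All four guessed quantities fit in logspace, the three reachability conditions are \NL, and $n\in a+b\N$ is a logspace residue test. This is precisely the ``correct canonical-form lemma'' you were reaching for, and it replaces your multi-cycle $\gcd$ argument with a one-cycle statement that is immediately \NL-checkable.
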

All proofs are given in the appendix. Perhaps most interestingly, for Part~(ii) we apply a recent result by Sawa~\cite{Sawa13} on
arithmetic progressions in unary NFA in order to show that the
compressed word problem for NFA is in \NL. The \DP-lower bound in
Part~(iii) can be shown via a reduction from a variant of the
classical subset sum problem, exploiting the fact that CFGs can
generate exponentially large numbers and hence, informally speaking,
encode numbers in binary.

\section{Perspectives}
\makeatletter{}We studied and established a close connection between the cost problem
and \PIC, a natural problem language-theoretic problem that
generalizes counting words in DFA, NFA and CFG. The main results of
this paper are \CH-membership of \PIC\ and the cost problem. It is
clear that this upper bound carries over to a multi-dimensional
version of the cost problem where cost formulas are linear
inequalities of the form $x_1\le b_1\wedge \cdots \wedge x_n\le
b_n$. A challenging open problem seems to be whether decidability can
be retained when arbitrary Presburger formulas are allowed as cost
formulas.

\bibliographystyle{plain}
\bibliography{bibliography}

\begin{thebibliography}{10}

\bibitem{Aigner2007}
M.~Aigner.
\newblock {\em A Course in Enumeration}, volume 238 of {\em Graduate Texts in
  Mathematics}.
\newblock Springer, 2007.

\bibitem{DBLP:conf/mfcs/AllenderBD14}
E.~Allender, N.~Balaji, and S.~Datta.
\newblock Low-depth uniform threshold circuits and the bit-complexity of
  straight line programs.
\newblock In {\em Proc.~MFCS 2014, Part {II}}, volume 8635 of {\em LNCS}, pages
  13--24. Springer, 2014.

\bibitem{ABKB09}
E.~Allender, P.~B{\"u}rgisser, J.~Kjeldgaard-Pedersen, and P.~Bro Miltersen.
\newblock On the complexity of numerical analysis.
\newblock {\em SIAM J.\ Comput.}, 38(5):1987--2006, 2009.

\bibitem{AllenderW90}
E.~Allender and K.~W. Wagner.
\newblock Counting hierarchies: Polynomial time and constant.
\newblock {\em Bulletin of the {EATCS}}, 40:182--194, 1990.

\bibitem{BDDKK14}
C.~Baier, M.~Daum, C.~Dubslaff, J.~Klein, and S.~Kl{\"{u}}ppelholz.
\newblock Energy-utility quantiles.
\newblock In {\em {NASA} Formal Methods ({NFM})}, volume 8430 of {\em LNCS},
  pages 285--299. Springer, 2014.

\bibitem{BK08}
C.~Baier and J.{-}P. Katoen.
\newblock {\em Principles of model checking}.
\newblock {MIT} Press, 2008.

\bibitem{BalajiD15}
N.~Balaji and S.~Datta.
\newblock Bounded treewidth and space-efficient linear algebra.
\newblock In {\em Proc.~TAMC 2015}, volume 9076 of {\em LNCS}, pages 297--308.
  Springer, 2015.

\bibitem{BalajiDG15}
N.~Balaji, S.~Datta, and V.~Ganesan.
\newblock {Counting Euler tours in undirected bounded treewidth graphs}.
\newblock In {\em Proc.~FSTTCS 2015}, volume~45 of {\em LIPIcs}, pages
  246--260, 2015.

\bibitem{BenediktLW13}
M.~Benedikt, R.~Lenhardt, and J.~Worrell.
\newblock {LTL} model checking of interval markov chains.
\newblock In {\em Proc.~TACAS 2013}, volume 7795 of {\em LNCS}, pages 32--46.
  Springer, 2013.

\bibitem{DBLP:journals/tcs/BertoniGS91}
A.~Bertoni, M.~Goldwurm, and N.~Sabadini.
\newblock The complexity of computing the number of strings of given length in
  context-free languages.
\newblock {\em Theor. Comput. Sci.}, 86(2):325--342, 1991.

\bibitem{BrightwellW05}
G.~Brightwell and P.~Winkler.
\newblock Counting eulerian circuits is {\#}{P}-complete.
\newblock In {\em Algorithm Engineering and Experiments / Analytic Algorithmics
  and Combinatorics, ({ALENEX} / {ANALCO})}, pages 259--262. {SIAM}, 2005.

\bibitem{BFRR14}
V.~Bruy{\`{e}}re, E.~Filiot, M.~Randour, and J.{-}F. Raskin.
\newblock Meet your expectations with guarantees: Beyond worst-case synthesis
  in quantitative games.
\newblock In {\em Proc.~STACS 2014}, volume~25 of {\em LIPIcs}, pages 199--213,
  2014.

\bibitem{Burgisser09}
P.~B{\"{u}}rgisser.
\newblock On defining integers and proving arithmetic circuit lower bounds.
\newblock {\em Comput. Complexity}, 18(1):81--103, 2009.

\bibitem{CKK15}
K.~Chatterjee, Z.~Kom{\'{a}}rkov{\'{a}}, and J.~Kret{\'{\i}}nsk{\'{y}}.
\newblock Unifying two views on multiple mean-payoff objectives in markov
  decision processes.
\newblock In {\em Proc.~LICS 2015}, pages 244--256. IEEE, 2015.

\bibitem{CR15}
L.~Clemente and J.{-}F. Raskin.
\newblock Multidimensional beyond worst-case and almost-sure problems for
  mean-payoff objectives.
\newblock In {\em Proc.~LICS 2015}, pages 257--268. {IEEE}, 2015.

\bibitem{EtessamiY10}
K.~Etessami and M.~Yannakakis.
\newblock On the complexity of nash equilibria and other fixed points.
\newblock {\em {SIAM} J. Comput.}, 39(6):2531--2597, 2010.

\bibitem{GOW15-MatrixPowering}
E.~Galby, J.~Ouaknine, and J.~Worrell.
\newblock On matrix powering in low dimensions.
\newblock In {\em Proc.~STACS 2015}, volume~30 of {\em LIPIcs}, pages 329--340,
  2015.

\bibitem{GHR95}
R.~Greenlaw, H.~J. Hoover, and W.~L. Ruzzo.
\newblock {\em Limits to parallel computation: P-completeness theory}.
\newblock Oxford University Press, 1995.

\bibitem{HK15}
C.~Haase and S.~Kiefer.
\newblock The odds of staying on budget.
\newblock In {\em Proc. ICALP 2015, Part {II}}, volume 9135 of {\em LNCS},
  pages 234--246. Springer, 2015.

\bibitem{HK16}
C.~Haase and S.~Kiefer.
\newblock The complexity of the {$K$}th largest subset problem and related
  problems.
\newblock {\em Inf. Process. Lett.}, 116(2):111--115, 2016.

\bibitem{HAMB02}
W.~Hesse, E.~Allender, and D.~A.~Mix Barrington.
\newblock Uniform constant-depth threshold circuits for division and iterated
  multiplication.
\newblock {\em J. Comput. Syst. Sci.}, 65(4):695--716, 2002.

\bibitem{HS11}
S.~Homer and A.L. Selman.
\newblock {\em Computability and Complexity Theory, Second Edition}.
\newblock Texts in Computer Science. Springer, 2011.

\bibitem{Huy84}
D.~T. Huynh.
\newblock Deciding the inequivalence of context-free grammars with 1-letter
  terminal alphabet is ${\Sigma}_2^\text{P}$-complete.
\newblock {\em Theor. Comput. Sci.}, 33(2–3):305--326, 1984.

\bibitem{HuntRS76}
H.~B.~Hunt III, D.~J. Rosenkrantz, and T.~G. Szymanski.
\newblock On the equivalence, containment, and covering problems for the
  regular and context-free languages.
\newblock {\em J. Comput. Syst. Sci.}, 12(2):222--268, 1976.

\bibitem{KZHHJ11}
J.{-}P. Katoen, I.~S. Zapreev, E.~M. Hahn, H.~Hermanns, and D.~N. Jansen.
\newblock The ins and outs of the probabilistic model checker {MRMC}.
\newblock {\em Perform. Eval.}, 68(2):90--104, 2011.

\bibitem{Kop15}
E.~Kopczy{\'n}ski.
\newblock Complexity of problems of commutative grammars.
\newblock {\em Log. Meth. Comput. Sci.}, 11(1), 2015.

\bibitem{KusL08}
D.~Kuske and M.~Lohrey.
\newblock First-order and counting theories of omega-automatic structures.
\newblock {\em J. Symbolic Logic}, 73:129--150, 2008.

\bibitem{KNP11}
M.~Z. Kwiatkowska, G.~Norman, and D.~Parker.
\newblock {PRISM} 4.0: Verification of probabilistic real-time systems.
\newblock In {\em Proc.~CAV 2011}, volume 6806 of {\em LNCS}, pages 585--591.
  Springer, 2011.

\bibitem{Lad89}
R.~E. Ladner.
\newblock Polynomial space counting problems.
\newblock {\em SIAM J.\ Comput.}, 18(6):1087--1097, 1989.

\bibitem{LS05}
F.~Laroussinie and J.~Sproston.
\newblock Model checking durational probabilistic systems.
\newblock In {\em Proc.~FOSSACS 2005}, volume 3441 of {\em LNCS}, pages
  140--154. Springer, 2005.

\bibitem{MV97}
M.~Mahajan and V.~Vinay.
\newblock A combinatorial algorithm for the determinant.
\newblock In {\em Proc.~SODA 1997}, pages 730--738. {ACM/SIAM}, 1997.

\bibitem{Pap94}
C.~H. Papadimitriou.
\newblock {\em Computational Complexity}.
\newblock Addison-Wesley, 1994.

\bibitem{RRS15a}
M.~Randour, J.{-}F. Raskin, and O.~Sankur.
\newblock Percentile queries in multi-dimensional markov decision processes.
\newblock In {\em Proc.~CAV 2015, Part {I}}, volume 9206 of {\em LNCS}, pages
  123--139. Springer, 2015.

\bibitem{RRS15b}
M.~Randour, J.{-}F. Raskin, and O.~Sankur.
\newblock Variations on the stochastic shortest path problem.
\newblock In {\em Proc.~VMCAI 2015}, volume 8931 of {\em LNCS}, pages 1--18.
  Springer, 2015.

\bibitem{Sawa13}
Z.~Sawa.
\newblock Efficient construction of semilinear representations of languages
  accepted by unary nondeterministic finite automata.
\newblock {\em Fundam. Inform.}, 123(1):97--106, 2013.

\bibitem{SM73}
L.~J. Stockmeyer and A.~R. Meyer.
\newblock Word problems requiring exponential time (preliminary report).
\newblock In {\em Proc.~STOC 1973}, pages 1--9. ACM, 1973.

\bibitem{Str76}
V.~Strassen.
\newblock Einige {Resultate} {\"u}ber {Berechnungskomplexit{\"a}t}.
\newblock {\em Jahresbericht der Deutschen Mathematiker-Vereinigung}, 78:1--8,
  1976/77.

\bibitem{To91}
S.~Toda.
\newblock {PP} is as hard as the polynomial-time hierarchy.
\newblock {\em SIAM J. Comput.}, 20(5):865--877, 1991.

\bibitem{UB13}
M.~Ummels and C.~Baier.
\newblock Computing quantiles in markov reward models.
\newblock In {\em Proc.~FOSSACS 2013}, volume 7794 of {\em LNCS}, pages
  353--368. Springer, 2013.

\bibitem{Vol99}
H.~Vollmer.
\newblock {\em Introduction to Circuit Complexity}.
\newblock Springer, 1999.

\end{thebibliography}

\iftechrep{
\clearpage
\newpage
\appendix
\makeatletter{}\section{Missing Proofs from Section~\ref{sec-further}}
Here, we provide details of the omitted proofs from the main part.

\begin{restatable}{lemma}{lemMatPowerGadget}\label{lem-MatPower-gadget}
  Given a matrix $M \in \Z^{m \times m}$, and $i,j \in \{1, \ldots,
  m\}$, one can compute in logspace an edge-weighted multi-graph
  $G = (V,E,s,t,w)$ and $v_i^+, v_j^+, v_j^- \in V$ such that for all
  $n \in \N$ we have $(M^n)_{i,j} = N(G, v_i^+, v_j^+, n) - N(G,
  v_i^+, v_j^-, n)$.
\end{restatable}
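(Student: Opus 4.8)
The plan is to realize the signed matrix power $M^n$ as a difference of two path counts in a "doubled" graph, tracking the sign of a walk by which of two copies of the vertex set it currently occupies. Concretely, I would take vertex set $V = \{1,\dots,m\} \times \{+,-\}$, write $v_k^+ := (k,+)$ and $v_k^- := (k,-)$, and decompose $M$ entrywise into its nonnegative and nonpositive parts, setting $M^+_{k,l} := \max(M_{k,l},0)$ and $M^-_{k,l} := \max(-M_{k,l},0)$, so that $M = M^+ - M^-$, and $|M| := M^+ + M^-$ entrywise. For every pair $(k,l)$ with $M_{k,l} > 0$ add an edge $v_k^+ \to v_l^+$ and an edge $v_k^- \to v_l^-$, both of weight $M_{k,l}$; for every pair with $M_{k,l} < 0$ add an edge $v_k^+ \to v_l^-$ and an edge $v_k^- \to v_l^+$, both of weight $-M_{k,l}$; and do nothing when $M_{k,l} = 0$. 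This is manifestly logspace computable: each weight is, up to a sign, directly an input number, and the edge set is a simple function of the positions in $M$.

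The key step is to compute the adjacency matrix of $\tilde G$. Ordering the vertices as $v_1^+, \dots, v_m^+, v_1^-, \dots, v_m^-$, this matrix is the $2m \times 2m$ block matrix $A = \begin{pmatrix} M^+ & M^- \\ M^- & M^+ \end{pmatrix}$. I would then diagonalize the block structure: with $S := \begin{pmatrix} I & I \\ I & -I \end{pmatrix}$ one has $S^2 = 2I$, hence $S^{-1} = \tfrac{1}{2} S$ and $S^{-1} A S = \begin{pmatrix} |M| & 0 \\ 0 & M \end{pmatrix}$. Therefore $A^n = \tfrac{1}{2}\, S \begin{pmatrix} |M|^n & 0 \\ 0 & M^n \end{pmatrix} S = \tfrac{1}{2} \begin{pmatrix} |M|^n + M^n & |M|^n - M^n \\ |M|^n - M^n & |M|^n + M^n \end{pmatrix}$. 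Reading off the $(i,j)$ entries of the top-left and top-right blocks gives $(A^n)_{v_i^+, v_j^+} = \tfrac{1}{2}\left((|M|^n)_{i,j} + (M^n)_{i,j}\right)$ and $(A^n)_{v_i^+, v_j^-} = \tfrac{1}{2}\left((|M|^n)_{i,j} - (M^n)_{i,j}\right)$, whose difference is exactly $(M^n)_{i,j}$.

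To conclude, I would invoke the standard fact that the number of walks of length $n$ from $u$ to $v$ in $\tilde G$ equals $(A^n)_{u,v}$, which by the definition of $N(G,\cdot,\cdot,\cdot)$ in Section~\ref{sec-graphs} is precisely $N(G,u,v,n)$; combined with the previous paragraph this yields $N(G, v_i^+, v_j^+, n) - N(G, v_i^+, v_j^-, n) = (M^n)_{i,j}$ for all $n \ge 1$. The case $n = 0$ is immediate under the convention that the only length-$0$ path runs from a vertex to itself: the left-hand side is then $[\,v_i^+ = v_j^+\,] - [\,v_i^+ = v_j^-\,] = [\,i = j\,] = (M^0)_{i,j}$. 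The distinguished vertices required by the statement are $v_i^+$, $v_j^+$ and $v_j^-$.

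I do not foresee a genuine obstacle; the only delicate points are the sign bookkeeping in the doubled graph — making sure that traversing a negative-weight edge flips the copy of the vertex set, so that the sign of a walk is recorded by its endpoint — and keeping all edge weights nonnegative, which is why one works with $M^+$, $M^-$ and $|M|$ rather than with $M$ directly. The conjugation by $S$ is the cleanest way to see that the cross terms in the expansion of $(M^+ - M^-)^n$ cancel correctly; alternatively one could argue combinatorially, assigning to each walk the product of the signs of the matrix entries it traverses.
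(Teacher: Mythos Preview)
Your construction is exactly the one in the paper: the doubled vertex set $\{v_k^+,v_k^-\}$ with sign-preserving edges for positive entries and sign-flipping edges for negative entries. The only difference is in how you verify correctness. The paper proceeds by a direct induction on~$n$, expanding $M^{n+1}_{k,s} = \sum_\ell M^n_{k,\ell} M_{\ell,s}$ and splitting the sum over positive and negative $M_{\ell,s}$. You instead write down the $2m\times 2m$ adjacency matrix $A=\begin{pmatrix} M^+ & M^-\\ M^- & M^+\end{pmatrix}$ and block-diagonalize it via conjugation by $S=\begin{pmatrix} I & I\\ I & -I\end{pmatrix}$ to obtain a closed form for $A^n$. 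Your algebraic route is cleaner and more conceptual --- it makes transparent that the ``sign copy'' is the $-1$-eigenspace of the swap involution --- while the paper's induction is more elementary and avoids any appeal to matrix similarity. Both are short and correct; there is no gap in your argument.
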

\begin{proof}
In the following we write $M^n_{i,j}$ to mean $(M^n)_{i,j}$.
Define an edge-weighted multi-graph $G = (V,E,s,t,w)$ as follows.
Let $V = \{v_k^+, v_k^- : 1 \le k \le m\}$.
For all $k,\ell \in \{1, \ldots, m\}$, if $M_{k,\ell} > 0$ then include in~$E$
an edge~$e$ from~$v_k^+$ to~$v_\ell^+$ with $w(e) = M_{k,\ell}$, and 
an edge~$e$ from~$v_k^-$ to~$v_\ell^-$ with $w(e) = M_{k,\ell}$.
Similarly, if $M_{k,\ell} < 0$ then include in~$E$
an edge~$e$ from~$v_k^+$ to~$v_\ell^-$ with $w(e) = -M_{k,\ell}$, and 
an edge~$e$ from~$v_k^-$ to~$v_\ell^+$ with $w(e) = -M_{k,\ell}$.
We prove by induction on~$n$ that we have for all $k, \ell \in \{1, \ldots, m\}$:
\[
M^n_{k,\ell} = N(G, v_k^+, v_\ell^+, n) - N(G, v_k^+, v_\ell^-, n)
\]
Note that this implies the statement of the lemma.
For the induction base, let $n=0$.
If $k = \ell$ then $M^n_{k,\ell} = 1$, $N(G, v_k^+, v_\ell^+, 0) = 1$, and $N(G, v_k^+, v_\ell^-, 0) = 0$.
If $k \ne \ell$ then $M^n_{k,\ell} = 0 = N(G, v_k^+, v_\ell^+, 0) = N(G, v_k^+, v_\ell^-, 0)$.
For the inductive step, let $n \in \N$ and suppose $M^n_{k,\ell} = N(G, v_k^+, v_\ell^+, n) - N(G, v_k^+, v_\ell^-, n)$ for all $k, \ell$.
For $s \in \{1, \ldots, m\}$ write
$I^+(s) := \{\ell \in \{1, \ldots, m\}: M_{\ell,s} > 0\}$ and
$I^-(s) := \{\ell \in \{1, \ldots, m\}: M_{\ell,s} < 0\}$.
For $v, v', v'' \in V$ write 
$\tilde{N}(G, v, v', v'', n+1)$ for the number of paths in~$\tilde{G}$ (the unweighted version of $G$) from $v$ to~$v''$ of length~$n+1$ such that $v'$ is the vertex visited after $n$ steps.
We have for all $k, s \in \{1, \ldots, m\}$:
\begin{eqnarray*}
M^{n+1}_{k,s} 
& = &  \sum_{\ell=1}^m M^n_{k,\ell} M_{\ell,s} \\
& \stackrel{\text{(ind. hyp.)}}{=} &  \sum_{\ell=1}^m N(G, v_k^+, v_\ell^+, n) M_{\ell,s} \ - \\
& &  \sum_{\ell=1}^m N(G, v_k^+, v_\ell^-, n) M_{\ell,s} \\
& = &  \sum_{\ell \in I^+(s)} N(G, v_k^+, v_\ell^+, n) M_{\ell,s} \ + \\
& &  \sum_{\ell \in I^-(s)} N(G, v_k^+, v_\ell^-, n) (-M_{\ell,s}) \ - \\
& & \sum_{\ell \in I^+(s)} N(G, v_k^+, v_\ell^-, n) M_{\ell,s} \ - \\
& & \sum_{\ell \in I^-(s)} N(G, v_k^+, v_\ell^+, n) (-M_{\ell,s}) \\
& = &  \sum_{\ell \in I^+(s)} \tilde{N}(G, v_k^+, v_\ell^+, v_s^+, n+1) \ + \\
& & \sum_{\ell \in I^-(s)} \tilde{N}(G, v_k^+, v_\ell^-, v_s^+, n+1) \ - \\
& & \sum_{\ell \in I^+(s)} \tilde{N}(G, v_k^+, v_\ell^-, v_s^-, n+1) \ - \\
& & \sum_{\ell \in I^-(s)} \tilde{N}(G, v_k^+, v_\ell^+, v_s^-, n+1) \\
& = &   N(G, v_k^+, v_s^+, n+1) - N(G, v_k^+, v_s^-, n+1)
\end{eqnarray*}
This completes the induction proof.
\end{proof}

We prove the following lemma from the main text:

\lePosMatToMultigraph*
\begin{proof}
  We deferred showing that $f(M^n) = N(G, v_0, v^+, n+2) - N(G, v_0,
  v^-, n+2)$ for all $n \in \N$. Indeed, any path of length~$n+2$ from
  $v_0$ to~$v^+$ must start with an edge from $v_0$ to~$v_{i,j}^0$ for
  some $i,j$, continue with a path of length~$n$ from $v_{i,j}^0$ to
  either $v_{i,j}^+$ or~$v_{i,j}^-$, and finish with an edge to~$v^+$.
  Hence, writing $I^+ := \{(i,j) : 1 \le i,j \le m, \ b_{i,j} > 0\}$
  and $I^- := \{(i,j) : 1 \le i,j \le m, \ b_{i,j} < 0\}$ we have
\begin{multline*}
  N(G, v_0, v^+, n+2) = 
  \sum_{(i,j) \in I^+} N(G, v_{i,j}^0, v_{i,j}^+, n) \cdot b_{i,j} \\
   + \sum_{(i,j) \in I^-} N(G, v_{i,j}^0, v_{i,j}^-, n) \cdot (-b_{i,j}).
\end{multline*}
Similarly we have:
\begin{multline*}
  N(G, v_0, v^-, n+2) = 
  \sum_{(i,j) \in I^+} N(G, v_{i,j}^0, v_{i,j}^-, n) \cdot b_{i,j} \\
  + \sum_{(i,j) \in I^-} N(G, v_{i,j}^0, v_{i,j}^+, n) \cdot (-b_{i,j}).
\end{multline*}
Hence we have:
\begin{eqnarray*}
  f(M^n) & = & \sum_{i=1}^m \sum_{j=1}^m M^n_{i,j} \cdot b_{i,j}  \\
  & \stackrel{\text{\eqref{eq-lemPosMat-to-multigraph}}}{=} &  \sum_{i=1}^m \sum_{j=1}^m N(G, v_{i,j}^0, v_{i,j}^+, n) \cdot b_{i,j} \ - \\ 
  & & \sum_{i=1}^m \sum_{j=1}^m N(G, v_{i,j}^0, v_{i,j}^-, n) \cdot b_{i,j} \\ 
  & = &  N(G, v_0, v^+, n+2) - N(G, v_0, v^-, n+2) 
\end{eqnarray*}
\end{proof}

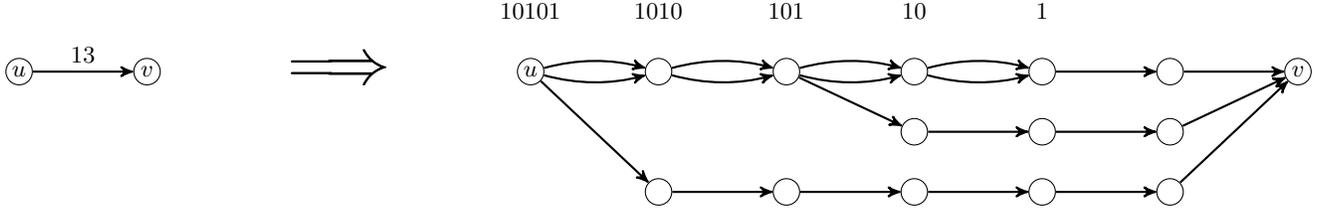
\begin{figure*}[t]
\begin{center}
\begin{tikzpicture}[xscale=1.7,yscale=0.8,LMC style]
\node[state] (a) at (-1,0) {$u$};
\node[state] (b) at (0,0) {$v$};
\path[->] (a) edge node[above] {$13$} (b);
\node (arrow) at (1.5,0) {\Huge $\Longrightarrow$};

\node at (3,1) {$10101$};
\node at (4,1) {$1010$};
\node at (5,1) {$101$};
\node at (6,1) {$10$};
\node at (7,1) {$1$};

\node[state] (u) at (3,0) {$u$};
\node[state] (u1) at (4,0) {};
\node[state] (u2) at (5,0) {};
\node[state] (u3) at (6,0) {};
\node[state] (u4) at (7,0) {};
\node[state] (u5) at (8,0) {};
\node[state] (u6) at (9,0) {$v$};

\node[state] (v3) at (6,-1) {};
\node[state] (v4) at (7,-1) {};
\node[state] (v5) at (8,-1) {};

\node[state] (w1) at (4,-2) {};
\node[state] (w2) at (5,-2) {};
\node[state] (w3) at (6,-2) {};
\node[state] (w4) at (7,-2) {};
\node[state] (w5) at (8,-2) {};

\path[->, bend left] (u) edge (u1);
\path[->, bend right] (u) edge (u1);
\path[->, bend left] (u1) edge (u2);
\path[->, bend right] (u1) edge (u2);
\path[->, bend left] (u2) edge (u3);
\path[->, bend right] (u2) edge (u3);
\path[->, bend left] (u3) edge (u4);
\path[->, bend right] (u3) edge (u4);
\path[->] (u4) edge (u5);
\path[->] (u5) edge (u6);

\path[->] (u2) edge (v3);
\path[->] (v3) edge (v4);
\path[->] (v4) edge (v5);
\path[->] (v5) edge (u6);

\path[->] (u) edge (w1);
\path[->] (w1) edge (w2);
\path[->] (w2) edge (w3);
\path[->] (w3) edge (w4);
\path[->] (w4) edge (w5);
\path[->] (w5) edge (u6);

\end{tikzpicture}
\end{center}
\caption{Illustration of the construction of the unweighted multi-graph from Lemma~\ref{lem-PosMat-multigraph-to-graph}.
We assume $k=6$.
The binary representation of~$13$ is $10101$.
The binary numbers over the nodes on the right hand side correspond to $w$-values that occur during the construction, but are not part of the output.
Each binary number over a node indicates the number of paths to~$v$.
}
\label{fig:PosMat-multigraph-to-graph}
\end{figure*}

\begin{restatable}{lemma}{PosMatMultgraphToGraph}\label{lem-PosMat-multigraph-to-graph}
  Given an edge-weighted multi-graph $G = (V,E,s,t,w)$ (with $w$ in
  binary), $v_0, v_1 \in V$ and a number $k \in \N$ in unary such that
  $k \ge 1 + \max_{e \in E} \lfloor \log_2 w(e) \rfloor$, one can
  compute in logspace an unweighted multi-graph $G' =
  (V',E',s',t')$ with $V' \supseteq V$ such that for all $n \in \N$ we
  have $N(G,v_0,v_1,n) = N(G',v_0,v_1,n \cdot k)$.
\end{restatable}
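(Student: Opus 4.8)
The plan is to replace every weighted edge of $G$ by a small unweighted \emph{gadget} in which the number of paths of length exactly $k$ from the gadget's entry node to its exit node equals the weight of that edge; one step in the unweighted multi-graph $\tilde G$ induced by $G$ is then simulated by a block of $k$ steps in $G'$. Fix an edge $e=(a,b)\in E$ and let $w:=w(e)$; after deleting the (irrelevant) edges of weight $0$ we may assume $w\ge 1$, and by the hypothesis on $k$ we have $w<2^k$, so we can write $w=\sum_{i=0}^{k-1}\beta_i 2^i$ with $\beta_i\in\{0,1\}$. For this edge I would introduce fresh vertices forming a chain $q_k^e=a,\ q_{k-1}^e,\ \ldots,\ q_1^e,\ q_0^e=b$, put two parallel edges from $q_{j+1}^e$ to $q_j^e$ for every $1\le j\le k-1$ (the ``doubling'' edges), put a single edge from $q_1^e$ to $b$ exactly when $\beta_{k-1}=1$, and, for every $1\le j\le k-1$ with $\beta_{k-1-j}=1$, attach a \emph{bypass}: a fresh simple path of length $j+1$ from $q_{j+1}^e$ to $b$. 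All the $q_j^e$ with $1\le j\le k-1$ and all bypass vertices are private to the gadget. The graph $G'$ retains all vertices of $V$ (so $V'\supseteq V$) and consists precisely of the union of these gadgets; since $k$ is given in unary, each gadget has $O(k^2)$ vertices and edges and is obtained by reading off the bits of $w(e)$, so $G'$ is logspace-computable.

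The key local property, proved by a straightforward induction on $j$ reflecting Horner's scheme $w=(\cdots((\beta_{k-1})\cdot 2+\beta_{k-2})\cdot 2+\cdots)\cdot 2+\beta_0$, is that the number of length-$j$ paths from $q_j^e$ to $b$ inside the gadget equals the $j$-th Horner partial sum; in particular the number of length-$k$ paths from $a=q_k^e$ to $b$ is exactly $w(e)$. Alongside this I would record the structural facts that each gadget is acyclic, that every edge leaving an internal vertex stays inside the same gadget, and that every path inside a gadget from its entry $q_k^e$ to its exit $b$ has length exactly $k$ (it runs down the chain and, if it ever leaves the chain, continues along a single bypass). Establishing these two points — the Horner count and the ``all entry-to-exit paths have length exactly $k$'' property — is the technical heart of the proof and the step I expect to require the most care; everything else is bookkeeping.

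Given the local analysis, the identity $N(G,v_0,v_1,n)=N(G',v_0,v_1,n\cdot k)$ follows by a decomposition argument. A path of length $nk$ in $G'$ starting at $v_0\in V$ must, at each of the positions $0,k,2k,\ldots$, sit at a vertex of $V$ and then traverse a single gadget whose entry is that vertex; by the structural facts it reaches the gadget's exit after exactly $k$ further steps, and it cannot leave a gadget through an internal (non-$V$) vertex, so after $n$ blocks it sits at a $V$-vertex at position $nk$, which must be $v_1$. Choosing, for the $i$-th block, a gadget with entry $x_{i-1}$ together with a concrete entry-to-exit path through it corresponds bijectively to choosing one of the $w(e_i)$ parallel copies of an edge $e_i$ of $\tilde G$ from $x_{i-1}$ to $x_i$; summing over all length-$n$ walks $v_0=x_0,x_1,\ldots,x_n=v_1$ of $\tilde G$ (the case $n=0$ being immediate) yields the claimed equality. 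In the write-up I would first isolate the single-gadget lemma and then carry out this decomposition.
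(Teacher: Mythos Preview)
Your proposal is correct and takes essentially the same approach as the paper: replace each weighted edge by a length-$k$ gadget whose entry-to-exit path count equals the weight, built from the binary expansion of $w(e)$ via doubling edges along a chain together with bypass paths for the $1$-bits. The paper phrases the same construction recursively (repeatedly halving $w(e)$ and splitting off a length-$b$ bypass when the current value is odd, stopping when the value reaches~$1$), whereas you describe the resulting gadget directly via Horner's scheme; the gadgets differ only cosmetically, and your level/decomposition argument for the global path-count identity is exactly what the paper leaves implicit.
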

\begin{proof}
  Note that $k$ is at least the size of the binary representation of
  the largest weight in $G$.  Define a mapping $b \colon E \to \N$
  with $b(e) = k$ for all $e \in E$.  Define~$G'$ so that it is
  obtained from~$G$ as follows.  Let $e \in E$ with $b(e) > 1$.  If
  $w(e) = 1$ then replace~$e$ by a fresh path of length~$b(e)$ (with
  $w(e') = b(e') = 1$ for all edges~$e'$ on that path).  If $w(e) = 2
  j$ for some $j \in \N$ then introduce a fresh vertex $v$ and two
  fresh edges $e_1, e_2$ from $s(e)$ to~$v$ with $b(e_1) = b(e_2) =
  w(e) = 1$ and another fresh edge $e_3$ from $v$ to~$t(e)$ with
  $b(e_3) = b(e) - 1$ and $w(e_3) = j$.  Finally, if $w(e) = 2 j + 1$
  for some $j \in \N$ then proceed similarly, but additionally
  introduce fresh vertices that create a new path of length~$b(e)$
  from $s(e)$ to~$t(e)$ (with $w(e') = b(e') = 1$ for all edges~$e'$
  on that path).  By this construction, every edge~$e$ is eventually
  replaced by $w(e)$ paths of length~$k$.  The construction is
  illustrated in Figure~\ref{fig:PosMat-multigraph-to-graph}.

  For the logspace claim, note that it is not necessary to store the
  whole graph for this construction. The binary representation of~$k$
  has logarithmic size and can be stored, and a copy of~$k$ can be
  counted down, keeping track of the $b$-values in the construction.
  The edges can be dealt with one by one.  It is not necessary to
  store the values $w(e') = j$ for the created fresh edges; rather
  those values can be derived from the binary representation of the
  original weight~$w(e)$ and the current $b$-value (acting as a
  ``pointer'' into the binary representation of~$w(e)$).
\end{proof}

\begin{restatable}{lemma}{lemAddOnePath} \label{lem-add-one-path}
  Given an unweighted multi-graph $G = (V, E, s, t)$ and $v_0, v_1 \in
  V$, one can compute in logspace unweighted multi-graphs
  $G_0 = (V_0, E_0, s_0, t_0)$ and $G_1 = (V_1, E_1, s_1, t_1)$ with
  $V_0 \supseteq V$ and $V_1 \supseteq V$ such that for all $n \in \N$
  we have $N(G_0,v_0,v_1,n+2) = N(G,v_0,v_1,n)$ and
  $N(G_1,v_0,v_1,n+2) = N(G,v_0,v_1,n) + 1$.
\end{restatable}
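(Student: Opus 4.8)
The plan is to produce $G_0$ and $G_1$ by plugging constant-size gadgets near $v_0$ and $v_1$, exploiting that in the situation where the lemma is applied (and after the reductions sketched below) $v_0$ has no incoming edge and $v_1$ no outgoing edge: recall that in Lemma~\ref{lem-PosMat-to-multigraph} the vertex $v_0$ is a source and $v^+,v^-$ are sinks, and this is preserved by the edge-to-path expansion of Lemma~\ref{lem-PosMat-multigraph-to-graph}. Assume first that $v_0$ is a source, $v_1$ a sink, and $v_0\neq v_1$. For $G_0$, add two fresh vertices $c_1,c_2$, redirect every edge $v_0\to x$ of $G$ to an edge $c_2\to x$, and add the edges $v_0\to c_1$ and $c_1\to c_2$. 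Since $v_0$ is a source, it is visited only at step $0$, so every walk from $v_0$ in $G_0$ begins with the two forced steps $v_0\to c_1\to c_2$ and thereafter behaves exactly like a walk in $G$ started at $v_0$; this bijection increases the length by exactly $2$, so $N(G_0,v_0,v_1,n+2)=N(G,v_0,v_1,n)$ for all $n\in\N$ (for $n=0$ both sides are $0$ because $v_0\neq v_1$). For $G_1$, take $G_0$ and additionally insert a fresh vertex $a$ with the edges $v_0\to a$, a self-loop $a\to a$, and $a\to v_1$. As $v_0$ is a source, $a$ is entered at most once (only via $v_0\to a$), and as $v_1$ is a sink a walk reaching $v_1$ stops there; hence the only new walks from $v_0$ to $v_1$ are $v_0\to a\to a\to\cdots\to a\to v_1$, exactly one of each length $\geq 2$, so $N(G_1,v_0,v_1,n+2)=N(G_0,v_0,v_1,n+2)+1=N(G,v_0,v_1,n)+1$. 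Both constructions are clearly logspace.

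To discharge the source/sink assumption (still keeping $v_0\neq v_1$, which holds in all applications), I would first rewrite $G$ so that $v_0$ becomes a source and $v_1$ a sink without changing the walk counts $N(\cdot,v_0,v_1,n)$: redirect every edge $u\to v_0$ to a fresh vertex $\hat v_0$ carrying a copy of $v_0$'s outgoing edges (so $v_0$ is visited only at step $0$ while $\hat v_0$ plays the role of ``$v_0$ revisited''), and symmetrically replace each outgoing edge $v_1\to x$ by $\bar v_1\to x$ for a fresh $\bar v_1$ while adding $u\to\bar v_1$ for each edge $u\to v_1$ with $u\neq v_1$. A routine bijection check shows these local rewrites preserve $N(\cdot,v_0,v_1,n)$ for every $n$, and then the construction above applies verbatim. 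The case $v_0=v_1$ is genuinely exceptional and I would simply exclude it: a short generating-function argument shows one cannot in general realise $N(G,v_0,v_0,n)+1$ as a closed-walk count, since any such count equals $1$ at length $0$.

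The real difficulty throughout is that ``path'' here means walk, so walks may revisit $v_0$ and $v_1$ freely. A naive length-two prefix hung on $v_0$'s outgoing edges would then add $2$ to the length \emph{each} time the walk re-departs from $v_0$ rather than once globally, and a naive spine gadget would be prolongable past $v_1$ and hence contribute many extra walks of a given length instead of one. Both pathologies vanish precisely when $v_0$ is a source and $v_1$ a sink, so the substantive part of the argument is checking that the source/sink rewriting preserves all relevant walk counts and introduces no spurious walks through $\hat v_0$ or $\bar v_1$.
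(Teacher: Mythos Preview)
Your approach is correct and shares the same two ingredients as the paper's proof: (a) replace $v_0$ and $v_1$ by fresh copies so they become a genuine source and sink, and (b) for $G_1$ attach a single vertex with a self-loop and edges $v_0\to v$, $v\to v_1$ to contribute exactly one extra walk of each length $\ge 2$. Your construction of $G_1$ is identical to the paper's.

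The difference is in how (a) is packaged. You do it in two passes: first a source/sink rewriting (introducing $\hat v_0,\bar v_1$), then a separate two-step prefix $v_0\to c_1\to c_2$ for the length shift. The paper folds both into one move: it redirects \emph{all} edges adjacent to $v_0$ (incoming and outgoing) to a fresh vertex $v_0^*$, and all edges adjacent to $v_1$ to a fresh $v_1^*$, then adds single edges $v_0\to v_0^*$ and $v_1^*\to v_1$. This makes $v_0$ a pure source and $v_1$ a pure sink in $G_0$ \emph{and} supplies the two extra steps in one go, so the bijection $N(G_0,v_0,v_1,n+2)=N(G,v_0,v_1,n)$ is immediate without any preliminary rewriting or any appeal to the shape of the graphs produced by Lemmas~\ref{lem-PosMat-to-multigraph} and~\ref{lem-PosMat-multigraph-to-graph}. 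Your version works, but the paper's is shorter and avoids the case analysis you flag in your last paragraph. (Both arguments tacitly need $v_0\neq v_1$, which you note explicitly; the paper is silent on this, but the case does not arise in the application.)
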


\begin{proof}
For $G_0$ 
redirect all edges adjacent to~$v_0$ to a fresh vertex~$v_0^*$, and similarly 
redirect all edges adjacent to~$v_1$ to a fresh vertex~$v_1^*$.
Then add an edge from $v_0$ to~$v_0^*$, and an edge from $v_1^*$ to~$v_1$.

For~$G_1$ do the same, and in addition add a fresh vertex~$v$, and add edges from $v_0$ to~$v$, and from $v$ to~$v_1$, and a loop on~$v$. This adds a path from $v_0$ to~$v_1$ of length $n+2$.
\end{proof}

\begin{lemma} \label{lem-graph-to-DFA}
Given an unweighted multi-graph $G = (V, E, s, t)$,  $v_0, v_1 \in V$ and a number~$d$ in unary so that $d$ is at least the maximal out-degree of any node in~$G$, one can compute 
in logspace a DFA $\A=(Q,\Sigma,q_0,F,\Delta)$ with $\Sigma = \{a, b\}$ such that for all $n \in \N$ we have $N(G,v_0,v_1,n) = N(\A,\vec{p})$ where $\vec{p}(a) = n$ and $\vec{p}(b) = n \cdot (d-1)$.
\end{lemma}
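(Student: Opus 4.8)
The plan is to replace each node of~$G$ by a small deterministic \emph{selection gadget} over $\{a,b\}$ that, upon reading a block of exactly $d$ letters consisting of one $a$ and $d-1$ many $b$'s, lets the automaton commit to exactly one outgoing edge of that node, with the \emph{position} of the unique $a$ inside the block encoding which edge is chosen. A path $e_1 e_2 \cdots e_n$ of length~$n$ from $v_0$ to~$v_1$ in~$G$ then corresponds to a word that is the concatenation of $n$ such blocks, i.e.\ a word with exactly $n$ occurrences of~$a$ and $n\cdot(d-1)$ occurrences of~$b$ --- which is precisely the Parikh vector~$\vec{p}$ in the statement.

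\textbf{The construction.} Fix for each $v\in V$ an arbitrary ordering $e^v_1,\ldots,e^v_{k_v}$ of its outgoing edges, where $k_v = d_G^-(v) \le d$. The DFA $\A$ has, for every $v\in V$, states $v = v^{(0)}, v^{(1)}, \ldots, v^{(d-1)}$, with transitions $v^{(i)} \xrightarrow{b} v^{(i+1)}$ for $0\le i\le d-2$, and, for $0\le i\le k_v-1$, a transition $v^{(i)} \xrightarrow{a\, b^{\,d-1-i}} t(e^v_{i+1})$ (using the paper's convention that a word-labelled transition stands for a chain through fresh intermediate states). No further transitions are added; in particular $v^{(i)}$ has no $a$-transition when $i\ge k_v$, which is exactly what keeps $\A$ deterministic and blocks the ``non-existent'' choices. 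Set $q_0 = v_0$ and $F = \{v_1\}$, and $\Sigma=\{a,b\}$. Since $d$ is given in unary, all states (there are $O(d\cdot(|V|+|E|))$ of them) and transitions depend only locally on $G$ and $d$ and can be enumerated in logarithmic space, so $\A$ and $\vec p$ (with $\vec p(a)=n$, $\vec p(b)=n(d-1)$) are logspace-computable.

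\textbf{Correctness.} This is a bijection argument. On the one hand, reading $b^{\,j-1} a\, b^{\,d-j}$ from a node-state $v^{(0)}$ with $1\le j\le k_v$ is the \emph{unique} way to get from $v^{(0)}$ to the node-state $t(e^v_j)^{(0)}$ while consuming exactly $d$ letters, and this block contains exactly one~$a$. On the other hand, in any run of $\A$ the only way to move from a node-state $w^{(0)}$ to another node-state is to consume one such block, and a run that does not end in a node-state is not accepting. Hence every accepting word whose Parikh image is~$\vec p$ --- so with exactly $n$ letters~$a$ --- decomposes uniquely into $n$ blocks, which determines a unique edge sequence $e_1\cdots e_n$ forming a path from $v_0$ to~$v_1$; conversely each such path yields exactly one accepting word with Parikh image~$\vec p$, where injectivity uses that distinct outgoing edges of a node get distinct positions~$j$ and hence distinct blocks. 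This gives $N(\A,\vec p) = N(G,v_0,v_1,n)$. One also checks the degenerate cases, all handled automatically: $v_0=v_1$ corresponds to the empty word and the length-$0$ path; nodes of out-degree~$0$ (including $v_0$) simply kill all longer runs; $d=1$ degenerates to $G$ itself read over a single letter; and parallel edges are separated by their positions.

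\textbf{Main obstacle.} The only real care needed is in getting the gadget simultaneously right in three respects: (i) it must be a genuine partial DFA (at most one transition per state and letter), (ii) it must be \emph{tight}, i.e.\ produce no spurious accepting words of Parikh image~$\vec p$ from misaligned or partial blocks, and (iii) the per-block letter budget must be exactly $(1,d-1)$ so that it matches~$\vec p$ after $n$ blocks. Verifying that arbitrary accepting runs decompose cleanly into blocks, and that the resulting map between words and paths is a bijection, is where the argument must be spelled out carefully, though none of it is technically deep; the logspace bound is routine once the construction is local in $G$ and linear in the unary number~$d$.
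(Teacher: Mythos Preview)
Your proposal is correct and takes essentially the same approach as the paper: both encode each outgoing edge of a node by the position of the unique $a$ in a length-$d$ block of the form $b^{j-1} a\, b^{d-j}$, yielding a bijection between length-$n$ paths and accepted words with Parikh image $(n, n(d-1))$. Your write-up is more explicit about naming the spine states $v^{(i)}$ and about the block-decomposition argument, but the construction coincides with the one sketched in the paper and its accompanying figure.
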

\begin{proof}
Define~$\A$ so that $Q \supseteq V$, $q_0 = v_0$, and $F = \{v_1\}$.
Include states and transitions in~$\A$ so that for every edge~$e$ (from $v$ to~$v'$, say) in~$G$ there is a run from $v$ to~$v'$ in~$\A$ of length~$d$ so that exactly one transition on this run is labelled with~$a$, and the other $d-1$ transitions are labelled with~$b$.
Importantly, each edge~$e$ is associated to exactly one such run.
The construction is illustrated in Figure~\ref{fig:graph-to-DFA}.
The DFA~$\A$ is of quadratic size and can be computed in logspace.
It follows from the construction that any path of length~$n$ in~$G$ corresponds to a run of length $n \cdot d$ in~$\A$, with $n$ transitions labelled with~$a$, and $n \cdot (d-1)$ transitions labelled with~$b$.
This implies the statement of the lemma.
\end{proof}

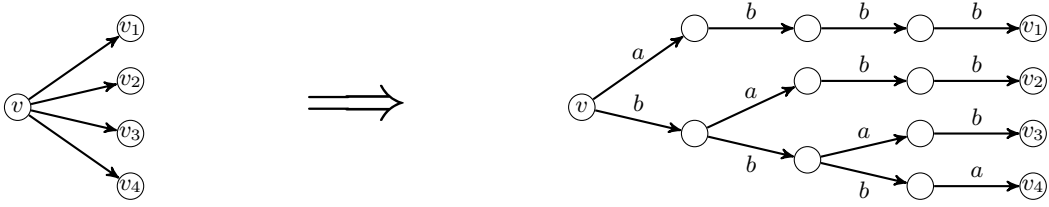
\begin{figure*}[t]
\begin{center}
\begin{tikzpicture}[xscale=1.5,yscale=0.7,LMC style]
\node[state] (v) at (-2,0) {$v$};
\node[state] (v1) at (-1,1.5) {$v_1$};
\node[state] (v2) at (-1,0.5) {$v_2$};
\node[state] (v3) at (-1,-0.5) {$v_3$};
\node[state] (v4) at (-1,-1.5) {$v_4$};
\path[->] (v) edge (v1);
\path[->] (v) edge (v2);
\path[->] (v) edge (v3);
\path[->] (v) edge (v4);
\node (arrow) at (1,0) {\Huge $\Longrightarrow$};
\node[state] (w) at (3,0) {$v$};
\node[state] (w1) at (7,1.5) {$v_1$};
\node[state] (w2) at (7,0.5) {$v_2$};
\node[state] (w3) at (7,-0.5) {$v_3$};
\node[state] (w4) at (7,-1.5) {$v_4$};
\node[state] (w11) at (4,1.5) {};
\node[state] (w21) at (5,1.5) {};
\node[state] (w31) at (6,1.5) {};
\node[state] (w12) at (4,-0.5) {};
\node[state] (w22) at (5,+0.5) {};
\node[state] (w32) at (6,+0.5) {};
\node[state] (w23) at (5,-1.0) {};
\node[state] (w33) at (6,-0.5) {};
\node[state] (w34) at (6,-1.5) {};

\path[->] (w) edge node[above] {$a$} (w11);
\path[->] (w) edge node[above] {$b$} (w12);
\path[->] (w11) edge node[above] {$b$} (w21);
\path[->] (w21) edge node[above] {$b$} (w31);
\path[->] (w31) edge node[above] {$b$} (w1);
\path[->] (w12) edge node[above] {$a$} (w22);
\path[->] (w22) edge node[above] {$b$} (w32);
\path[->] (w32) edge node[above] {$b$} (w2);
\path[->] (w12) edge node[below] {$b$} (w23);
\path[->] (w23) edge node[above] {$a$} (w33);
\path[->] (w23) edge node[below] {$b$} (w34);
\path[->] (w33) edge node[above] {$b$} (w3);
\path[->] (w34) edge node[above] {$a$} (w4);

\end{tikzpicture}
\end{center}
\caption{Illustration of the construction of the DFA from Lemma~\ref{lem-graph-to-DFA}.
We assume $d=4$.}
\label{fig:graph-to-DFA}
\end{figure*}

We prove the following proposition from the main text:

\propDFAResults*
\begin{proof}[Further details to Part~(i)]
  We deferred showing the  lower bounds from
  Part~(i). The classical \sharpL-hard counting problem is the
  computation of the number of paths between a source node $s$ and a
  target node $t$ in a directed acyclic graph $G=(V,E)$
  ~\cite{MV97}. Let $m=\#V$ and $d$ be the maximum out-degree of
  $G$. Let $G_t$ be the multi-graph obtained by adding a loop at node
  $t$. Then, since every path in $G$ from $s$ to $t$ has length at
  most $m$, the number of paths from $s$ to $t$ in $G$ is
  $N(G_t,s,t,m)$.  Now let $\mathcal{A}=(Q,\{a,b\},s,\{t\},\Delta)$ be
  the DFA obtained from Lemma~\ref{lem-graph-to-DFA}. Hence, we have $N(G_t,s,t,m) = N(\A,\vec{p})$,
  where $\vec{p}(a)\defeq m$ and $\vec{p}(b)\defeq m\cdot (d-1)$.
  
  \PL-hardness for \PIC\ can be shown by a reduction from the following problem: 
  Given a directed acyclic graph $G=(V,E)$ and three nodes $s, t_0, t_1$, is the 
  number of paths from $s$ to $t_0$ larger than the number of paths from $s$ to $t_1$.
  We then apply the above reduction to the triples $(G,s,t_0)$ and $(G,s,t_1)$. This
  yields pairs $(\A,\vec{p})$ and $(\B,\vec{q})$, where $\A$ and $\B$ are DFA over the alphabet $\{a,b\}$ and 
  $\vec{p}$ and $\vec{q}$ are unary coded Parikh vectors,
  such that $N(\A,\vec{p})$ (resp.~$N(\B,\vec{q})$) is the number of paths in $G$ from $s$ to $t_0$ (resp.~$t_1$).
  Finally, note that the reduction yields $\vec{p} = \vec{q}$.
  This concludes the proof.
\end{proof}
\begin{proof}[Further details to Part~(ii)]
In the main part we gave a reduction from \textsc{\#3SAT} to
\countP. We now prove \PP-hardness of \PIC\ by reusing this reduction.
It is \PP-complete to check for two given 3-CNF formulas $F$ and $G$
whether $F$ has more satisfying assignments than $G$.\footnote{Note
  that every language in \PP\ is of the form $\{ x : f(x) > g(x) \}$
  for two \sharpP-functions $f$ and $g$. From $x$ one can construct
  two 3-CNF formulas $F$ and $G$ such that the number of satisfying
  assignments of $F$ (resp.~$G$) is $f(x)$ (resp.~$g(x)$).}
W.l.o.g. one can assume that $F$ and $G$ use the same set of Boolean
variables (we can add dummy variables if necessary) and the same
number of clauses (we can duplicate clauses if necessary). We now
apply to $F$ and $G$ the reduction from the \sharpP-hardness proof of
\countP\ from the main part of the paper.  We obtain two pairs
$(\A,\vec{p})$ and $(\B,\vec{q})$, where $\A$ and $\B$ are DFA and
$\vec{p}$ and $\vec{q}$ are Parikh vectors encoded in binary, such
that $N(\A,\vec{p})$ (resp.~$N(\B,\vec{q})$) is the number of
satisfying assignments of $F$ (resp.~$G$). But since $F$ and $G$ are
3-CNF formulas with the same variables and the same number of clauses,
it follows that $\A$ and $\B$ are DFA over the same alphabet and
$\vec{p} = \vec{q}$. This concludes the proof.
\end{proof}

\begin{proof}[Further details to Part~(iii)]
  The above lemmas enable us to prove Part~(iii). Consider an instance
  of \PosMatPow, i.e., a square integer matrix $M \in \Z^{m \times
    m}$, a linear function $f\colon \Z^{m \times m} \to \Z$ with
  integer coefficients, and a positive integer~$n$.  Using
  Lemma~\ref{lem-PosMat-to-multigraph} we can compute in logspace
  edge-weighted multi-graphs $G_+$ with vertices $v^+_0, v^+$ and
  $G_-$ with vertices $v^-_0, v^-$ such that
 \[
  f(M^n) = N(G_+, v^+_0, v^+, n+2) - N(G_-, v^-_0, v^-, n+2)\,.
 \]
Let $k = 1 + \max_{e \in E} \lfloor \log_2 w(e) \rfloor$, where $E$ is the union of the edge sets of $G_+$ and $G_-$.
Using Lemma~\ref{lem-PosMat-multigraph-to-graph} we can compute unweighted multi-graphs $G_+', G_-'$ such that
\begin{align*}
 N(G_+, v^+_0, v^+, n+2) &= N(G_+', v^+_0, v^+, (n+2) \cdot k) \qquad \text{and} \\
 N(G_-, v^-_0, v^-, n+2) &= N(G_-', v^-_0, v^-, (n+2) \cdot k)\,.
\end{align*}
Hence,
\[
 f(M^n) = N(G_+', v^+_0, v^+, k \cdot (n+2)) - N(G_-', v^-_0, v^-, k \cdot (n+2)) \,.
\]
Using Lemma~\ref{lem-add-one-path} we can compute unweighted multi-graphs $G_+'', G_-''$ such that
\begin{align*}
 1 + N(G_+', v^+_0, v^+, (n+2) \cdot k) &= N(G_+'', v^+_0, v^+, (n+2) \cdot k + 2) \\  N(G_-', v^-_0, v^-, (n+2) \cdot k) &= N(G_-'', v^-_0, v^-, (n+2) \cdot k + 2) \,.
\end{align*}
Hence,
\begin{align*}
 f(M^n) + 1 & = N(G_+'', v^+_0, v^+, (n+2) \cdot k + 2) \\
            & \ \mbox{} - N(G_-'', v^-_0, v^-, (n+2) \cdot k + 2)\,.
\end{align*}
Let $d$ denote the maximal out-degree of any node in $G_+''$
or~$G_-''$.  Let $\vec{p} \colon \{a,b\} \to \N$ with $\vec{p}(a) =
(n+2) \cdot k + 2$ and $\vec{p}(b) = ((n+2) \cdot k + 2) \cdot (d-1)$.
Using Lemma~\ref{lem-graph-to-DFA} we can compute DFA $\A, \B$ over
the alphabet~$\{a,b\}$ such that
\begin{align*}
N(G_+'', v^+_0, v^+, (n+2) \cdot k + 2) &= N(\A,\vec{p}) \qquad \text{and} \\
N(G_-'', v^-_0, v^-, (n+2) \cdot k + 2) &= N(\B,\vec{p})\,.
\end{align*}
Hence,
\[
 f(M^n) + 1 = N(\A,\vec{p}) - N(\B,\vec{p}) \,.
\]
So $f(M^n) \ge 0$ if and only if $f(M^n) + 1 > 0$ if and only if $N(\A,\vec{p}) > N(\B,\vec{p})$.
All mentioned computations can be performed in logspace.
\end{proof}

We prove the following proposition from the main text:

\propVariableUpper*

\begin{proof}
It suffices to show the statements for the \#-classes.
Let us consider (i). Let $\A$ be the input CFG and
$\vec{p}$ be the input Parikh vector, which is encoded in unary
notation.  A non-deterministic polynomial-time machine can first
non-deterministically produce an arbitrary word $w$ with
$\parikh(w)=\vec{p}$. Then, it checks in polynomial time whether $w
\in L(\A)$, in which case it accepts.

For (ii) we argue as in the proof of the \sharpL\ upper bound from Proposition~\ref{prop:dfa-results}(i), except
that we simulate the deterministic power set automaton for the input NFA. For this, polynomial space is needed.
Moreover, also the accumulated Parikh image of the prefix guessed so far needs polynomial space.

Finally, for (iii) we can argue as in (i) by using a non-deterministic
exponential time machine.
\end{proof}

We prove the following proposition from the main text:

\propFixedLower*
\begin{proof}
We show those hardness results by developing a generic approach that
only requires minor modifications in each case. In general, we
simulate computations of space-bounded Turing machines as words of NFA
and CFG, respectively. Let $\M = (Q,\Gamma,\Delta)$ be a Turing
machine that uses $f(n) \geq n$ tape cells during a computation on an input
of length $n$, where $\Delta \subseteq Q \times \Gamma \times
Q \times \Gamma \times \{ \leftarrow, \rightarrow
\}$. Unsurprisingly, $(q,a,q',a',d)\in \Delta$ means that if $\M$ is
in control state $q$ reading $a$ at the current head position then
$\M$ can change its control state to $q'$ while writing $a'$ and
subsequently moving its head in direction $d$. Without loss of
generality we may assume that $\M$ uses alphabet symbols $0,1\in
\Gamma$ on its working tape as well as $\triangleright, \triangleleft
\in \Gamma$ as delimiters indicating the left respectively right
boundary of the working tape. Consequently, a \emph{valid
  configuration of $\M$} is a string of length $f(n)+1$ over the
alphabet $\Sigma \defeq \{ \triangleright, 0, 1, \triangleleft \} \cup
Q$ from the language
\[
(\triangleright\cdot \{0,1\}^*\cdot Q\cdot \{0,1\}^*\cdot
\triangleleft) \cup (Q\cdot \triangleright\cdot \{0,1\}^* \cdot
\triangleleft).
\]
With no loss of generality, we moreover make the following assumptions
on $\M$: (i) the \emph{initial configuration} of $\M$ when run on
$x\in \{0,1\}^*$ is the string  $\triangleright \cdot q_0 \cdot x \cdot 0^{f(n)-n} \cdot
\triangleleft$ for some designated control state
$q_0\in Q$; (ii) delimiters are never changed by $\M$ and $\M$ adds no
further delimiter symbols during its computation; and (iii) the
\emph{accepting configuration} of $\M$ is $\triangleright \cdot q_f \cdot
0^{f(n)} \cdot \triangleleft$ for some designated control state $q_f\in Q$,
and any other configuration is \emph{rejecting}. If $\M$ is $f(n)$-time
bounded (and thus $f(n)$-space bounded) then we assume that all
computation paths of $\M$ are of length $f(n)$.

We now turn towards proving
Proposition~\ref{prop:nfa-cfg-fixed-lower}(i). Assume that $\M$ is an
$f(n)$-time bounded non-deterministic Turing machine for a polynomial $f(n)$ and $x
\in \{0,1\}^n$ is an input for $\M$ of length $n$.
We encode
computations of $\M$ as strings over the extended alphabet
$\Sigma_\$\defeq \Sigma \cup \{\$\}$ where $\$$ serves as a separator
between consecutive configurations. Hence a valid computation is
encoded as a string in the language $(\Sigma^{f(n)+1} \cdot
\$)^*$. Let $L_{\text{val}} \subseteq
\Sigma_\$^*$ be the language consisting of
all strings that encode valid computations of
$\M$ when run on $x \in \{0,1\}^n$, and let $L_{\text{inv}}\defeq
\Sigma_\$^* \setminus L_{\text{val}}$.  It is shown in \cite{SM73}
that an NFA for $L_{\text{inv}}$ can be constructed in logspace from the input $x$.
Moreover, let $L_\text{acc}\subseteq
\Sigma^{f(n)+1} \cdot \$$ be the singleton language containing the string
representing the accepting configuration, and let
$L_{\text{rej}}\subseteq \Sigma^{f(n)+1} \cdot \$$ be the set of all
encodings of rejecting configurations. It is straightforward to construct NFA for these
sets in logspace. Hence, we can also construct in logspace
NFA $\A$ and $\B$ such that
\[
L(\A) =  (\Sigma_\$^* \cdot L_\text{acc}) \cup (L_\text{inv} \cap (\Sigma_\$^* \cdot L_\text{rej})),
\]
i.e., $L(\A)$ contains those strings that end in an accepting
configuration and those strings not representing a valid computation
that end in a rejecting configuration, and likewise $\B$ is such that
\[
L(\B) = (  L_\text{inv} \cap (\Sigma^*_\$ \cdot L_\text{acc}) ) \cup
   ( \Sigma_\$^* \cdot L_{\text{rej}} ).
\]
Set $g(n)\defeq f(n) \cdot(f(n) + 2)$. We then get 
\begin{align*}
  & \# (L(\A) \cap \Sigma_\$^{g(n)}) - \# (L(\B) \cap \Sigma_\$^{g(n)})\\
 = \ \ & \# (\Sigma_\$^* \cdot L_{\text{acc}} \cap \Sigma_\$^{g(n)}) \ + \\
        & \# (L_\text{inv} \cap \Sigma_\$^* \cdot L_\text{rej} \cap \Sigma_\$^{g(n)} ) \ - \\
        & \# (L_\text{inv} \cap \Sigma^*_\$ \cdot L_\text{acc} \cap \Sigma_\$^{g(n)}) \ - \\
        & \# (\Sigma_\$^* \cdot L_{\text{rej}} \cap  \Sigma_\$^{g(n)}) \\
= \ \ & \# (L_{\text{val}} \cap \Sigma_\$^* \cdot L_{\text{acc}} \cap \Sigma_\$^{g(n)} ) \  -\\
       & \# (L_{\text{val}} \cap \Sigma_\$^* \cdot L_{\text{rej}} \cap \Sigma_\$^{g(n)} ).
\end{align*}
Consequently, $\# (L(\A) \cap \Sigma_\$^{g(n)}) > \# (L(\B) \cap
\Sigma_\$^{g(n)})$ if and only if $\M$ accepts $x$. Let $h\colon
\Sigma_\$ \to (0^*\cdot 1 \cdot 0^* \cap \{0,1\}^{\# \Sigma_\$})$ be a
bijection that maps every symbol in $\Sigma_\$$ to a string consisting
of exactly one symbol $1$ and $\#\Sigma_\$-1$ symbols $0$. In
particular, note that $\parikh(h(a))=\parikh(h(b))$ for all $a,b\in
\Sigma_\$$. Moreover, let $\A_h$ and $\B_h$ be the NFA recognising the
homomorphic images of $L(\A)$ and $L(\B)$ under $h$. We now have
\begin{align*}
  \# (L(\A) \cap \Sigma_\$^{g(n)}) = \# (L(\A_h)\cap \{0,1\}^{g(n)\cdot \#\Sigma_\$}) =
  N(\A_h,\vec{p}),
\end{align*}
where $\vec{p}(0)\defeq g(n) \cdot (\#\Sigma_\$-1)$ and
$\vec{p}(1)\defeq g(n)$, and analogously $\# (L(\B) \cap
\Sigma_\$^{g(n)}) = N(\B_h,\vec{p})$. Hence,
\begin{align*}
       & N(\A_h,\vec{p}) > N(\B_h,\vec{p})\\
  \iff & \# (L(\A) \cap \Sigma_\$^{g(n)}) > \# (L(\B) \cap
  \Sigma_\$^{g(n)})\\
  \iff & \M \text{ accepts } x.
\end{align*}
This concludes the proof of
Proposition~\ref{prop:nfa-cfg-fixed-lower}(i).

In order to prove Proposition~\ref{prop:nfa-cfg-fixed-lower}(ii), let
$\M$ be an $f(n)$-space bounded Turing machine for a polynomial
$f(n)$. In particular, with no loss of generality we can assume that
if $\M$ has an accepting run then it has one which accepts after
$2^{f(n)}$ steps. All we have to do in order to prove
$\PSPACE$-hardness of \PIC\ is to make two adjustments to the
construction given for
Proposition~\ref{prop:nfa-cfg-fixed-lower}(i). First, we redefine $\A$
and $\B$ such that they recognise the languages
\begin{align*}
  L(\A) & \defeq \Sigma_\$^* \cdot L_{\text{acc}} & 
  L(\B) & \defeq L_\text{inv} \cap (\Sigma^*_\$ \cdot L_\text{acc}).
\end{align*}
Second, we let $g(n)\defeq 2^{f(n)}\cdot (f(n)+2)$. Consequently we
have
\begin{align*}
  & \# (L(\A) \cap \Sigma_\$^{g(n)}) - \# (L(\B) \cap \Sigma_\$^{g(n)}) > 0\\
  \iff & \M \text{ has at least one accepting run on } x.
\end{align*}
Hence, by keeping $\vec{p}$ defined as above with the redefined $g(n)$,
we have
\begin{align*}
  N(\A_h, \vec{p}) > N(\B_h, \vec{p}) \iff \M \text{ accepts } x.
\end{align*}
Note that even though $g(n)$ is exponential, its binary representation
requires only polynomially many bits.

Finally, we turn to the proof of the $\PEXP$ lower bound in
Proposition~\ref{prop:nfa-cfg-fixed-lower}(iii). To this end, let $\M$
be an $f(n)$-time bounded non-deterministic Turing machine where
$f(n)=2^{p(n)}$ for some polynomial $p(n)$. We could almost
straightforwardly reuse the construction given for
Proposition~\ref{prop:nfa-cfg-fixed-lower}(i) except that we cannot
construct an appropriate NFA recognising $L_{\text{inv}}$ in
logspace. The reason is that the working tape of $\M$ has
exponential length and we cannot uniquely determine a string of
exponential length with an NFA which, as stated above, depends on
$f(n)$. This can, however, be achieved by exploiting the exponential
succinctness of context-free grammars. More specifically, in
\cite{HuntRS76} it is shown that a CFG for the language
$L_{\text{inv}}$ can be constructed in logspace from the machine input
$x$. The $\PEXP$-hardness result of
Proposition~\ref{prop:nfa-cfg-fixed-lower}(iii) can then be shown
analogously to the \PP-hardness proof from
Proposition~\ref{prop:nfa-cfg-fixed-lower}(i) by encoding $\vec{p}$ in
binary.
\end{proof}

We prove the following  proposition from the main text:

\propDFAUnaryUpper*
\begin{proof}[Proof of Part~(i)]
  We show that the compressed word problem for unary DFA is in~\LOGSPACE.  Hence,
  \PIC\ is in~\LOGSPACE \ for unary DFA with Parikh vectors encoded in
  binary. Let $\A = (Q,\Sigma,q_0,F,\Delta)$ be the given unary
  DFA. W.l.o.g.\ we can assume that $Q = \{0, \ldots, m, m+1, \ldots,
  m+p-1\}$, where $q_0 = 0$, $i \xrightarrow{a} i+1$ for $0 \leq i <
  m+p-1$ and $m+p-1 \xrightarrow{a} m$. The numbers $m$ and $p$ can be
  computed in logspace by following the unique path of states from the
  initial state. For a given number $n$ encoded in binary we then have
  $a^n \in L(\A)$ if and only if $n \leq m$ and $n \in F$ or $n > m$
  and $((n-m) \bmod p) + m \in F$.  This condition can be checked in
  logspace, since all arithmetic operations on binary encoded numbers
  can be done in logspace (division is the most difficult one \cite{HAMB02}
  but note that here we only have to divide by a number $p$ with a
  logarithmic number of bits in the input size).
\end{proof}

\begin{proof}[Proof of Part~(ii)]
  Regarding hardness, one can reduce from the graph reachability
  problem, i.e., whether for a given directed graph $G = (V,E)$ there
  is a path from $s$ to $t$. By adding a loop at node $t$, this is
  equivalent to the existence of a path in $G$ from $s$ to $t$ of
  length $n=\#V$. Let $\A$ be the NFA obtained from $G$ by labeling
  every edge with the terminal symbol $a$ and making $s$ (resp., $t$)
  the initial (resp., unique final) state.  Moreover, let $\B$ be an
  NFA with $L(\B) = \emptyset$.  Then $N(\A,n) > N(\B,n)$ if and only
  if $a^n \in L(\A)$ if and only if there is a path in $G$ from $s$ to
  $t$ of length $n=|V|$.

  We now turn towards the upper bound. For $a, b \in \N$ we write $a +
  b \N$ for the set $\{a + b \cdot i : i \in \N\}$. Given a unary NFA
  $\A=(Q,\{a\},q_0,F,\Delta)$ with $p, q \in Q$ and $n \in \N$ we
  write $p \xrightarrow{n} q$ if there is a run of length~$n$ from $p$
  to~$q$. A simple algorithm follows from recent work by
  Sawa~\cite{Sawa13}:
  \begin{lemma}[{\cite[Lemma 3.1]{Sawa13}}] \label{lem-Sawa}
    Let $\A=(Q,\{a\},q_0,F,\Delta)$ be a unary NFA with $m \defeq |Q| \ge
    2$.  Let $n \ge m^2$.  Then $a^n \in L(\A)$ if and only if there are
    $q \in Q$, $q_f \in F$, $b \in \{1, \ldots, m\}$, and $a \in \{m^2
    - b - 1, \ldots, m^2 - 2\}$ with $n \in a + b \N$ and $q_0
    \xrightarrow{m-1} q \xrightarrow{b} q \xrightarrow{a - (m-1)}
    q_f$.
  \end{lemma}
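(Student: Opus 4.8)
My plan is to prove the two implications separately; the right-to-left one is an easy construction, the left-to-right one is the substantial direction. For $(\Leftarrow)$ I would argue directly: given $q,q_f,b,a$ as in the statement and $n=a+bt$ with $t\in\N$, note that $t\ge 0$ since $a\le m^2-2<m^2\le n$, and that $a-(m-1)\ge m^2-2m\ge 0$ since $m\ge 2$. Concatenating a run $q_0\xrightarrow{m-1}q$, then $t$ traversals of the cycle $q\xrightarrow{b}q$, then a run $q\xrightarrow{a-(m-1)}q_f$ gives a run of $\A$ of length $(m-1)+tb+(a-(m-1))=n$ ending in $q_f\in F$, so $a^n\in L(\A)$.

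For $(\Rightarrow)$ I would start from an accepting run $\rho=p_0p_1\cdots p_n$ with $p_0=q_0$, $p_n\in F$ and $n\ge m^2$, and proceed in three steps. \emph{Step 1 (extract an early cycle):} by pigeonhole two of the $m+1$ states $p_0,\dots,p_m$ coincide; choosing a coinciding pair $p_i=p_j$ with $j-i$ minimal yields a simple cycle $C$ on the pairwise distinct states $p_i,\dots,p_{j-1}$ of length $b:=j-i\le m$, a prefix $q_0\xrightarrow{i}p_i$ with $i\le m-1$, and a suffix $p_i\xrightarrow{n-j}p_n$. \emph{Step 2 (rotate onto distance exactly $m-1$):} labelling the states of $C$ as $q^{(0)}=p_i,\dots,q^{(b-1)}=p_{j-1}$ with $q^{(\ell)}\to q^{((\ell+1)\bmod b)}$, the state $q^{(\ell)}$ is reachable from $q_0$ in $i+\ell+kb$ steps for every $k\ge 0$; for $\ell^*:=(m-1-i)\bmod b$ one checks $0\le\ell^*\le m-1-i$ and $i+\ell^*\le j-1$, so $q:=q^{(\ell^*)}=p_{i+\ell^*}$ is reachable in exactly $m-1$ steps, still lies on $\rho$, and hence $q\xrightarrow{N_1}p_n$ for $N_1:=n-i-\ell^*$; by the choice of $\ell^*$ one also has $N_1\equiv n-(m-1)\pmod b$. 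Thus $q_0\xrightarrow{m-1}q\xrightarrow{b}q\xrightarrow{N_1}q_f$ with $q_f:=p_n\in F$.

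\emph{Step 3 (pull the tail into the window):} the set $\{m^2-b-1,\dots,m^2-2\}$ has exactly $b$ consecutive elements, so it contains a unique $a$ with $a\equiv n\pmod b$; equivalently $a-(m-1)$ is the unique element of the length-$b$ interval $[m^2-b-m,\,m^2-m-1]$ in the residue class of $N_1$ modulo $b$. It then suffices to produce a run $q\xrightarrow{a-(m-1)}q_f$. Here I would use that, fixing the class $r:=N_1\bmod b$, the set of lengths $\ell\equiv r\pmod b$ for which a run $q\xrightarrow{\ell}q_f$ exists is closed under adding $b$ (prepend a traversal of $C$), hence equals $\{\ell_0+kb:k\ge 0\}$ for its minimum $\ell_0$, and then I would prove $\ell_0\le m^2-b-m$. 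This last bound — essentially: between two states of $\A$ one can realise any prescribed length modulo $b$ by a run that is not too long — can be obtained either by a cycle-removal argument on the suffix of $\rho$ (removing cycles while compensating their lengths modulo $b$ with reinserted copies of $C$) or, more cleanly, by invoking Chrobak's normal form for unary NFA — an $m$-state unary NFA is equivalent to one consisting of an initial path of length below $m^2$ followed by disjoint cycles of total length at most $m$ — and reading off the arithmetic-progression description of $L(\A)\cap\{a^n:n\ge m^2\}$. Given $\ell_0\le m^2-b-m\le a-(m-1)$ and $a-(m-1)\equiv N_1\pmod b$, we get $a-(m-1)\in\{\ell_0+kb:k\ge 0\}$, i.e.\ the required run, and with $n\equiv a\pmod b$ this yields $q,q_f,b,a$ as in the statement.

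The step I expect to be the main obstacle is Step 3, and within it the bound $\ell_0\le m^2-b-m$: one has to control a connecting run's length \emph{modulo} $b$ while confining it to a window of width only $b$, which is exactly where the $m^2$ threshold is needed and where matching constants against Chrobak's normal form requires care. The degenerate small cases (in particular $m=2$) I would check by hand.
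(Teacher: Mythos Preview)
The paper does not give its own proof of this lemma: it is quoted from Sawa's paper as \cite[Lemma~3.1]{Sawa13} and used as a black box in the proof of Proposition~\ref{prop:nfa-unary-upper}. So there is nothing in the present paper to compare your argument against; any detailed verification would have to go through Sawa's original proof.

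On your proposal itself: the backward direction and Steps~1--2 of the forward direction are correct, and the rotation trick to land on a cycle state at distance exactly $m-1$ is clean. The genuine gap is exactly where you place it, in Step~3. Neither of your two sketched approaches closes it as stated. A pigeonhole on positions $0,b,2b,\ldots$ in the suffix run lets you excise cycles of length divisible by~$b$ only while the current length is at least $mb$, so you obtain $\ell_0\le mb-1$; this falls short of the required $\ell_0\le m^2-b-m$ already for $b=m-1$ when $m\ge 3$, and for $b=m$ the cycle $C$ is Hamiltonian but runs in the right residue class modulo~$m$ need not be short. Invoking Chrobak's normal form gives eventual periodicity of $L(\A)$ with some period dividing the lcm of cycle lengths, which need not be your fixed~$b$, so it does not directly control lengths modulo~$b$ either. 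In short, committing to the particular~$b$ produced in Step~1 makes the constants in Step~3 hard to match; Sawa's argument presumably organises the decomposition differently, and you would need to consult it to see how the window $[m^2-b-1,\,m^2-2]$ is actually achieved.
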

  We use this lemma to show the following:

  \begin{proposition}\label{prop:nfa-unary-upper}
    The compressed word problem for unary NFA is in~\NL.
    Hence, \PIC\ is in~\NL\ for unary NFA with Parikh vectors encoded in binary.
  \end{proposition}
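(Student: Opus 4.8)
The plan is to decide the compressed word problem for unary NFA in $\NL$ by combining a trivial small-instance case with Sawa's structural result, Lemma~\ref{lem-Sawa}. Fix a unary NFA $\A = (Q,\{a\},q_0,F,\Delta)$ with $m = |Q|$, and a number~$n$ given in binary; since $\A$ is part of the input, $\log m$ is logarithmic in the input size. If $m < 2$ the language $L(\A)$ is one of a handful of trivially describable sets and membership of $a^n$ can be checked directly, so assume $m \ge 2$. Since $m^2$ has $O(\log m)$ bits, we can compare it with the binary number~$n$ in logspace. If $n < m^2$, then $n$ is polynomially bounded, and a nondeterministic machine can guess a run of~$\A$ of length exactly~$n$ one transition at a time, keeping only the current state and a step counter bounded by~$n$; it accepts if the counter reaches~$n$ in a final state. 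This is an $\NL$ computation (it is reachability in the layering of~$\A$ up to level~$n$).

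If $n \ge m^2$, we invoke Lemma~\ref{lem-Sawa}: $a^n \in L(\A)$ iff there exist $q \in Q$, $q_f \in F$, $b \in \{1,\dots,m\}$ and $a \in \{m^2-b-1,\dots,m^2-2\}$ with $n \in a + b\N$ and $q_0 \xrightarrow{m-1} q \xrightarrow{b} q \xrightarrow{a-(m-1)} q_f$. The machine nondeterministically guesses $q$, $q_f$, $b$ and~$a$, each of which fits in $O(\log m)$ bits. It then checks deterministically in logspace that $n \ge a$ and that $n \equiv a \pmod b$; the congruence is verified by computing $n \bmod b$ with the schoolbook long-division algorithm, which streams over the bits of~$n$ while storing only the running remainder, a number below $b \le m$. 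Finally it verifies the three exact-length connectivity conditions: since $m-1$, $b$ and $a-(m-1)$ are all $O(m^2)$, each is a reachability query in a layered graph of polynomial size and hence decidable in $\NL$, and running the three queries sequentially (reusing work space) stays in $\NL$. Together the two cases give an $\NL$ algorithm for the compressed word problem for unary NFA.

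For the statement about \PIC, recall that over a unary alphabet $N(\A,n) \in \{0,1\}$ for every NFA~$\A$ and every~$n$, so $N(\A,n) > N(\B,n)$ holds iff $a^n \in L(\A)$ and $a^n \notin L(\B)$. The first conjunct is an instance of the compressed word problem and hence in $\NL$; the second is the complement of such an instance and hence also in $\NL$, because $\NL$ is closed under complement (Immerman--Szelepcs\'enyi). The conjunction of two $\NL$ predicates is again in $\NL$, so \PIC\ lies in $\NL$ for unary NFA with Parikh vectors encoded in binary, whether the length is given in unary or in binary.

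The only point that needs a little care --- and it is hardly an obstacle once Lemma~\ref{lem-Sawa} is available --- is keeping the arithmetic on the (possibly exponentially large) binary number~$n$ within logspace. This works precisely because the only modular reduction required is modulo the small number $b \le m$, for which long division keeps only a logarithmically-sized remainder, and because every exact-length reachability query arising from the lemma has a length parameter that is polynomially bounded.
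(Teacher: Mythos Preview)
Your proof is correct and follows essentially the same approach as the paper: both case-split on whether $n < m^2$ (handled by bounded-length reachability in a layered graph) or $n \ge m^2$ (handled by Lemma~\ref{lem-Sawa}, guessing or enumerating the parameters $q,q_f,a,b$ in logspace, checking $n \equiv a \pmod b$ via long division, and verifying the three polynomially-bounded exact-length reachability conditions in $\NL$), and both derive the \PIC\ consequence from closure of $\NL$ under complement. The only cosmetic difference is that the paper enumerates the parameters deterministically while you guess them nondeterministically, which is immaterial here.
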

  \begin{proof}
    Let $\A=(Q,\{a\},q_0,F,\Delta)$ be the given unary NFA, and let
    $n \in \N$ be given in binary.  We claim that, given two states
    $p_1, p_2 \in Q$ and a number $c \in \N$ whose binary
    representation is of size logarithmic in the input size, we can
    check in~\NL\ whether $p_1 \xrightarrow{c} p_2$ holds.  To prove
    the claim, consider the directed graph~$G$ over vertices $Q \times
    \{0, \ldots, c\}$, with an edge from $(q_1, i)$ to $(q_2, j)$ if
    and only if $q_1 \xrightarrow{1} q_2$ and $j = i+1$.  The
    graph~$G$ can be computed by a logspace transducer.  Then $p_1
    \xrightarrow{c} p_2$ holds if and only if $(p_2, c)$ is reachable
    from $(p_1,0)$ in~$G$.  The claim follows as graph reachability is
    in~\NL.

    Now we give an \NL\ algorithm for the compressed word problem.  If
    $n < m^2$ then guess $q_f \in F$ and check, using the claim above,
    in~\NL\ whether $q_0 \xrightarrow{n} q_f$.  If $n \ge m^2$ we use
    Lemma~\ref{lem-Sawa} as follows. We run over all $q \in Q$, $q_f
    \in F$, $b \in \{1, \ldots, m\}$, and $a \in \{m^2 - b - 1,
    \ldots, m^2 - 2\}$ (all four values can be stored in logspace),
    and check (i) whether $n \in a + b \N$ and (ii) $q_0
    \xrightarrow{m-1} q \xrightarrow{b} q \xrightarrow{a - (m-1)} q_f$
    holds.  Condition (i) can be checked in logspace (as in the proof
    of Proposition~\ref{prop:dfa-unary-upper}), and condition (ii) can
    be checked in ~\NL\ by the above claim.
  \end{proof}

  It follows that \PIC\ is in~\NL\ for unary NFA with Parikh vectors
  encoded in binary: Given NFA $\A,\B$ and $n \in \N$ in binary, we
  have $N(\A,n) > N(\B,n)$ (where we identity the mapping $\vec{p} : \{a\} \to \N$
  with the single number $\vec{p}(a)$) if and only if $N(\A,n) = 1$ and $N(\B,n) =
  0$, which holds if and only if $a^n \in L(\A)$ and $a^n \not\in L(\B)$.
  Since $\NL$ is closed under complement, the latter condition can be
  checked in $\NL$.
\end{proof}

\begin{proof}[Proof of Part~(iii)]
  The \P-upper bound is clear since the word problem for CFG is in \P.
  Regarding the \DP-upper bound, Huynh~\cite{Huy84} shows that the
  uniform word problem for context-free grammars over a singleton
  alphabet $\{a\}$ is \NP-complete; where the input word $a^n$
  is given by the binary representation of $n$. Given CFG $\A,\B$ over
  $\{a\}$ and a number $n$, we have that
  $N(\A,n) > N(\B,n)$ if and only if $a^n\in L(\A)$ and
  $a^n\not \in L(\B)$. The latter is a decision problem in \DP\ by the
  above result from \cite{Huy84}.
  
  Let us now turn to the lower bounds. For the \P-lower bound, we reduce
  from the $\epsilon$-membership problem for context-free grammars,
  which is known to be \P-hard even for grammars with no terminal
  symbol~\cite[Prob.\ A.7.2]{GHR95}. Let $\A$ be such a grammar, and
  let $\B$ be such that $L(\B)=\emptyset$. Then $N(\A,0) > N(\B,0)$ if
  and only if $\epsilon \in L(\A)$. For the \DP-lower bound, the
  following problem is \DP-complete: Given two instances $(s, v_1,
  \ldots, v_m)$, $(t, w_1, \ldots, w_n)$ of \textsc{SubSetsum} (all
  numbers $s, v_1, \ldots, v_m,t, w_1, \ldots, w_n$ are binary
  encoded), does the following hold?
  \begin{itemize}
  \item There exist $a_1, \ldots, a_m \in \{0,1\}$ with $s = a_1 v_1 + \cdots + a_m v_m$.
  \item For all $b_1, \ldots, b_n \in \{0,1\}$, $t \neq b_1 w_1 +
    \cdots + b_n w_n$.
  \end{itemize}
  $\DP$-hardness of this problem follows by a reduction from the
  problem \textsc{SAT-UNSAT} (one can take the standard reduction from
  \textsc{SAT} to \textsc{SubsetSum}). Let us assume that $s \geq t$
  (if $t > s$ we can argue similarly).  We then construct in logspace
  CFG $\A$ and $\B$ such that $\A$ produces all words of the form
  $a^{a_1 v_1 + \cdots + a_m v_m}$, where $a_1, \ldots, a_m \in
  \{0,1\}$ and $\B$ produces all words of the form $a^{(s-t) + b_1 w_1
    +\cdots+ b_n w_n}$, where $b_1, \ldots, b_m \in \{0,1\}$. We then
  have $a^s \in L(\A) \setminus L(\B)$ if and only if there are $a_1,
  \ldots, a_m \in \{0,1\}$ with $s = a_1 v_1 + \cdots + a_m v_m$ but
  $t \neq b_1 w_1 + \cdots + b_n w_n$ for all $b_1, \ldots, b_n \in
  \{0,1\}$.
\end{proof}

}{}

\end{document}